\newtheorem{theorem}{Theorem}[section]
\newtheorem{proposition}[theorem]{Proposition}
\newtheorem{lemma}[theorem]{Lemma}
\theoremstyle{definition}
\newtheorem{example}[theorem]{Example}
\newtheorem{remark}[theorem]{Remark}
\newcommand{\id}{\textit{id}}
\newcommand{\Pow}{\mathcal{P}}
\newcommand{\ad}{\textit{adom}}
\newcommand{\dom}{\textit{dom}}
\newcommand{\cod}{\textit{cod}}
\newcommand{\op}{\textit{op}}
\newcommand{\Dom}{\textit{Dom}}
\newcommand{\Cod}{\textit{Cod}}
\newcommand{\mcomp}{\odot}
\newcommand{\Sup}{\bigvee}
\newcommand{\Inf}{\bigwedge}
\renewcommand{\sup}{\vee}
\renewcommand{\inf}{\wedge}
\newcommand*\sscirc{\cdot} 
\newcommand{\rtof}{\mathcal{F}}
\newcommand{\ftor}{\mathcal{R}}
\definecolor{ufcolor}{rgb}{1,0.5,0.5}
\definecolor{gscolor}{cmyk}{1,0,0,0}
\definecolor{cjcolor}{cmyk}{0,1,0,0}
\definecolor{kzcolor}{cmyk}{0,0,1,0}
\begin{document}

\title{$\ell r$-Multisemigroups and Modal Convolution Algebras}

\author[1]{Uli Fahrenberg}
\author[2]{Christian Johansen} 
\author[3]{Georg Struth} 
\author[4]{Krzysztof Ziemia\'nski}
\affil[1]{{\'E}cole Polytechnique, Palaiseau, France}
\affil[2]{Norwegian University of Science and Technology, Norway} 
\affil[3]{University of Sheffield, UK }
\affil[4]{University of Warsaw, Poland}

\maketitle

\begin{abstract}
  We show how modal quantales arise as convolution algebras $Q^X$ of
  functions from $\ell r$-multisemi\-groups $X$, that is,
  multisemigroups with a source map $\ell$ and a target map $r$, into
  modal quantales $Q$, which can be seen as weight or value
  algebras. In the tradition of boolean algebras with operators we
  study modal correspondences between algebraic laws in $X$, $Q$ and
  $Q^X$.  The class of $\ell r$-multisemigroups introduced in this
  article generalises Schweizer and Sklar's function systems and
  object-free categories to a setting isomorphic to algebras of
  ternary relations as used in boolean algebras with operators and in
  substructural logics. Our results provide a generic construction
  recipe for weighted modal quantales from such multisemigroups. This
  is illustrated by many examples, ranging from modal algebras of
  weighted relations as used in fuzzy mathematics, category quantales
  in the tradition of category algebras or group rings, incidence
  algebras over partial orders, discrete and continuous weighted path
  algebras, weighted languages of pomsets with interfaces, and
  weighted languages associated with presimplicial and precubical
  sets. We also discuss how these results can be combined with
  previous ones for concurrent quantales and generalised to a setting
  that supports reasoning with stochastic matrices or probabilistic
  predicate transformers.

  \vspace{\baselineskip}

  \textbf{Keywords:} multisemigroups, quantales, convolution algebras,
  (object-free) categories, modal algebras, quantitative program
  verification.
\end{abstract}



\section{Introduction}\label{S:introduction}

Convolution is an ubiquitous operation in mathematics and
computing. Let, for instance, $(\Sigma^\ast,\odot,\varepsilon)$ be the
free monoid over the finite alphabet $\Sigma$ and $(S,+,\cdot,0,1)$ a
semiring. The convolution of two functions $f,g:\Sigma^\ast \to S$ is
then defined as
\begin{equation*}
  (f\ast g) (x) = \sum_{x=y\odot z}f(y)\cdot g(z),
\end{equation*}
where $\sum$ represents finitary addition in $S$.  The functions $f$
and $g$ associate a value or weight in $S$ with any word in
$\Sigma^\ast$.  The weight of the convolution $f\ast g$ and word $x$
is thus computed by splitting $x$ into all possible words $y$ and $z$
such that $x=y\odot z$, multiplying their weights $f(y)$ and $g(z)$ in
$S$, and then adding up the results for all appropriate $y$ and $z$.
A finite addition suffices in this example because only finitely many
$y$ and $z$ satisfy $x=y\odot z$ for any $x$.  If $S=\{0,1\}=2$ is the
semiring of booleans (with $\max$ as $+$ and $\min$ as $\cdot$),
$f:\Sigma^\ast\to 2$ becomes a characteristic function for a set;
$f(x)$ can be read as $x\in f$ and convolution is language product.
The generalisation to arbitrary semirings thus yields weighted
languages~\cite{book/DrosteKV09}, and it can be shown that convolution
algebras $S^{\Sigma^\ast}$ of weighted languages form again semirings
with convolution as multiplication.

This data can be changed in various ways. We are interested in the
case where $\Sigma^\ast$ is generalised to a multisemigroup
$(X,\odot)$ with multioperation $\odot:X\times X \to \Pow X$
satisfying a suitable associativity law~\cite{KudryavtsevaM15}, or to
a multimonoid that can have many units. Multisemigroups specialise to
partial semigroups when $|x\odot y| \le 1$ for all $x,y\in X$ and to
semigroups when $|x\odot y|=1$. We also replace $S$ by a quantale $Q$,
a complete lattice with an additional monoidal composition and unit
satisfying certain sup-preservation laws.  The existence of arbitrary
sups in $Q$ then compensates for the lack of finite decompositions in
$X$, which were previously available in $\Sigma^\ast$.  For functions
$f,g:X\to Q$, convolution now becomes
\begin{equation*}
  (f\ast g)(x) = \Sup_{x\in y\odot z}f(y)\cdot g(z),
\end{equation*}
where $\Sup$ indicates a supremum and $\cdot$ the multiplication in
$Q$. This time, the convolution algebra $Q^X$ can be equipped with a
quantale structure~\cite{DongolHS21}. For the quantale $Q=2$ of
booleans it becomes a powerset quantale. Prime examples for this
construction are the convolution quantales of binary relations and of
quantale-valued binary relations~\cite{Goguen67}, as well as
boolean-valued and quantale-valued matrices. These lift from instances
of multimonoids known as pair groupoids.

Yet the relationship between multimonoids, quantales and convolution
quantales is not just a lifting. Multimonoids are isomorphic to
relational monoids formed by ternary relations with suitable monoidal
laws, where $R(x,y,z)$ holds if and only if $x\in y\odot z$. The
construction of convolution algebras can thus be seen in light of
J\'onsson and Tarski's duality between boolean algebras with $n$-ary
operators and $n+1$-ary relations~\cite{JonssonT51,Goldblatt89}, and
in particular of the modal correspondences between these algebras and
relations.  Convolution can be seen as a generalised binary modality
on $Q^X$ and $X$ as a ternary frame, so we may ask about
correspondences between equations in the algebras $X$ and $Q$ and
$Q^X$.

Such correspondences between monoidal properties in $X$ and quantalic
properties in $Q$ and $Q^X$ have already been studied and adapted to
situations where $Q$ is merely a semiring~\cite{CranchDS20a}, for
$Q=2$, they are well known from substructural logics. They have
also been extended to correspondences between concurrent relational
monoids and the concurrent semirings and quantales previously
introduced by Hoare and coworkers~\cite{HoareMSW11}, where two
quantalic compositions are present, which interact via a weak
interchange law, as known from higher categories.

Here, our main motivation lies in the study of correspondences between
the source and target structures or the domain and codomain structures
that are present in multimonoids and many convolution quantales,
including powerset quantales. On the one hand, every element of a
multimonoid has a fortiori a unique left and a unique right unit,
which can be captured using source and target maps like in a
category~\cite{CranchDS20}, and every small category can of course be
modelled in object-free style as a partial semigroup equipped with
such maps~\cite{MacLane98}. On the other hand, quantales of binary
relations, for instance, have a non-trivial domain and codomain
structure, which has led to more abstract definitions of modal
semirings and
quantales~\cite{DesharnaisMS06,DesharnaisS11,FahrenbergJSZ20}. Yet
what is the precise correspondence between the source and target
structure of the pair groupoid, say, and these relational domain and
codomain operations? And how does this correspondence generalise to
source and target maps in arbitrary multisemigroups and domain and
codomain operations in arbitrary value and convolution quantales?

As a first step towards answers, we introduce
$\ell r$-multisemigroups: multisemigroups $(X,\odot)$ equipped with
two operations $\ell,r:X\to X$ inspired by similar ones that appear in
Schweizer and Sklar's function systems~\cite{SchweizerS67} and by the
source and target maps of object-free categories~\cite{MacLane98}. We
study the basic algebra of $\ell r$-multisemigroups and present a
series of examples, including categories and non-categories.  Most of
these results have been formalised using the Isabelle/HOL proof
assistant.\footnote{
  \url{https://github.com/gstruth/lr-multisemigroups}} It turns out,
in particular, that the locality laws
$\ell(x\odot \ell(y)) = \ell(x\odot y)$ and
$r(r(x)\odot y) = r(x\odot y)$ of $\ell r$-multisemigroups, which have
previously been studied in the context of modal semigroups, semirings
and
quantales~\cite{DesharnaisJS09,DesharnaisMS06,DesharnaisS11,FahrenbergJSZ20},
are equivalent to the typical composition pattern of categories,
namely that $x\mcomp y$ is defined (and hence not $\emptyset$) if and
only if $r(x)= \ell(y)$ (for categories, the order of composition
needs to be swapped).  Indeed, partial $\ell r$-semigroups that
satisfy locality \emph{are} (small) categories.

As a second step, we generalise the standard definitions of domain and
codomain used in modal semirings and quantales to ones suitable for
convolution quantales. When relations are represented at $2$-valued
matrices, for instance, domain elements in $2^X$ are diagonal matrices
in which the value of each diagonal element is $1$ if there is a $1$
in the respective row of the matrix, and $0$ otherwise. More
generally, for $Q$-valued relations represented as $Q$-valued
matrices, a domain element in $Q^X$ can be seen as a diagonal matrix
in which the value of each diagonal element is the domain of the
supremum of the respective row values of the matrix, taken in $Q$.

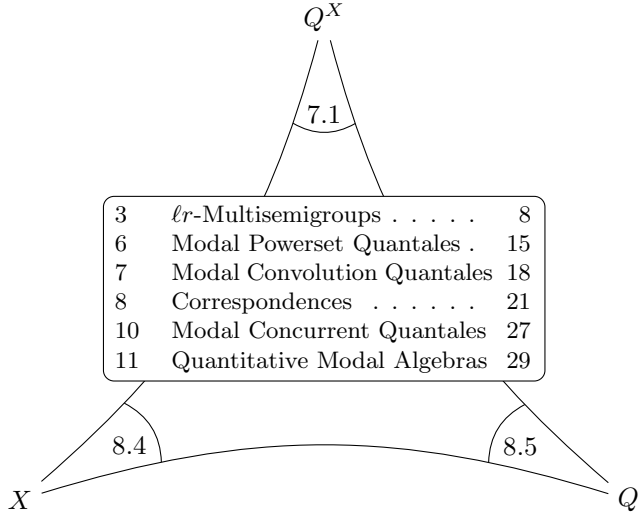
\begin{figure}[tp]
  \centering
  \begin{tikzpicture}[x=8cm, y=8cm]
    \node (X) at (0,0) {$X$};
    \node (Q) at (1,0) {$Q$};
    \node (R) at (.5,.8) {$Q^X$};
    \path (X) edge[bend left=4ex] node[coordinate, pos=.2] (QX) {} node[coordinate, pos=.8] (XQ) {} (Q);
    \path (X) edge[bend right=4ex] node[coordinate, pos=.2] (RX) {} node[coordinate, pos=.85] (XR) {} (R);
    \path (Q) edge[bend left=4ex] node[coordinate, pos=.2] (RQ) {} node[coordinate, pos=.85] (QR) {} (R);
    \path (XR) edge[bend right] node[above] {\ref{P:lr-lifting}}(QR);
    \path (QX) edge[bend right] node[below] {\ref{P:lr-correspondence2-thm}\;\;\;\;\;\;} (RX);
    \path (XQ) edge[bend left] node[below] {\;\;\;\;\;\;\ref{T:correspondence-X+QX->Q}} (RQ);
    \node[rectangle, draw, rounded corners, fill=white] at (.5,.35) {%
      \begin{minipage}{.35\linewidth}
        \hspace*{-3ex}%
        \begin{minipage}{1.07\linewidth}
        \raggedright\small
        \makeatletter\let\l@section\l@subsection\makeatother
\contentsline {section}{\numberline {3}$\ell r$-Multisemigroups}{8}{section.3}%
\contentsline {section}{\numberline {6}Modal Powerset Quantales}{15}{section.6}%
\contentsline {section}{\numberline {7}Modal Convolution Quantales}{18}{section.7}%
\contentsline {section}{\numberline {8}Correspondences}{21}{section.8}%
\contentsline {section}{\numberline {10}Modal Concurrent Quantales}{27}{section.10}%
\contentsline {section}{\numberline {11}Quantitative Modal Algebras}{29}{section.11}%
        \end{minipage}
      \end{minipage}
    };
  \end{tikzpicture}
  \caption{Triangle of correspondences between identities in families
    of $\ell r$-multisemigroups $X$, quantales $Q$, and convolution
    algebras $Q^X$. The box contains a miniature table of content
    pointing to the sections with the main contributions.}
  \label{fig:correspondence-triangle}
\end{figure}

Equipped with $\ell r$-multisemigroups and the generalised domain and
codomain operations for $Q^X$, we prove the main results in this
article: correspondences between equational properties of source and
target operations in $\ell r$-multisemigroups and those of domain and
codomain operations in the modal quantales $Q$ and $Q^X$.  The
resulting triangle, with links to the main theorems that capture them,
is shown in Figure~\ref{fig:correspondence-triangle}. We develop these
in full generality, but keep an eye on the $Q=2$ case which has been
formalised with Isabelle.  From the point of view of boolean algebras
with operators, this yields a multimodal setting where the quantalic
composition in $Q^X$ is a generalised binary modality associated with
a ternary frame in $X$, whereas the domain and codomain operations in
$Q^X$ are generalised unary modalities that can be associated with
binary frames in $X$ based on $\ell$ and $r$.

These results show how the equational axioms for modal semirings and
quantales, as powerset or proper convolution algebras, arise from the
much simpler ones of $\ell r$-multisemigroups and how in particular
the composition pattern of categories translates into the locality
laws of modal convolution algebras. More generally, the lifting from
$X$ and $Q$ to $Q^X$ yields a generic construction recipe for modal
quantales. It shows that any category, for instance, can be lifted in
such a way, but we present other examples such as generalised effect
or shuffle algebras, where locality laws are absent or which are
proper multisemigroups.  Constructing variants of modal quantales is
interesting for program verification because these algebras are
related to dynamic logics, Hoare logics and algebras of predicate
transformers. Conversely, by these correspondences, before
constructing a modal quantale one should ask what the underlying
$\ell r$-multisemigroup might be. The benefit is that the
$\ell r$-multisemigroup laws are much easier to check than those of
modal quantales and in particular proof assistants can benefit from
our generic construction.

The relevance and potential of the construction is underpinned by many
examples. Most of the $\ell r$-multisemigroups presented are
categories, for instance categories of paths in quivers or Moore paths
in topology, pair groupoids, segments and intervals of posets or
pomsets, simplices and cubical sets with interfaces, but some others
are neither local nor partial. For all of them we get modal
convolution quantales for free, and even modal convolution semirings,
if the underlying $\ell r$-multisemigroup is finitely decomposable
like $\Sigma^\ast$ above. This is in particular the case for Rota's
incidence algebras constructed over locally finite
posets~\cite{Rota64}. Well known constructions from algebra, such as
matrix rings, group rings, or category algebras, are closely related,
yet typically use rings instead of semirings or quantales as
value-algebras.

Beyond the results mentioned, we sketch a combination of the
correspondence results for concurrent quantales and those for modal
quantales, with obvious relevance to true concurrency semantics in
computing, for instance based on pomsets or digraphs with
interfaces~\cite{Winkowski77,FioreD13,FahrenbergJST20}.  Details are
left to a successor article. We also discuss the relationship between
the domain and codomain operations in modal convolution quantales, and
the diamond and box operators that can be defined using them. This
opens the door to convolution algebras of quantitative predicate
transformer semantics for programs, including probabilistic or fuzzy
ones, just like in the qualitative case, using for instance the
Lawvere quantale or related quantales over the unit interval as
instances of $Q$.  Yet the definitions of domain and codomain in $Q^X$
mentioned, despite being natural for correspondence results, turn out
to be too strict for dealing with stochastic matrices or Markov
chains. We outline how more liberal definitions of domain and codomain
in $Q^X$ can be used instead, yet leave details once again to future
work.

The remainder of this article is organised as
follows. Section~\ref{S:relational-semigroups} outlines the basics of
multisemigroups and multimonoids with many units, relates them with
object-free categories and lists some
examples. Section~\ref{S:lr-semigroups} introduces the first
contribution of this article: $\ell r$-multisemigroups, relates them
with Schweizer and Sklar's function systems~\cite{SchweizerS67}, with
domain semigroups~\cite{DesharnaisJS09}, with multimonoids and an
alternative axiomatisation of object-free categories using source and
target maps. Once again, examples are
listed. Section~\ref{S:quantales} recalls notions of quantales and the
construction of convolution quantales from relational and
multisemigroups~\cite{DongolHS21}.  Section~\ref{S:modal-quantales}
recalls notions of modal quantales~\cite{FahrenbergJSZ20} and
discussed their structure. As a warm-up for more general
constructions, Section~\ref{S:lr-pow} shows how modal powerset
quantales can be lifted from $\ell r$-multisemigroups; proofs can be
found in Appendix~\ref{A:three}. Section~\ref{S:conv-alg} then
presents the first main result of this article: the construction of
modal convolution quantales $Q^X$ from $\ell r$-semigroups $X$ and
modal value quantales $Q$, using appropriate definitions of domain and
codomain on convolution algebras. It also discusses finitely
decomposable $\ell r$-multisemigroups, where value quantales can be
replaced by value semirings and similar structures.
Section~\ref{S:correspondences} contains the remaining main results,
which complete the triangle of correspondences shown in
Figure~\ref{fig:correspondence-triangle}.  A large list of examples is
discussed in Section~\ref{S:examples}. These range from weighted path,
interval and pomset languages over $\ell r$-semigroups that arise in
algebraic topology to the weighted assertion quantales of separation
logic. A combination of the results for the algebras in this article
with previous ones suitable for concurrency~\cite{CranchDS20a} is
outlined in Section~\ref{S:modal-chantales}. The final technical
section, Section~\ref{S:dualities}, discusses the modal box and
diamond operators that arise in convolution quantales and some
selected models, before Section~\ref{S:conclusion} concludes the main
part of the article. Finally, as many algebraic definitions are
scattered across this article, we summarise the most important ones in
Appendix~\ref{A:one}. Appendix~\ref{A:two} explains the equivalence of
multisemigroups and relational semigroups, on which some proofs in
this article are based. Finally, Appendix~\ref{A:three} shows the main
proofs of Section~\ref{S:lr-pow}, as already mentioned.


\section{Multisemigroups}\label{S:relational-semigroups}

Categories can be axiomatised in object-free style~\cite{MacLane98},
essentially as partial monoids with many units that satisfy a certain
locality condition.  We have previously studied their correspondence
to relational monoids~\cite{CranchDS20}, which are sets $X$ equipped
with ternary relations $R\subseteq X\times X\times X$ with many
relational units that satisfy certain relational associativity and
unit laws.  Here we use an isomorphic representation of
multioperations of type $X\times X\to \Pow X$ (the isomorphism being
that between the category $\mathbf{Rel}$ and the Kleisli category of
the powerset monad).  A multioperation on $X$, like the corresponding
ternary relation on $X$, relates pairs in $X\times X$ with a set of
elements of $X$, including the empty set. This encompasses partial and
total operations, where each such pair is related to at most one
element and to precisely one element, respectively.  Multioperations
have a long history in mathematics, see~\cite{KudryavtsevaM15} for a
discussion, references and additional material.

Dealing with categories that are not small requires classes instead of
sets in some of our examples. In the tradition of object-free
categories, we then tacitly extend the following definitions to
classes, yet do not formally distinguish between sets and
classes---the simple constructions in this article do not lead to
paradoxes.

So let $\mcomp:X\times X\to \Pow X$ be a multioperation on set $X$. We
write $D_{xy}$ in place of $x\mcomp y \neq \emptyset$ to indicate that
the composition $x\mcomp y$ is non-empty, which intuitively means
defined, and extend $\mcomp$ to an operation
$\mcomp:\Pow X\times \Pow X\to \Pow X$ defined, for all
$A,B\subseteq X$, by
\begin{equation*}
    A\mcomp B =\bigcup\{x\odot y\mid x\in A \text{ and }  y\in B\}.
\end{equation*}
We write $x\mcomp B$ instead of $\{x\}\mcomp B$, $A\mcomp x$ instead
of $A\mcomp\{x\}$ and $f(A)$ for $\{f(a)\mid a\in A\}$. Note that
$A\mcomp \emptyset=\emptyset\mcomp B=\emptyset$.

A \emph{multimagma} $(X,\mcomp)$ is simply a non-empty set $X$ with a
multioperation $\mcomp:X\times X\to \Pow X$.
\begin{itemize}
\item The multioperation $\mcomp$ is \emph{associative} if
  $x\mcomp (y\mcomp z) = (x\mcomp y)\mcomp z$.
\item It is \emph{local} if
  $u\in x\mcomp y\land D_{yz} \Rightarrow D_{uz}$ for all
  $u,x,y,z\in X$.
\item It is a \emph{partial operation} if $|x\mcomp y| \le 1$ and a
  \emph{(total) operation} if $|x\mcomp y|=1$, for all $x,y\in X$.
\item An element $e\in X$ is a \emph{left unit} in $X$ if
  $\exists x.\ x \in e\mcomp x$ and
  $\forall x,y.\ x\in e\mcomp y \Rightarrow x=y$; it is a \emph{right
    unit} in $X$ if $\exists x.\ x \in x\mcomp e$ and
  $\forall x,y.\ x\in y\mcomp e\Rightarrow x=y$. We write $E$ for the
  set of all (left or right) units in $X$. 
\end{itemize}

\begin{remark}
  Intuitively, locality states that if $x$ and $y$ as well as $y$ and
  $z$ can be composed, then every element in the composition of $x$
  and $y$ can be composed with $z$. For partial operations this
  reduces to the composition pattern of categories, binary relations,
  paths in digraphs and many other examples, as explained below.
  Locality has previously been called
  \emph{coherence}~\cite{CranchDS20}. It will become clear below why
  we are now adopting a different name.
\end{remark}

A multimagma $X$ is \emph{unital} if for every $x\in X$ there exist
$e,e'\in E$ such that $D_{ex}$ and $D_{xe'}$.  This definition of
units follows that for object-free categories. Equivalently, we may
require that there exists a set $E\subseteq X$ such that, for all
$x\in X$,
\begin{equation*}
  E\mcomp x = \{x\}\qquad\text { and } \qquad
  x\mcomp E = \{x\}.
\end{equation*}

A \emph{multisemigroup} is an associative multimagma; a
\emph{multimonoid} a unital multisemigroup.

These definitions imply that a multisemigroup $(X,\mcomp)$ is a
\emph{partial semigroup} if $\mcomp$ is a partial operation and a
\emph{semigroup} if $\mcomp$ is a total operation---and likewise for
partial monoids and monoids.

Spelling out associativity yields
\begin{equation*}
x\mcomp (y\mcomp z) =\bigcup \{x \mcomp v\mid v \in y \mcomp z\} =
\bigcup \{u \mcomp z\mid u \in x \mcomp y\}=(x\mcomp y)\mcomp z.
\end{equation*}

Multimagmas and relatives form categories in several ways. A
\emph{multimagma morphism} $f:X\to Y$ satisfies
\begin{equation*}
f(x\mcomp_X y) \subseteq f(x) \mcomp_Yf(y).
\end{equation*}
The morphism is \emph{bounded} if, in addition,
\begin{equation*}
f(x)\in u \mcomp_Y v \Rightarrow \exists y,z.\ x \in y\mcomp_X z
\land u=f(y)\land v=f(z).
\end{equation*}

Obviously, $f$ is a multimagma morphism if and only if
$x\in y \mcomp_X z\Rightarrow f(x) \in f(y) \mcomp_Y f(z)$.  This is a
natural generalisation of
$x=y\mcomp_X z \Rightarrow f(x) = f(y)\mcomp_Y f(z)$, and hence of
$f(x\mcomp_X y) = f(x)\mcomp_Y f(y)$, for total operations. For
partial operations, it implies that the right-hand side of the
inclusion must be defined whenever the left-hand side is, and, in the
bounded case, that the left-hand side is defined whenever the
right-hand side is.

A morphism $f:X\to Y$ of unital multimagmas needs to preserve units as
well: $f(e) \in E_Y$ holds for all $e\in E_X$, and $e\in E_Y$ implies
that there is an $e'\in E_X$ such that $f(e')= e$ holds if $f$ is
bounded. Morphisms of object-free categories are functors.

More generally, bounded morphisms are standard in modal logic as
functional bisimulations. The isomorphisms between categories of
relational semigroups and monoids and those of multisemigroups and
multimonoids are explained in Appendix~\ref{A:two}.

In every multimagma, every unit $e$ satisfies $e\mcomp e=\{e\}$ and
$D_{ee}$. If $e,e'\in E$, then $D_{ee'}$ if and only if $e=e'$, for if
$D_{ee'}$ holds, then $e\mcomp e'=\{x\}$ for some $x\in X$ and hence
$e=x=e'$ by the (implicational) unit axioms.  Units are therefore
idempotents which are, in a sense, orthogonal.

In every multimonoid, each element has precisely one left and one
right unit: if $e,e'\in E$ both satisfy $e\mcomp x = \{x\}=e'\mcomp x$
for some $x\in X$, then
$\emptyset \neq e\mcomp x = e\mcomp (e'\mcomp x) = (e\mcomp e')\mcomp
x$, which is only the case when $e=e'$, as explained above (the
argument for right units is similar). This
functional correspondence allows defining source and target maps
$\ell, r:X\to X$ such that $\ell(x) $ denotes the unique left unit and
$r(x)$ the unique right unit of $x$.  Then $D_{xy}$ implies
$r(x) =\ell(y)$ and the converse implication is equivalent to
locality. These properties have been proved for relational
structures~\cite{CranchDS20}, but hold in the corresponding
multialgebras via the isomorphism.

\begin{example}[Multimonoids]\label{ex:multimon}~
\begin{enumerate}
\item The object-free categories in Chapter I.1 of Mac Lane's
  book~\cite{MacLane98} and the local partial monoids are the same
  class, as already mentioned; the category of such object-free
  categories and that of local partial monoids (with both types of
  morphisms mentioned) are isomorphic.  Local partial monoids
  \emph{are} small categories, and categories if the partial monoid is
  built on a class.
\item \label{ex:monoid} Every monoid $(X,\cdot,1)$ is a one-element
  category and therefore local. The diagram
\begin{equation*}
\begin{tikzcd}
  1\arrow[out=120,in=60,loop, "a"]
\end{tikzcd}
\end{equation*}
for instance, corresponds to a partial monoid $X=\{1,a\}$ with
multiplication defined by $11=\{1\}$ and $1a=\{a\}=a1$; in addition we
can impose $aa=\{a\}$. Locality is trivial: composition is total and
$\ell(x)=1=r(x)$ for all $x\in X$.
\item \label{ex:pair-groupoid} The \emph{pair groupoid} (or
  \emph{codiscrete groupoid}) $(X\times X,\odot,\mathit{Id}_X)$ over
  any set $X$ is formed by the set of ordered pairs over $X$ with
\begin{equation*}
(w,x)\mcomp (y,z)=
\begin{cases}
  \{(w,z)\}& \text{ if } x=y,\\
\emptyset &\text{ otherwise}
\end{cases}
\qquad\quad \text{ and }\qquad E=\mathit{Id}_X,
\end{equation*}
where $\mathit{Id}_X$ is the identity relation on $X$. It is a local
partial monoid, and hence a category---in fact even a groupoid as the
name indicates. Source and target maps are given by
$\ell((x,y))= (x,x)$ and $r((x,y)) = (y,y)$. This category is
equivalent to a trivial category. $X\times X$ is nothing but the
universal relation on $X$. 

\item\label{ex:poset} The pair groupoid example generalises from
  universal relations to smaller relations $R\subseteq X\times X$.  If
  $R$ is transitive, then $\mcomp$ is well-defined and $(R,\odot)$ a
  local partial semigroup.  If $R$ is also reflexive, and hence a
  preorder, then $(R,\odot,\mathit{Id}_X)$ is a partial monoid; it is
  a groupoid if and only if $R$ is symmetric.  Finally, if $R$ is a
  partial order, we obtain the poset $(X,R)$ regarded as a poset
  category.

\item The \emph{shuffle multimonoid}
  $(\Sigma^\ast,\parallel,\{\varepsilon\})$, where $\Sigma^\ast$ is
  the free monoid over the (finite) alphabet $\Sigma$, $\varepsilon$
  is the empty word and the multioperation
  ${\parallel}:\Sigma^\ast \times\Sigma^\ast \to \Pow\Sigma^\ast$ is
  defined, for $a,b\in \Sigma$, $v,w\in \Sigma^\ast$, by
\begin{equation*}
  v\parallel \varepsilon = \{v\}=\varepsilon \parallel v\qquad\text{
    and } \qquad(av)\parallel (bw) = a (v\parallel (bw))\cup
  b ((av)\parallel w),
\end{equation*}
is not a category: $\parallel$ is a proper multioperation.  Locality
is trivial because $v\parallel w \neq \emptyset$ and
$\ell(v)=\varepsilon = r(v)$ for all $v,w\in\Sigma^\ast$.

\item \label{ex:broken-mon} The monoid in (\ref{ex:monoid}) becomes
  partial and non-local when we break composition and impose
  $aa=\emptyset$, because $\ell (a) = r (a)$ still holds.  Instead of
  a one-element category, it is now a plain digraph.

\item \label{ex:heaplets} The \emph{partial abelian monoid of
    heaplets} $(H,\mcomp,\varepsilon)$ used in separation logic is
  formed by the set $H$ of partial functions between two sets. The
  partial operation $\mcomp$ is defined as
  \begin{equation*}
      f\mcomp g =
      \begin{cases}
        f\cup g & \text{ if } \dom(f)\cap \dom(g)=\emptyset,\\
        \emptyset &\text{ otherwise}.
      \end{cases}
    \end{equation*}
    Its unit is the partial function $\varepsilon$ with empty
    domain. Locality fails because $\ell(f)=\varepsilon=r(g)$ for all
    $f,g\in H$, but $f\odot g=\emptyset$ when domains of $f$ and $g$
    overlap. This algebra of heaplets is a non-local instance of a
    \emph{generalised effect algebra}, used for modelling unsharp
    measurements in the foundations of quantum mechanics: a partially
    abelian monoid with a single unit (which in addition is
    cancellative and positive)~\cite{HedlikovaP96}.
 
\item In the multimagma $(\{x,e,e'\},\mcomp,\{e,e'\})$ with composition
  defined by
  \begin{equation*}
    \begin{array}{c|ccc}
      \mcomp & e & e' & a\\
      \hline
      e & \{e\} & \emptyset & \{a\}\\
      e'& \emptyset & \{e'\} & \{a\}\\
      a & \emptyset & \{a\} &\{a\}
    \end{array}
  \end{equation*}
  the element $a$ has left units $e$ and $e'$ and right unit
  $e'$. Associativity fails because
  $(e\mcomp e')\mcomp a\neq e\mcomp (e'\mcomp a)$. This shows that
  units need not be uniquely defined in multimagmas.\qed
   \end{enumerate}
\end{example}

Additional examples can be found in Section~\ref{S:examples}.


\section{$\ell r$-Multisemigroups}\label{S:lr-semigroups}

The correspondence between units in relational monoids and
multimonoids and source/target functions motivates alternative
axiomatisations of these structures and thus of object-free categories
that generalise the function systems of Schweizer and
Sklar~\cite{SchweizerS67}.  Such categories can be found in Chapter
XII.5 of Mac Lane's book~\cite{MacLane98}. Domain axioms for ordered
semigroupoids, with a view on allegories, have already been considered
by Kahl~\cite{Kahl08}.  We define multisemigroups with such
functions. An isomorphic alternative based on relational semigroups
has been outlined in~\cite{CranchDS20}.

An \emph{$\ell r$-multimagma} is a structure $(X,\mcomp,\ell,r)$ such that
$(X,\mcomp)$ is a multimagma and the operations $\ell,r:X\to X$
satisfy, for all $x,y\in X$,
\begin{equation*}
  D_{xy} \Rightarrow r(x) =\ell(y),\qquad
  \ell(x)\mcomp x = \{x\},\qquad
  x\mcomp r(x) = \{x\}.
\end{equation*}
Recall that $D_{xy}\Leftrightarrow x\mcomp y \neq
\emptyset$.
Henceforth we often write $xy$ instead of $x\mcomp y$ and $AB$ instead
of $A\mcomp B$.

An \emph{$\ell r$-multisemigroup} is an associative
$\ell r$-multimagma.  An $\ell r$-multimagma is \emph{$\ell r$-local}
whenever $r(x) = \ell (y) \Rightarrow D_{xy}$ and therefore
$D_{xy} \Leftrightarrow r(x) = \ell (y)$.  Locality in the sense of
multimonoids and $\ell r$-locality coincide in
$\ell r$-multisemigroups; henceforth we simply speak about locality of
$\ell r$-multisemigroups.

\emph{Duality} for $\ell r$-multimagmas arises by interchanging $\ell$
and $r$ as well as the arguments of $\mcomp$.  The classes of
$\ell r$-multimagmas and semigroups, hence also
$\ell r$-multisemigroups, are closed under this transformation;
locality and partiality are self-dual. Hence the dual of any property
that holds in any of these classes, obtained by making these
replacements, holds as well. This generalises opposition in
categories.

\begin{lemma}\label{P:lr-magma-props}
  In any $\ell r$-multimagma,
  \begin{enumerate}
  \item the \emph{compatibility laws} $\ell \circ r= r$, $r\circ \ell =
    \ell$ and \emph{retraction laws} $\ell\circ \ell =
    \ell$, $r\circ r =r$ hold, 
  \item the \emph{idempotency law} $\ell(x)\ell(x) = \{\ell(x)\}$ holds,
  \item the \emph{commutativity law} $r(x)\ell(y) = \ell(y)r(x)$ holds,
  \item the \emph{export laws} $\ell(\ell(x)y) = \ell(x)\ell(y)$ and
    $r(xr(y)) = r(x)r(y)$ hold, 
\item the \emph{weak twisted laws} $\ell(xy) x \subseteq x \ell(y)$ and $x r(yx) \subseteq r(y)x$ hold.
  \end{enumerate}
\end{lemma}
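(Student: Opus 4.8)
The plan is to derive all five items from the three $\ell r$-axioms by elementary, \emph{pointwise} reasoning about the set-valued multioperation, being careful throughout that we are only in a multimagma and hence \emph{cannot} invoke associativity. I would establish the items roughly in the order (1), (2), the rest of (1), (3), (4), (5), since the later statements lean on the compatibility, retraction and idempotency laws.

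For the compatibility laws: from $\ell(x)\mcomp x=\{x\}$ we get $D_{\ell(x),x}$, so the first axiom yields $r(\ell(x))=\ell(x)$, i.e.\ $r\circ\ell=\ell$; dually $\ell\circ r=r$ from $x\mcomp r(x)=\{x\}$. For idempotency (part 2) I would instantiate the third axiom at $\ell(x)$, obtaining $\ell(x)\mcomp r(\ell(x))=\{\ell(x)\}$, and substitute $r(\ell(x))=\ell(x)$ to get $\ell(x)\mcomp\ell(x)=\{\ell(x)\}$; the $r$-version is dual. Now the remaining retraction laws of (1): from $\ell(x)\mcomp\ell(x)=\{\ell(x)\}$ we have $D_{\ell(x),\ell(x)}$, hence $r(\ell(x))=\ell(\ell(x))$, and combining with $r(\ell(x))=\ell(x)$ gives $\ell\circ\ell=\ell$; dually $r\circ r=r$.

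For the commutativity law (part 3) I would argue that $r(x)$ and $\ell(y)$, being fixed points of the relevant composites, compose on either side only when equal, in which case both products collapse to a common singleton. Concretely, if $w\in r(x)\mcomp\ell(y)$ then $D_{r(x),\ell(y)}$ forces $r(r(x))=\ell(\ell(y))$, i.e.\ $r(x)=\ell(y)$ by the retraction laws, whence $w\in r(x)\mcomp r(x)=\{r(x)\}$ by idempotency, so $w=r(x)=\ell(y)\in\ell(y)\mcomp\ell(y)=\ell(y)\mcomp r(x)$; the reverse inclusion is the mirror argument, this time using $r\circ\ell=\ell$ and $\ell\circ r=r$ to deduce $\ell(y)=r(x)$ from $D_{\ell(y),r(x)}$. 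For the export laws (part 4) the preliminary observation is the chain $D_{\ell(x),y}\Leftrightarrow\ell(x)=\ell(y)\Leftrightarrow D_{\ell(x),\ell(y)}$: each forward implication is the first axiom followed by (1), and the converses use $\ell(x)\mcomp y=\ell(y)\mcomp y=\{y\}$ and $\ell(x)\mcomp\ell(y)=\ell(x)\mcomp\ell(x)=\{\ell(x)\}$. A two-case split then finishes it: if $\ell(x)=\ell(y)$ then $\ell(\ell(x)\mcomp y)=\ell(\{y\})=\{\ell(y)\}=\{\ell(x)\}=\ell(x)\mcomp\ell(y)$; otherwise both sides are $\emptyset$; the $r$-law is dual.

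Finally, for the weak twisted laws (part 5), the one pitfall is the temptation to use associativity to conclude $\ell(x\mcomp y)=\{\ell(x)\}$ outright; instead I would argue pointwise. If $w\in\ell(x\mcomp y)\mcomp x$, choose $z\in x\mcomp y$ with $w\in\ell(z)\mcomp x$; then $D_{\ell(z),x}$ gives $\ell(z)=\ell(x)$ (via $r\circ\ell=\ell$), so $w\in\ell(x)\mcomp x=\{x\}$, i.e.\ $w=x$; meanwhile $z\in x\mcomp y$ gives $D_{xy}$ and hence $r(x)=\ell(y)$, so $x\mcomp\ell(y)=x\mcomp r(x)=\{x\}\ni w$. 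Thus $\ell(x\mcomp y)\mcomp x\subseteq\{x\}\subseteq x\mcomp\ell(y)$ (the degenerate case where the left side is empty being trivial), and the second twisted law is dual. The main obstacle is not any single deduction but the bookkeeping: keeping the set-versus-element distinction straight, never appealing to associativity, and in part (3) handling the two inclusions separately since $r(x)$ and $\ell(y)$ occupy asymmetric positions in the products.
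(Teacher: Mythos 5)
Your proof is correct. Note that the paper gives no written argument for this lemma at all --- it only records that the proofs were checked with Isabelle --- so there is no textual proof to compare against; your elementary pointwise derivation supplies exactly what the paper omits. All steps are sound: the compatibility laws follow from $D_{\ell(x),x}$ and $D_{x,r(x)}$ as you say, idempotency and the retractions follow in the order you give, the case split $\ell(x)=\ell(y)$ versus both sides empty handles commutativity and export correctly, and in the weak twisted laws you rightly avoid associativity by extracting a witness $z\in x\mcomp y$ before concluding $r(x)=\ell(y)$ and hence $x\mcomp\ell(y)=\{x\}$.
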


All proofs have been checked with Isabelle.

\begin{remark}
  We compare our axioms and derived laws with the axioms of Schweizer
  and Sklar's function systems~\cite{SchweizerS67} (see
  Appendix~\ref{A:one}) for a list).  The associativity axiom of
  multisemigroups generalises the associativity axiom of function
  systems.  The compatibility laws are their Axioms (\ref{eq:2a}), the
  \emph{absorption axioms} $\ell(x)x=\{x\}=xr(x)$ their Axioms
  (\ref{eq:2b}); the commutativity law is their Axiom (\ref{eq:3b}).
  The export laws are Axioms (\ref{eq:D3}) and its opposite (R3) of
  modal semigroups~\cite{DesharnaisJS09}.  The relationship between
  $\ell r$-algebras, function systems and modal semigroups is
  summarised in  Remark~\ref{Remark:Relationships} below.
\end{remark}

The compatibility laws imply that
$\ell (x) = x\Leftrightarrow r (x) = x$ and further that
\begin{equation*}
X_\ell = \{x\mid \ell (x) = x\} = \{x\mid r (x)=x\} = X_r.
\end{equation*}
Moreover, by the retraction laws, $X_\ell = \ell (X)$ and
$X_r = r (X)$.

Lemma~\ref{P:lr-magma-props} implies additional laws, including
\begin{itemize}
\item the commutativity laws $\ell(x)\ell(y) = \ell(y)\ell(x)$ and
  $r(x) r(y)=r(y) r(x)$,
\item the idempotency law 
  $r(x) r(x)= \{r(x)\}$,
\item the orthogonality laws
  $D_{\ell(x)\ell(y)} \Leftrightarrow \ell(x)=\ell(y)$ and
  $D_{r(x)r(y)} \Leftrightarrow r(x)=r(y)$.
\end{itemize}
Finally, every $\ell r$-multimagma is unital with $E=X_\ell=X_r$, and
we often write $E$ instead of $X_\ell$ or~$X_r$.

\begin{lemma}\label{P:lr-semigroup-props}
  In any  $\ell r$-multisemigroup, 
  \begin{enumerate}
  \item the \emph{weak locality laws} $\ell (xy) \subseteq \ell(x\ell(y))$
and $r (xy) \subseteq  r(r(x)y)$ hold,
\item the \emph{conditional locality laws}
  $D_{xy}\Rightarrow \ell (xy) = \ell(x\ell(y))$
  and $D_{xy}\Rightarrow r (xy) = r(r(x)y)$ hold,
\item the laws $\ell (xy)\subseteq \{\ell(x)\}$ and
 $r(xy)\subseteq \{r(y)\}$ hold, 
\item the conditional laws
  $D_{xy} \Rightarrow \ell (xy) = \{\ell(x)\}$ and
  $D_{xy} \Rightarrow r(xy) = \{r(y)\}$ hold,
\item the \emph{conditional twisted laws} 
  $D_{xy}\Rightarrow \ell(xy)x = x\ell(y)$ and $D_{xy}\Rightarrow y
  r(xy) = r(x)y$ hold.
  \end{enumerate}
\end{lemma}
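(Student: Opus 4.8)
The plan is to derive all five claims from the axioms of $\ell r$-multisemigroups together with Lemma~\ref{P:lr-magma-props}, working through them roughly in order since later parts lean on earlier ones. For part~(1), I would start from $xy$, insert the absorption law $y = \ell(y) y$ (since $x \ell(y) y$ — no, more carefully: write $\ell(y)\mcomp y = \{y\}$, so $x y = x(\ell(y) y) = (x\ell(y))y$ by associativity), and then apply the export law $\ell(\ell(x)z) = \ell(x)\ell(z)$ from Lemma~\ref{P:lr-magma-props}(4) appropriately, together with the general fact that $\ell(AB) \subseteq \ell(\ell(A)B)$-type containments follow from the magma axioms. The key point is that $\ell$ applied to a composition can be ``pushed inside'' only up to inclusion in general; the weak twisted law $\ell(xy)x \subseteq x\ell(y)$ of Lemma~\ref{P:lr-magma-props}(5) is likely the engine here. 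For the dual statement about $r$, invoke duality.

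For part~(2), the conditional version, the extra hypothesis $D_{xy}$ gives $r(x) = \ell(y)$ by the $\ell r$-multimagma axiom, which forces the reverse inclusion: when $D_{xy}$ holds, $x\ell(y) = x r(x) = \{x\}$ by absorption, so $\ell(x\ell(y)) = \ell(x)$, and one shows $\ell(xy) = \{\ell(x)\}$ as well (this is essentially part~(4)), giving equality. So I would actually prove (3) and (4) first, or at least the containment $\ell(xy) \subseteq \{\ell(x)\}$, then feed it back. For part~(3): any $u \in xy$ means $u \in x\mcomp v$ for some element, and by the weak twisted law plus retraction/compatibility laws one gets $\ell(u) = \ell(x)$; alternatively, $\ell(xy) \subseteq \ell(x\ell(y)) = \ell(x)\ell(y) \subseteq \{\ell(x)\}$ using part~(1), the export law, and idempotency/orthogonality — this chain looks cleanest. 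Part~(4) is then part~(3) combined with the observation that $D_{xy}$ makes $xy$ nonempty, so $\ell(xy)$ is a nonempty subset of the singleton $\{\ell(x)\}$, hence equal to it.

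For part~(5), the conditional twisted laws, I would combine the weak twisted law $\ell(xy)x \subseteq x\ell(y)$ from Lemma~\ref{P:lr-magma-props}(5) with part~(4): under $D_{xy}$ we have $\ell(xy) = \{\ell(x)\}$, so $\ell(xy)x = \ell(x)x = \{x\}$, while on the other side $x\ell(y) = x r(x) = \{x\}$ since $\ell(y) = r(x)$ by the defining axiom; both sides equal $\{x\}$, so the inclusion is an equality. The dual gives the $r$-statement. The main obstacle I anticipate is part~(1): getting the weak locality inclusion without the conditional hypothesis requires carefully chaining the export law and the weak twisted law while keeping track of which inclusions are strict, since none of the ``push $\ell$ inside'' moves are equalities in the unconditional case — everything else is then a fairly mechanical consequence of (1), the absorption axioms, and the $D_{xy} \Rightarrow r(x) = \ell(y)$ axiom, all of which has evidently been checked in Isabelle.
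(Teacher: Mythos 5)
The paper offers no written proof of this lemma (it defers to Isabelle), so your proposal has to stand on its own. Its overall architecture is sound: establish $\ell(xy)\subseteq\{\ell(x)\}$ first, get (4) from nonemptiness of $xy$, and then (2) and (5) fall out from $D_{xy}\Rightarrow r(x)=\ell(y)$ together with absorption; your treatments of (2), (4) and (5) are correct on that basis.

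The genuine gap is at the foundation, part (3), which you never actually prove. Your ``cleanest chain'' $\ell(xy)\subseteq\ell(x\ell(y))=\ell(x)\ell(y)\subseteq\{\ell(x)\}$ contains a false step: $\ell(x\ell(y))=\ell(x)\ell(y)$ is not an instance of the export law (which puts the domain element on the \emph{left}) and fails already in the pair groupoid --- for $x=(a,b)$, $y=(b,c)$ with $a\neq b$ one has $\ell(x\mcomp\ell(y))=\{(a,a)\}$ while $\ell(x)\mcomp\ell(y)=(a,a)\mcomp(b,b)=\emptyset$ by orthogonality. That chain is also circular, since it invokes part (1), and every workable proof of (1) in the nonempty case reduces precisely to (3). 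Your alternative route via the weak twisted law cannot close either: from $u\in xy$ it gives $\ell(u)\mcomp x\subseteq x\mcomp\ell(y)\subseteq\{x\}$, but to conclude $\ell(u)=\ell(x)$ you need $\ell(u)\mcomp x$ to be nonempty, and $D_{\ell(u)x}\Leftrightarrow\ell(u)=\ell(x)$ is exactly what you are trying to prove. The missing idea is associativity combined with the unit property of $\ell(x)$: from $x\mcomp y=(\ell(x)\mcomp x)\mcomp y=\ell(x)\mcomp(x\mcomp y)$, any $u\in x\mcomp y$ lies in $\ell(x)\mcomp v$ for some $v\in x\mcomp y$, whence $\ell(x)=r(\ell(x))=\ell(v)$ and $\ell(x)\mcomp v=\{v\}$ forces $u=v$, so $\ell(u)=\ell(x)$. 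With that in hand, part (1) is not the obstacle you anticipate: if $\neg D_{xy}$ then $\ell(xy)=\emptyset$ and the inclusion is vacuous, while if $D_{xy}$ then part (2) already gives equality.
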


The proofs have again been checked with Isabelle.

\begin{remark}\label{Remark:Relationships}
  The locality laws generalise Axioms (\ref{eq:3a}) for function
  systems~\cite{SchweizerS67}; the twisted laws generalise Axiom
  (\ref{eq:3c}) and law (\ref{eq:D3c}), in this order.  Function
  semigroups without (\ref{eq:3c}) and law (\ref{eq:D3c}) axiomatise
  modal semigroups~\cite{DesharnaisJS09}, which relate to semigroups
  of binary relations. Adding Axiom (\ref{eq:3c}) to both classes
  axiomatises domain and codomain of functions. Law (\ref{eq:D3c}) is
  relevant for systems of functions with so-called
  subinverses~\cite{SchweizerS67}. These are related to inverse
  semigroups and are irrelevant for this article. In sum,
  $\ell r$-multisemigroups generalise function systems and modal
  semigroups beyond totality.  A discussion of related semigroups and
  their applications can be found in~\cite{DesharnaisJS09}.
\end{remark}

\begin{lemma}\label{P:coherent-lr-semigroup-props}
In any local $\ell r$-multisemigroup,
\begin{enumerate}
\item the \emph{equational locality laws} $\ell (xy) =\ell(x\ell(y))$
and $r (xy) = r(r(x)y)$ hold,
\item the \emph{twisted laws} $\ell(xy)x = x\ell(y)$ and
  $y r(xy) = r(x)y$ hold.
\end{enumerate}
\end{lemma}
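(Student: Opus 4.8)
The plan is to derive the two equational locality laws and the two twisted laws in a local $\ell r$-multisemigroup by upgrading the conditional statements of Lemma~\ref{P:lr-semigroup-props} to unconditional ones, using the defining equivalence $D_{xy}\Leftrightarrow r(x)=\ell(y)$ that holds under locality. The key observation is a case split on whether the relevant products are empty. I will treat part (1) first and then use it for part (2).

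\textbf{Part (1): equational locality.} Consider $\ell(xy)=\ell(x\ell(y))$; the other identity is dual. If $D_{xy}$ holds, then the conditional locality law of Lemma~\ref{P:lr-semigroup-props}(2) gives the equality directly. So suppose $D_{xy}$ fails, i.e.\ $xy=\emptyset$, hence $\ell(xy)=\emptyset$. By locality, $D_{xy}$ failing means $r(x)\neq\ell(y)$. I now claim $x\ell(y)=\emptyset$ as well, so that $\ell(x\ell(y))=\emptyset$ and both sides agree. To see this: by the compatibility and retraction laws (Lemma~\ref{P:lr-magma-props}(1)), $\ell(\ell(y))=r(\ell(y))$ and in fact $\ell(y)\in E=X_\ell$, so $\ell(\ell(y))=\ell(y)$. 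By locality, $D_{x\ell(y)}$ holds iff $r(x)=\ell(\ell(y))=\ell(y)$, which is exactly what we assumed fails. Hence $x\ell(y)=\emptyset$, completing the case. The weak locality law (Lemma~\ref{P:lr-semigroup-props}(1)) is not even needed here, though it gives one inclusion for free; the point is that the reverse inclusion, which fails in general, is restored because locality synchronises emptiness of $xy$ with emptiness of $x\ell(y)$.

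\textbf{Part (2): twisted laws.} Consider $\ell(xy)x=x\ell(y)$; the other is dual. If $D_{xy}$, Lemma~\ref{P:lr-semigroup-props}(5) gives it directly. If $xy=\emptyset$, then $\ell(xy)=\emptyset$, so the left-hand side $\ell(xy)x=\emptyset x=\emptyset$. For the right-hand side, by the argument in part (1), $xy=\emptyset$ together with locality forces $x\ell(y)=\emptyset$ too. Hence both sides are $\emptyset$ and the law holds.

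\textbf{Main obstacle.} The only subtlety is the bookkeeping around $\ell(\ell(y))=\ell(y)$ and the precise use of $D_{xy}\Leftrightarrow r(x)=\ell(y)$: one must be careful that ``$D_{xy}$'' means $xy\neq\emptyset$ and that locality is exactly the missing implication $r(x)=\ell(y)\Rightarrow D_{xy}$ added to the always-valid $D_{xy}\Rightarrow r(x)=\ell(y)$. Once that equivalence is in hand, every case reduces either to a conditional law already proved in Lemma~\ref{P:lr-semigroup-props} or to the trivial observation that both sides are empty. I expect the Isabelle-checked proof to do essentially this case analysis, possibly packaging the ``both sides empty'' cases uniformly via the equivalence $xy=\emptyset\Leftrightarrow x\ell(y)=\emptyset$ (and dually $xy=\emptyset\Leftrightarrow r(x)y=\emptyset$), which is itself an immediate consequence of locality and $\ell(\ell(y))=\ell(y)$.
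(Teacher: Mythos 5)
Your proof is correct and is essentially the argument the paper relies on (the paper itself only cites an Isabelle check and gives no prose proof): the case $D_{xy}$ reduces to the conditional laws of Lemma~\ref{P:lr-semigroup-props}, and the case $xy=\emptyset$ is handled by the equivalence $xy=\emptyset\Leftrightarrow x\ell(y)=\emptyset$, which you correctly derive from the contrapositive of locality together with $D_{x\ell(y)}\Rightarrow r(x)=\ell(\ell(y))$ and the retraction law $\ell\circ\ell=\ell$. No gaps.
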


Once again, all proofs have been checked with Isabelle.  Locality is
in fact an equational property.

\begin{proposition}\label{P:local-coherent}
  An $\ell r$-multisemigroup is $\ell r$-local if and only if
  $\ell(x\ell(y)) \subseteq \ell (xy)$ and
  $r(r(x)y) \subseteq r (xy)$.
  \end{proposition}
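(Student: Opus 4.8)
The statement is an "if and only if" characterising $\ell r$-locality equationally. The forward direction is immediate: by Lemma~\ref{P:coherent-lr-semigroup-props}(1), a local $\ell r$-multisemigroup satisfies the equational locality laws $\ell(xy) = \ell(x\ell(y))$ and $r(xy) = r(r(x)y)$, and these equalities in particular give the two inclusions $\ell(x\ell(y)) \subseteq \ell(xy)$ and $r(r(x)y) \subseteq r(xy)$. So the real content is the converse: assuming these two inclusions, derive $r(x) = \ell(y) \Rightarrow D_{xy}$.

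For the converse I would argue as follows. Suppose $r(x) = \ell(y)$; I want to show $x\mcomp y \neq \emptyset$. The idea is to feed a cleverly chosen pair into the inclusion $\ell(x\ell(y)) \subseteq \ell(xy)$. Since $\ell(y)$ is a unit and $\ell(x)\mcomp x = \{x\}$, note $\ell(x)$ is a unit too; using $r(x) = \ell(y)$ together with the absorption law $x\mcomp r(x) = \{x\}$ gives $x\mcomp \ell(y) = x\mcomp r(x) = \{x\}$, which is nonempty, so $\ell(x\ell(y))$ is nonempty — indeed by Lemma~\ref{P:lr-semigroup-props}(4) applied to the defined product $x\mcomp r(x)$ we get $\ell(x\ell(y)) = \ell(x\mcomp r(x)) = \{\ell(x)\}$. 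Now the inclusion $\ell(x\ell(y)) \subseteq \ell(xy)$ forces $\ell(x) \in \ell(xy)$, so in particular $\ell(xy) \neq \emptyset$, which means $xy \neq \emptyset$, i.e.\ $D_{xy}$. That is exactly $\ell r$-locality. (The dual inclusion $r(r(x)y)\subseteq r(xy)$ gives the same conclusion symmetrically and serves as a consistency check; by the duality for $\ell r$-multimagmas noted before Lemma~\ref{P:lr-magma-props}, either inclusion alone may in fact suffice, but I would use whichever is cleanest and not belabour this.)

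I expect the main subtlety — rather than a genuine obstacle — to be bookkeeping about \emph{where} products are guaranteed nonempty, since in a multisemigroup $\ell(x\ell(y))$ is a priori a subset of $X$ and one must be careful that it is nonempty before concluding anything from an inclusion into $\ell(xy)$. The key move that dissolves this is recognising $x\mcomp\ell(y) = x\mcomp r(x) = \{x\}$, which both makes $x\ell(y)$ a defined (singleton) product and lets Lemma~\ref{P:lr-semigroup-props}(4) compute $\ell(x\ell(y)) = \{\ell(x)\}$ exactly. Once that is in hand the argument is a one-liner. I would present the forward direction in a sentence citing Lemma~\ref{P:coherent-lr-semigroup-props}, then spell out the converse as above, and remark that the dual half is symmetric.
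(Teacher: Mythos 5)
Your proposal is correct and follows the same structure as the paper's proof: the forward direction is cited from Lemma~\ref{P:coherent-lr-semigroup-props}, and the converse is the implication that the paper delegates to an Isabelle check ("the equational locality laws imply $\ell r$-locality in any $\ell r$-multimagma") without spelling out. Your explicit argument for that step --- using $r(x)=\ell(y)$ to get $x\mcomp\ell(y)=x\mcomp r(x)=\{x\}$, hence $\{\ell(x)\}=\ell(x\ell(y))\subseteq\ell(xy)$ and so $D_{xy}$ --- is valid (and indeed needs only the $\ell r$-multimagma axioms), so it correctly fills in the detail the paper omits.
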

  \begin{proof}
    We have checked with Isabelle that the equational locality laws
    imply $\ell r$-locality in any $\ell r$-multimagma. Equality in
    $\ell r$-multisemigroups follows from
    Lemma~\ref{P:coherent-lr-semigroup-props}.
  \end{proof} 

\begin{remark}
  Locality and weak locality have already been studied in the context
  of predomain, precodomain, domain and codomain operations for
  semirings~\cite{DesharnaisMS06}. In this context, predomain and
  precodomain operations satisfy weak locality axioms, but not the
  strong ones. Relative to $\ell r$-multisemigroups, these variants of
  domain and codomain axioms are at the powerset level.  Our results
  in Sections~\ref{S:lr-pow} to~\ref{S:correspondences} explain how
  they are caused by locality in $\ell r$-multisemigroups.
  \end{remark}

 \begin{remark}
   It seems natural to ask whether the axiom
   $D_{xy}\Rightarrow r(x)=\ell(y)$ in the definition of
   $\ell r$-magmas could be replaced, like locality, by equational
   axioms. Experiments with Isabelle show that adding the equational
   properties derived in
   Lemmas~\ref{P:lr-magma-props}-\ref{P:coherent-lr-semigroup-props}
   does not suffice. We leave this question open.
  \end{remark}

  The final lemma on $\ell r$-multisemigroups yields a more
  fine-grained view on definedness conditions and $\ell r$-locality.
\begin{lemma}\label{P:local-alt}
In any $\ell r$-multimagma, 
\begin{enumerate}
\item $r(x) = \ell(y) \Leftrightarrow D_{r(x)\ell(y) }$ and $D_{xy}  \Rightarrow D_{r(x)\ell(y)}$.
\item $D_{r(x)\ell(y)} \Rightarrow D_{xy}$ whenever the
  $\ell r$-multimagma is a local $\ell r$-multisemigroup.
\end{enumerate}
\end{lemma}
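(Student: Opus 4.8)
The plan is to derive everything directly from the $\ell r$-multimagma axioms $D_{ab}\Rightarrow r(a)=\ell(b)$, $\ell(a)\mcomp a=\{a\}$, $a\mcomp r(a)=\{a\}$, together with the compatibility and retraction laws of Lemma~\ref{P:lr-magma-props}; associativity and locality enter only for the very last implication.

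First I would establish the equivalence $r(x)=\ell(y)\Leftrightarrow D_{r(x)\ell(y)}$ in part~1. For the direction from left to right, put $e=r(x)=\ell(y)$; since $e=\ell(y)$, the retraction law $\ell\circ\ell=\ell$ gives $\ell(e)=\ell(\ell(y))=\ell(y)=e$, so the absorption axiom $\ell(e)\mcomp e=\{e\}$ becomes $e\mcomp e=\{e\}\neq\emptyset$, i.e.\ $D_{r(x)\ell(y)}$. For the converse, assume $D_{r(x)\ell(y)}$ and apply the definedness axiom $D_{ab}\Rightarrow r(a)=\ell(b)$ at $a=r(x)$, $b=\ell(y)$; this yields $r(r(x))=\ell(\ell(y))$, and the retraction laws $r\circ r=r$ and $\ell\circ\ell=\ell$ collapse both sides to give $r(x)=\ell(y)$. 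The second implication of part~1, $D_{xy}\Rightarrow D_{r(x)\ell(y)}$, is then immediate: $D_{xy}$ forces $r(x)=\ell(y)$ by the definedness axiom, and the equivalence just proved supplies $D_{r(x)\ell(y)}$.

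For part~2, assume the structure is a local $\ell r$-multisemigroup and that $D_{r(x)\ell(y)}$ holds. By the equivalence from part~1 we obtain $r(x)=\ell(y)$, and $\ell r$-locality (the implication $r(x)=\ell(y)\Rightarrow D_{xy}$, which coincides with locality in $\ell r$-multisemigroups) then delivers $D_{xy}$.

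I do not expect a genuine obstacle here: the argument is short bookkeeping on the idempotence and compatibility behaviour of $\ell$ and $r$ on units. The only point needing a little care is to instantiate the definedness axiom $D_{ab}\Rightarrow r(a)=\ell(b)$ at the pair $(r(x),\ell(y))$ rather than at $(x,y)$, and then to use both retraction laws to simplify $r(r(x))$ and $\ell(\ell(y))$; everything else is a direct appeal to facts already in hand (and, as for the neighbouring lemmas, the steps are machine-checkable).
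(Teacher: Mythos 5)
Your proof is correct: the forward direction of the equivalence follows from absorption plus $\ell\circ\ell=\ell$ (equivalently, the idempotency law $\ell(y)\mcomp\ell(y)=\{\ell(y)\}$ of Lemma~\ref{P:lr-magma-props}), the converse from instantiating $D_{ab}\Rightarrow r(a)=\ell(b)$ at $(r(x),\ell(y))$ and collapsing with the retraction laws, and part~2 from the coincidence of locality with $\ell r$-locality in $\ell r$-multisemigroups. The paper gives no written proof here (it only cites the Isabelle verification), and your argument is exactly the routine one that verification would carry out.
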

Once again, the proofs have been obtained by Isabelle.  The
correspondence between multimonoids and $\ell r$-multisemigroups can
now be summarised as follows.
\begin{proposition}\label{P:lr-semigroup-rel-monoid}
  Every multimonoid $(X,\mcomp,E)$  is an $\ell r$-multisemigroup
  $(X,\mcomp,\ell, r)$ in which $\ell (x)$ and $r (x)$ indicate the
  unique left and right unit of any $x\in X$.  Conversely, every
  $\ell r$-multisemigroup $(X,\mcomp,\ell,r)$ is a multimonoid
  $(X,\mcomp,E)$ with $E=X_\ell=X_r$.
\end{proposition}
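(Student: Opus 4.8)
The plan is to read the proposition off from facts already assembled above, so that essentially no new computation is needed. For the forward direction I would start from a multimonoid $(X,\mcomp,E)$: Section~\ref{S:relational-semigroups} records that in a multimonoid each element $x$ has exactly one left unit and exactly one right unit, and writing $\ell(x)$ and $r(x)$ for these two elements gives maps $\ell,r\colon X\to X$. It then remains only to check the three $\ell r$-multimagma axioms, associativity being inherited from $(X,\mcomp)$. The law $D_{xy}\Rightarrow r(x)=\ell(y)$ is exactly what was observed there. For the absorption laws, a left unit $e$ satisfies $e\mcomp y\subseteq\{y\}$ for every $y$ by the implicational unit axiom, so taking $e=\ell(x)$ together with $x\in\ell(x)\mcomp x$ gives $\ell(x)\mcomp x=\{x\}$; the law $x\mcomp r(x)=\{x\}$ is dual. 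Hence $(X,\mcomp,\ell,r)$ is an $\ell r$-multisemigroup, with $\ell,r$ returning the unique units as required.

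For the converse I would take an $\ell r$-multisemigroup $(X,\mcomp,\ell,r)$. It is in particular an associative $\ell r$-multimagma, and it has already been noted that every $\ell r$-multimagma is unital with set of units $E=X_\ell=X_r$; associativity carries over as well. Since an associative unital multimagma is exactly a multimonoid, $(X,\mcomp,E)$ with $E=X_\ell=X_r$ is a multimonoid, which is the claim.

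With the relevant lemmas in place there is no serious obstacle. The only point deserving care is the already-recorded fact that $\ell r$-multimagmas are unital with $X_\ell$ as their \emph{full} set of units: the easy half is that each $e\in X_\ell$ is a left and right unit (using $\ell(e)=e=r(e)$, the absorption laws, and $D_{ey}\Rightarrow r(e)=\ell(y)$ to pin down $e=\ell(y)$ and hence $e\mcomp y=\{y\}$), whereas the converse inclusion --- every left or right unit lies in $X_\ell$ --- is where the definedness axiom $D_{xy}\Rightarrow r(x)=\ell(y)$, rather than merely the equational laws of Lemmas~\ref{P:lr-magma-props}--\ref{P:coherent-lr-semigroup-props}, is genuinely needed. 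One may additionally record that the two passages are mutually inverse, though this is not part of the statement.
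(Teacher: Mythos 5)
Your proof is correct and follows exactly the route the paper intends: the proposition is stated without a separate proof precisely because it assembles facts already recorded earlier --- the uniqueness of left and right units in a multimonoid and the observation that $D_{xy}$ implies $r(x)=\ell(y)$ for the forward direction, and the remark that every $\ell r$-multimagma is unital with $E=X_\ell=X_r$ for the converse. One small quibble with your closing remark: the inclusion of every left or right unit into $X_\ell$ already follows from absorption and the implicational unit axioms alone (from $e\in e\mcomp r(e)$ the left-unit axiom forces $e=r(e)$), whereas it is the other inclusion --- that every element of $X_\ell$ genuinely is a unit --- that needs $D_{ey}\Rightarrow r(e)=\ell(y)$, exactly as your own parenthetical indicates.
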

The result trivially carries over to local structures and extends
to isomorphisms between categories of $\ell r$-multisemigroups and
relational monoids with suitable morphisms.  A morphism $f$ of
$\ell r$-multimagmas $(X,\mcomp_X,\ell_X,r_X)$ and
$(Y,\mcomp_Y,\ell_Y,r_Y)$ is of course a multimagma morphism that satisfies
$f\circ\ell_X = \ell_Y\circ f$ and $f\circ r_X=r_Y\circ f$. 

\begin{example}[$\ell r$-Multisemigroups]\label{ex:lr-mgs}
  All structures in Example~\ref{ex:multimon} are
  $\ell r$-multisemigroups by
  Proposition~\ref{P:lr-semigroup-rel-monoid}. We consider some of
  them in more detail.
\begin{enumerate}
\item Local partial $\ell r$-semigroups and the object-free categories
  in Chapter XII.5 of Mac Lane's book~\cite{MacLane98} are the same
  class, as already mentioned; the categories of such object-free
  categories and local partial $\ell r$-semigroups (with both kinds of
  morphisms) are isomorphic. Hence local partial $\ell r$-semigroups
  \emph{are} small categories. We briefly recall the relationship to
  more standard definitions of categories.

  A (small) \emph{category} consists of a set $O$ of objects and a set
  $M$ of morphisms with maps $s,t:M\to O$ associating a source and
  target object with each morphism, an operation $\id:O\to M$
  associating an identity arrow with each object, and a partial
  operation of composition $;$ of morphisms such that $f; g$ is
  defined whenever $t(f) = s(g)$---where $f;g=g\circ f$. The following
  axioms hold for all $X\in O$ and $f,g,h\in M$ (for Kleene equality,
  that is, whenever compositions are defined):
\begin{gather*}
  s(\id(X)) = X,\qquad t(\id(X)) = X,\qquad
s (f; g) = s(f),\qquad  t (f;g) = t(g),\\
f ; (g;h) = (f; g); h,\qquad
\id (s (f)) ; f = f,\qquad f ; \id (t (f)) = f.
\end{gather*}
In a category, $\ell = \id\circ s$ and $r = \id\circ t$ leads to
$\ell r$-multisemigroups. Conversely, the elements in $E$ of a local
partial $\ell r$-semigroup $X$ serve as objects of a category. 
\item The pair groupoid is a local partial $\ell r$-semigroup. 
\item \label{ex:broken-mon-bis} In the broken monoid from
  Example~\ref{ex:multimon}(\ref{ex:broken-mon}),
  $ \ell (aa) =\ell(\emptyset)=\emptyset \subset \{1\} = \ell (a1) =
  \ell(a\ell(a))$, hence locality of $\ell$ fails; that of $r$ fails
  by duality.

\item \label{ex:heaplets2} In the partial abelian $\ell r$-semigroup of heaplets from
  Example~\ref{ex:multimon}(\ref{ex:heaplets}), locality fails, too:
  if the domains of $f,g\in H$ overlap, then
  $\ell(fg)=\ell(\emptyset) = \emptyset \subset \{\varepsilon\} =
  \ell(f)=\ell(f\varepsilon) = \ell(f\ell(g))$
  and locality of $r$ fails by duality.
  
\item \label{ex:matrices} A concrete example of a category as a local
  partial $\ell r$-semigroup are Elgot's \emph{matrix theories}
  \cite{Elgot76}.  The set
  $\smash[b]{\mathbb{M}S= \bigcup_{ n, m\ge 0} S^{ n\times m}}$ of
  matrices over any semiring $S$ forms a partial monoid with matrix
  multiplication as composition.  $\mathbb{M}S$ is a local partial
  $\ell r$-semigroup with $\ell$ and $r$ defined, for any
  $M\in S^{ n\times m}$, by $\ell( M)= I_n$ and $r( M)= I_m$: the
  identity matrices of the appropriate dimensions. This category is
  isomorphic to the standard category of matrices, which has natural
  numbers as objects and $n\times m$-matrices as elements of the
  hom-set $[n,m]$.\qed
\end{enumerate}
\end{example}


\section{Convolution Quantales}\label{S:quantales}

We have already extended the multioperation
$\mcomp:X\times X\to \Pow X$ to type $\Pow X\times \Pow X\to \Pow X$
defining
$A \mcomp B = \bigcup\{x\mcomp y\mid x\in A \text{ and } y\in B\}$ and
the maps $\ell,r:X\to X$ to type $\Pow X\to \Pow X$ by taking
images. We wish to explore the algebraic structure of such powerset
liftings over $\ell r$-multimagmas and their relatives. It is known
that powerset liftings of relational monoids yield unital
quantales~\cite{Rosenthal97,DongolHS21}. Yet the precise lifting of
source and target operations remains to be explored. This requires
some preparation.

A \emph{quantale}~\cite{Rosenthal90} $(Q,\le,\cdot,1)$ is a complete
lattice $(Q,\le)$ equipped with a monoidal composition $\cdot$ with
unit $1$ that preserves all sups in its first and second argument.

We write $\Sup$ for the sup and $\Inf$ for the inf operator, and
$\sup$ and $\inf$ for their binary variants.  We also write
$\bot=\Inf Q=\Sup\emptyset$ for the least and
$\top = \Sup Q=\Inf\emptyset$ for the greatest element of $Q$.  In the
literature, quantales are often defined without $1$ (those with $1$
are then called \emph{unital}), but we have no use for these. We write
$Q_1=\{\alpha\in Q\mid \alpha\le 1\}$ for the set of
\emph{subidentities} of quantale $Q$.

A quantale is \emph{boolean} if its lattice reduct is a complete
boolean algebra---a complete lattice and a boolean algebra.

We write $-$ for boolean complementation. In a boolean quantale, $Q_1$
forms a complete boolean subalgebra with complementation
$\lambda x.\ x'= \lambda x.\ 1 - x$, and in which composition
coincides with meet~\cite{FahrenbergJSZ20}.

Our examples in Section~\ref{S:examples} require weaker notions of
quantale. A \emph{prequantale} is a quantale in which the
associativity law is absent~\cite{Rosenthal90}.

\begin{proposition}\label{P:lr-pow}
  Let $(X,\mcomp,\ell,r)$ be an $\ell r$-multisemigroup. Then
  $(\Pow X,\subseteq,\odot,E)$ forms a boolean quantale in
  which the complete boolean algebra is atomic.
\end{proposition}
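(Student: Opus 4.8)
The plan is to treat the two assertions separately: that the lattice reduct $(\Pow X,\subseteq)$ is an atomic complete boolean algebra, and that $\odot$ with $E$ makes it a quantale. The first is entirely standard and uses nothing about the $\ell r$-structure: $(\Pow X,\subseteq)$ is the powerset algebra of the (non-empty) set $X$, hence a complete boolean algebra, and its atoms are exactly the singletons $\{x\}$, with $A=\bigcup_{a\in A}\{a\}$ for every $A\subseteq X$, so it is atomic.

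For the quantale laws I would first note that $\odot$ preserves arbitrary sups in each argument: this is immediate from $A\odot B=\bigcup\{x\odot y\mid x\in A,\ y\in B\}$, since unions commute with unions. Associativity of $\odot$ on $\Pow X$ reduces to associativity of $\mcomp$ on $X$ exactly as spelled out after the definition of multisemigroup in Section~\ref{S:relational-semigroups}: one unfolds both sides of $A\odot(B\odot C)=(A\odot B)\odot C$ into the same triple union over $x\in A$, $y\in B$, $z\in C$ using $x\mcomp(y\mcomp z)=(x\mcomp y)\mcomp z$. This leaves the unit law, the only place the $\ell r$-axioms enter. By Proposition~\ref{P:lr-semigroup-rel-monoid}, $E=X_\ell=X_r$, and for $e\in E$ the retraction and compatibility laws give $\ell(e)=e=r(e)$. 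For $A\subseteq E\odot A$, each $a\in A$ satisfies $\ell(a)\in E$ and $\ell(a)\mcomp a=\{a\}$ by the absorption axiom, so $a\in E\odot A$. For $E\odot A\subseteq A$, suppose $x\in e\mcomp a$ with $e\in E$, $a\in A$; then $D_{ea}$, so $r(e)=\ell(a)$ by the definedness axiom, hence $e=\ell(a)$ and $e\mcomp a=\ell(a)\mcomp a=\{a\}$, giving $x=a\in A$. Thus $E\odot A=A$, and $A\odot E=A$ follows dually (using $a\mcomp r(a)=\{a\}$ and $\ell(e)=e$). Collecting these facts yields that $(\Pow X,\subseteq,\odot,E)$ is a boolean quantale with atomic lattice reduct.

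I do not expect a genuine obstacle here. The construction of the powerset quantale of a multimonoid is known~\cite{Rosenthal97,DongolHS21}, so one could also simply invoke that via Proposition~\ref{P:lr-semigroup-rel-monoid} and verify atomicity by hand. The only mildly delicate step is identifying $E$ as the two-sided unit, and that is handled by the definedness axiom $D_{xy}\Rightarrow r(x)=\ell(y)$ together with $r(e)=e$ for units $e\in E$.
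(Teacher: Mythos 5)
Your proof is correct, but it takes a more self-contained route than the paper. The paper's own proof is essentially a two-line citation: it converts the $\ell r$-multisemigroup into a multimonoid via Proposition~\ref{P:lr-semigroup-rel-monoid}, invokes the known result that powerset algebras of relational (multi)monoids are quantales~\cite{Rosenthal97,DongolHS21}, and observes that the boolean atomic structure of $\Pow X$ is trivial. You instead verify the quantale axioms directly --- sup-preservation from unions commuting with unions, associativity by unfolding both sides into the same triple union, and the unit law $E\odot A=A=A\odot E$ from the absorption axioms together with $D_{ea}\Rightarrow r(e)=\ell(a)$ and $r(e)=e$ for $e\in E$. All of these steps are sound, and your explicit treatment of the unit law is exactly the place where the $\ell r$-axioms genuinely enter; the rest, as you note, uses only the multisemigroup structure. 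What the paper's approach buys is brevity and a clean factorisation through the multimonoid/relational-monoid correspondence that is reused throughout (e.g.\ in Theorem~\ref{P:lr-conv}); what your approach buys is independence from the cited literature and a transparent view of which axiom is responsible for which quantale law. Since you also flag the citation route as an alternative, the two arguments are fully compatible.
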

\begin{proof}
  If $(X,\mcomp,\ell,r)$ is an $\ell r$-multisemigroup, then $(X,\mcomp,E)$ is a
  multimonoid by Proposition~\ref{P:lr-semigroup-rel-monoid}, hence
  isomorphic to a relational monoid 
  and its powerset algebra a quantale~\cite{Rosenthal97} (see
  also~\cite{DongolHS21}).  The complete lattice on $\Pow X$ is
  trivially boolean atomic.
\end{proof}

\begin{remark}
  If $X$ is an $\ell r$-multimagma instead of a
  $\ell r$-multisemigroup, then $\Pow X$ forms a prequantale instead
  of a quantale~\cite{DongolHS21,CranchDS20a}.  This weaker result is
  needed in Section~\ref{S:lr-pow}.
\end{remark}

\begin{remark}
Dualities between $n+1$-ary relational structures and boolean algebras
with $n$-ary (modal) operators have been studied by J\'onsson and
Tarski~\cite{JonssonT51}; correspondences between relational
associativity laws and those at powerset level are well known from
substructural logics such as the Lambek calculus~\cite{Lambek58}. Here
we study ternary relations---as binary multioperations---and the
quantalic composition as a binary (modal) operation.
\end{remark}

\begin{example}[Powerset Quantales over
  $\ell r$-Semigroups]\label{ex:pow-quantales}
  While the lifting works for arbitrary $\ell r$-multisemigroups, we
  restrict our attention to categories.
\begin{enumerate}
\item Let $C=(O,M)$ be a category. Then $(\Pow C,\subseteq,\mcomp,1)$,
  with the operations from Section~\ref{S:quantales} and
  $1=\{\id_X\mid X\in O\}$, forms a boolean quantale, in fact an
  atomic boolean one.  This holds by Proposition~\ref{P:lr-pow}
  because categories are $\ell r$-semigroups
  (Example~\ref{ex:multimon}). Note the difference between this
  construction and that of the powerset functor: we take powersets of
  morphisms, not of objects.
\item An interesting instance lifts the pair groupoid on set $X$ to
  the quantale of binary relations over $X$.  The quantalic
  composition is relational composition, the monoidal unit the
  identity relation, set union is sup and set inclusion the partial
  order. Relations can be seen as possibly infinite-di\-men\-sional
  boolean-valued square matrices in which the quantalic composition is
  matrix multiplication (cf.\ Example
  \ref{ex:lr-mgs}(\ref{ex:matrices})). \qed
\end{enumerate}
\end{example}
The fact that groupoids can be lifted to algebras of binary relations
with an additional operation of converse,
$R^\smallsmile=\{(y,x)\mid (x,y)\in R\}$, was known to J\'onsson and
Tarski~\cite{JonssonT52}.  This applies in particular to groups as
single-object groupoids. Further examples of powerset liftings of
categories and other $\ell r$-multisemigroups can be found in
Section~\ref{S:examples}. 

The powerset lifting to $\Pow X$ is a lifting to the function space
$2^X$, where $2$ is the quantale of booleans. It generalises to
liftings to function spaces $Q^X$ for arbitrary quantales
$Q$~\cite{DongolHS21}. We present a multioperational version.

Let $(X,\mcomp,\ell,r)$ be an $\ell r$-multimagma and
$(Q,\le,\cdot,1)$ a quantale. For functions $f,g:X\to Q$, we define
the \emph{convolution} $\ast: Q^X\times Q^X\to Q^X$ as
\begin{equation*}
  (f\ast g) (x) = \Sup_{x\in y\mcomp z} f (y) \cdot g (z)=\Sup\{f (y)
  \cdot g (z) \mid x\in y\mcomp z\}.
\end{equation*}
For any predicate $P$ we define
\begin{equation*}
  [P]=
  \begin{cases}
    1 & \text{ if } P,\\
\bot & \text{ otherwise},
  \end{cases}
\end{equation*}
and then the function $\id_{E}:X\to Q$ as
\begin{equation*}
  \id_E (x) = [x\in E].
\end{equation*}

In addition, we extend sups and $\le$ pointwise from $Q$ to
$Q^X$. This leads to the following generalisation of
Proposition~\ref{P:lr-pow}.
\begin{theorem}\label{P:lr-conv}
Let $X$ be an $\ell r$-multisemigroup and $Q$ a quantale. Then
$(Q^X,\le,\ast,\id_E)$ is a quantale.
\end{theorem}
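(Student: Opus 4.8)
The plan is to establish the three quantale axioms for $(Q^X,\le,\ast,\id_E)$ separately: that $(Q^X,\le)$ is a complete lattice, that $\ast$ is associative, and that $\ast$ preserves arbitrary sups in each argument, together with the unit laws for $\id_E$. The completeness of the lattice is immediate since sups are computed pointwise in $Q$, which is a complete lattice; $\bot$ and $\top$ in $Q^X$ are the constant functions $\lambda x.\bot$ and $\lambda x.\top$.

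First I would verify sup-preservation, which is the easiest of the genuinely algebraic parts. For a family $\{g_i\}_{i\in I}$ in $Q^X$ one computes, using that $\Sup$ in $Q^X$ is pointwise and that $\cdot$ preserves sups in its second argument in $Q$,
\begin{equation*}
  \bigl(f\ast \Sup_i g_i\bigr)(x) = \Sup_{x\in y\mcomp z} f(y)\cdot\Bigl(\Sup_i g_i(z)\Bigr) = \Sup_i \Sup_{x\in y\mcomp z} f(y)\cdot g_i(z) = \Bigl(\Sup_i (f\ast g_i)\Bigr)(x),
\end{equation*}
where swapping the two suprema is justified because suprema commute in a complete lattice. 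The argument for the first argument is symmetric, using sup-preservation of $\cdot$ in its first argument. In particular $f\ast\bot = \bot = \bot\ast f$, taking $I=\emptyset$.

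Next I would check the unit laws $\id_E\ast f = f = f\ast\id_E$. Expanding, $(\id_E\ast f)(x) = \Sup\{[y\in E]\cdot f(z)\mid x\in y\mcomp z\}$. Only pairs with $y\in E$ contribute a nonzero term (namely $1\cdot f(z)=f(z)$); for $y\in E$ one has $x\in y\mcomp z$ iff $y\mcomp z=\{x\}$, and by the $\ell r$-multimagma axioms the unit element $y$ with $y\mcomp z\ni x$ must be $\ell(z)=\ell(x)$, and indeed $\ell(x)\mcomp x=\{x\}$. Hence the supremum ranges over a single nonzero term equal to $f(x)$ (all other terms are $\bot$), giving $(\id_E\ast f)(x)=f(x)$. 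The law $f\ast\id_E = f$ follows by the dual argument using $x\mcomp r(x)=\{x\}$ and the characterisation of right units, or simply by the duality for $\ell r$-multimagmas.

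The main obstacle is associativity, $(f\ast g)\ast h = f\ast(g\ast h)$. Expanding the left side at $x$ gives $\Sup\{f(y)\cdot g(v)\cdot h(z)\mid x\in u\mcomp z,\ u\in y\mcomp v\}$ after distributing $\cdot$ over the inner sup (valid by sup-preservation in $Q$) and using associativity of $\cdot$ in $Q$; expanding the right side gives $\Sup\{f(y)\cdot g(v)\cdot h(z)\mid x\in y\mcomp w,\ w\in v\mcomp z\}$. These two index sets have equal union of contributions precisely because $X$ is a \emph{multisemigroup}: associativity of $\mcomp$ says $x\in (y\mcomp v)\mcomp z$ iff $x\in y\mcomp(v\mcomp z)$, i.e.\ $\exists u.\,(u\in y\mcomp v\wedge x\in u\mcomp z)$ iff $\exists w.\,(w\in v\mcomp z\wedge x\in y\mcomp w)$. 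Since the summand $f(y)\cdot g(v)\cdot h(z)$ depends only on $y,v,z$ and not on the witness $u$ or $w$, and since a supremum is unchanged by re-indexing over a set with the same elements, both sides equal $\Sup\{f(y)\cdot g(v)\cdot h(z)\mid x\in y\mcomp v\mcomp z\}$. This is the only place associativity of $X$ is used, which is why the multimagma case yields only a prequantale. Some care is needed to present the re-indexing cleanly; working with the equivalent relational-semigroup formulation via Appendix~\ref{A:two}, or simply citing the analogous computation in \cite{DongolHS21}, streamlines this step.
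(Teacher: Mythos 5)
Your proof is correct, but it takes a different route from the paper. The paper's own proof is a two-line reduction: it invokes Proposition~\ref{P:lr-semigroup-rel-monoid} to recognise $(X,\mcomp,E)$ as a multimonoid, hence (up to the isomorphism of Appendix~\ref{A:two}) a relational monoid, and then cites the known lifting result of \cite{DongolHS21} that convolution algebras over relational monoids are quantales. You instead verify the quantale axioms for $Q^X$ directly, and all the steps check out: the pointwise complete lattice, sup-preservation of $\ast$ via commuting suprema and sup-preservation of $\cdot$ in $Q$ (which also gives the strictness $\bot\cdot\alpha=\bot=\alpha\cdot\bot$ you tacitly use to discard the $y\notin E$ terms), the unit laws from $\ell(x)\mcomp x=\{x\}$ and the left-unit property forcing every contributing term to equal $f(x)$, and associativity by re-indexing the quadruple-indexed supremum over witnesses $u$ (resp.\ $w$), which is exactly where associativity of $\mcomp$ enters. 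What your version buys is a self-contained argument that makes visible which hypothesis drives which axiom — in particular that only the multimagma structure is needed for everything except associativity, which is precisely the observation behind the paper's prequantale variant in the subsequent remark — whereas the paper's version buys brevity at the cost of outsourcing the computation to \cite{DongolHS21} under "slightly different, but equivalent" axioms. Your closing suggestion to route through the relational formulation or the citation is, in effect, the paper's actual proof.
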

\begin{proof}
  If $(X,\mcomp,\ell,r)$ is an $\ell r$-multisemigroup and $Q$ a quantale,
  then $(X,\mcomp,X_E)$ is a multimonoid by
  Proposition~\ref{P:lr-semigroup-rel-monoid} and a relational monoid
  up-to isomorphism. Hence $Q^X$ is a quantale~\cite{DongolHS21}
  (shown for slightly different, but equivalent relational monoid
  axioms).
\end{proof}
In addition, $Q^X$ is distributive (more precisely, its underlying
complete lattice) if $Q$ is, and boolean if $Q$ is~\cite{DongolHS21}.

We call $Q^X$ from Theorem~\ref{P:lr-conv} the
\emph{convolution algebra} or \emph{convolution quantale} of $X$ and
$Q$.

\begin{remark}
  If $X$ is an $\ell r$-multimagma and $Q$ a prequantale, then the
  convolution algebra $Q^X$is a
  prequantale~\cite{DongolHS21,CranchDS20a}.  This result is needed in
  Section~\ref{S:conv-alg}.
\end{remark}

The following laws help calculating with convolutions. First of all,
using the function $\delta_x(y)=[x=y]$, we can write
$f(x) = \Sup_{y\in Y} f(y)\cdot \delta_{y}(x)$ for any $f\in Q^X$ and
more generally
\begin{equation*}
  f = \Sup_{x\in X} f(x)\cdot \delta_x,
\end{equation*}
viewing $f(x)\cdot \delta_y$ and even $\alpha\cdot f$, for any
$\alpha\in Q$ and $f:X\to Q$, as a scalar product in a $Q$-module on
$Q^X$, as usual in algebra.  This allows us to rewrite convolution as
\begin{equation*}
  (f \ast g)(x) = \Sup_{y,z\in X} f(y)\cdot g(z) \cdot [x\in y\odot z]
\end{equation*}
or even as
\begin{equation*}
  f \ast g = \Sup_{x,y,z\in X} f(y)\cdot g(z) \cdot [x\in
  y\odot z]\cdot \delta_x, 
\end{equation*}
sups as
\begin{equation*}
  \Sup F= \Sup_{x\in X} \Sup\{f(x)\mid f\in F\}\cdot \delta_x  \qquad\text{ and  }\qquad (f+ g) = \Sup_{x\in X} (f(x)+g(x))\cdot
  \delta_x,
\end{equation*}
and finally the identity function as $\id_E = \Sup_{e\in E}\delta_e$.

\begin{example}[Convolution Quantales over $\ell r$-Semigroups]\label{ex:category-quantale}
Once again we restrict our attention to categories. 
\begin{enumerate}
\item A \emph{category algebra} is the convolution algebra of a small
  category with values in a commutative ring with unity.  This
  generalises the well known group algebras, in particular group
  rings. Similarly, Theorem~\ref{P:lr-conv} constructs \emph{category
    quantales}, evaluating small categories in quantales, which yields
  quantales as convolution algebras.
\item An instance are $Q$-fuzzy relations~\cite{Goguen67}, which are
  binary relations taking values in a quantale $Q$. The associated
  quantales are convolution quantales over pair groupoids. If $Q$ is
  the Lawvere quantale described in Example~\ref{ex:value-quantales}
  below, this yields t-norms.  $Q$-valued relations correspond to
  possibly infinite dimensional $Q$-valued square matrices.  If the
  base set is finite, we recover the matrix theories $\mathbb{M}Q$ of
  Example \ref{ex:lr-mgs}(\ref{ex:matrices})d.  Heisenberg's original
  formalisation of quantum mechanics~\cite{Heisenberg25} used a
  similar convolution algebra over the pair groupoid~\cite{Connes95},
  yet with values in the field of complex numbers. \qed
\end{enumerate}
\end{example}

Many additional examples of  relational monoids, partial monoids and
convolution algebras are discussed
in~\cite{DongolHS16,DongolHS21,CranchDS20a}. Further examples, with a
view of lifting categories and non-local $\ell r$-semigroups, can be
found in Section~\ref{S:examples}. 

Finally, we list some well known candidates for value-quantales.

\begin{example}[Quantales]\label{ex:value-quantales}~
\begin{enumerate}
\item\label{ex:bool-quantale} We have already mentioned the quantale
  of booleans $(2,\le,\inf,1)$, which has carrier set $\{0,1\}$ and
  $\inf$, in fact $\min$, as composition.

\item \label{ex:lawvere-quantale} The \emph{Lawvere quantale}
  $(\mathbb{R}_+^\infty,\ge,+,0)$ has $\Inf$ as supremum, $+$ as
  quantalic composition, extended by $x+\infty=\infty=\infty +x$, and
  $0$ as its unit. It is important for defining generalised metric
  spaces and t-norms.

\item The unit interval $([0,1],\le,\cdot,1)$ forms a quantale with
  $\Sup$ as supremum. It is isomorphic to the Lawvere quantale via the
  function $(\lambda x \cdot -\ln x)$ as its inverse, and important in
  probability applications.

\item The structures $([0,1],\le,\min,1)$ and $([0,1],\ge,\max,0)$ and
  their variants with $[0,1]$ replaced by $\mathbb{R}_+^\infty$ (and
  unit $\infty$ for the first) form other quantales over the unit
  interval, which are at the basis of \emph{max-plus algebra}
  \cite{HeidergottOW06} and similar structures.\qed
\end{enumerate}
\end{example}
All these examples are distributive, but not boolean quantales, and
they justify our approach to domain in non-boolean cases below.


\section{Modal Quantales}\label{S:modal-quantales}

The results of Section~\ref{S:quantales} do not lift the source and
target structure of $\ell r$-multisemigroups to $Q^X$ faithfully: all
relational units are mapped to the unit in $Q$ by $\id_E$
rather bluntly. A more fine-grained approach, in which different
elements of $E$ are mapped to different elements in the powerset
quantale or different values or weights in $Q$, is possible.

\begin{example}[Relational domain and codomain]\label{ex:rel-dom-cod}
  In the relation quantale, the standard relational domain and
  codomain operations are
  \begin{equation*}
    \dom(R) = \{(a,a)\mid \exists b.\ (a,b)\in R\}\qquad\text{ and } \qquad
    \cod(R) = \{(b,b)\mid \exists a.\ (a,b) \in R\}.
  \end{equation*}
  Thus $\dom (R) = \{\ell (x) \mid x\in R\} = \ell (R)$ and
  $\cod (R)= \{r (x) \mid x\in R\}= r (R)$.\qed
\end{example}

We can capture this more abstractly.  Formally, a \emph{domain
  quantale}~\cite{FahrenbergJSZ20} is a quantale $(Q,\le,\cdot,1)$
equipped with a domain operation $\dom:Q\to Q$ that satisfies, for all
$\alpha,\beta\in Q$,
\begin{align*}
 \alpha &\le \dom (\alpha)\cdot \alpha,\\
\dom\, (\alpha\cdot \dom (\beta)) &= \dom (\alpha\cdot \beta),\\
 \dom (\alpha) &\le 1,\\
\dom (\bot) &= \bot,\\
\dom (\alpha \sup \beta) &= \dom (\alpha)\sup \dom (\beta).
\end{align*}
We refer to the domain axioms as \emph{absorption}, \emph{locality},
\emph{subidentity}, \emph{strictness} and \emph{(binary) sup
  preservation}, respectively. Absorption can be strengthened to an
identity: $\dom (\alpha) \alpha = \alpha$.

The domain axioms are precisely those of domain
semirings~\cite{DesharnaisS11}; domain quantales are thus quantales
that are also domain semirings with addition as binary sup.
Properties of domain semirings are therefore inherited, for instance
\begin{itemize}
\item the export law $\dom (\dom (\alpha) \beta) = \dom (\alpha) \dom (\beta)$;
\item order preservation $\alpha\le \beta\Rightarrow \dom (\alpha) \le \dom (\beta)$;
\item the weak twisted law: $\alpha\dom (\beta) \le \dom(\alpha\beta) \alpha$;
\item least left absorption (lla): $\dom (\alpha) \le \rho \Leftrightarrow
  \alpha\le \rho\alpha$; and
\item the adjunction $\dom (\alpha) \le \rho \Leftrightarrow \alpha \le \rho\top$.
\end{itemize}
In the last two laws,
$\rho\in Q_\dom=\{\alpha\mid \dom(\alpha)=\alpha\}$ (they need not
hold for all $\rho\in Q_1\supseteq Q_\dom$).

Domain axioms for arbitrary sups are
unnecessary~\cite{FahrenbergJSZ20}. In every domain quantale, $\dom$
preserves arbitrary sups, $\dom \left(\Sup A\right) = \Sup \dom (A)$,
hence in particular $\dom (\top) = 1$, and distributes weakly with
infs, $\dom \left(\Inf A\right) \le \Inf \dom (A)$. Domain elements
also left-distribute over non-empty infs,
$\dom (\alpha) \cdot \Inf A = \Inf (\dom (\alpha) A)$ for all $A\neq \emptyset$.

Much of the structure of the domain algebra induced by $\dom$ is
inherited from domain semirings as well.  It holds that
$Q_\dom = \dom(Q)$. The \emph{domain algebra} $(Q_\dom,\le,\cdot,1)$
is therefore a subquantale of $Q$ that forms a bounded distributive
lattice with $\cdot$ as binary inf. It contains the largest boolean
subalgebra of $Q$ bounded by $\bot$ and $1$~\cite{DesharnaisS11}.  The
elements of $Q_\dom$ are called \emph{domain elements} of $Q$. Yet, in
the quantalic case, the lattice $Q_\dom$ is
complete~\cite{FahrenbergJSZ20}: $\dom (\Sup\dom (A)) = \Sup \dom (A)$
follows from sup-preservation of $\dom$. All sups of domain elements
are therefore again domain elements, but sups and infs in $Q_\dom$
need not coincide with those in $Q$.

Quantales are closed under opposition: interchanging the order of
composition in quantale $Q$ yields a quantale $Q^\op$; properties of
quantales translate under this duality. The opposite of the domain
operation on a domain quantale is of course a codomain operation.

A \emph{codomain quantale} is the opposite of a domain quantale, just
like a codomain semiring is the opposite of a domain semiring.
Codomain quantales can be axiomatised using a codomain operation
$\cod:Q\to Q$ that satisfies the dual domain axioms, making
$(Q^\op,\cod)$ a domain quantale.

A \emph{modal quantale} is a domain and codomain quantale
$(Q,\le,\cdot,1,\dom,\cod)$ that satisfies the following
\emph{compatibility} axioms, which make the domain and codomain
algebras $Q_\dom$ and $Q_\cod$ coincide:
\begin{equation*}
  \dom \circ \cod = \cod \qquad\text{ and }\qquad \cod\circ \dom = \dom.
\end{equation*}

\begin{remark}
  In modal semirings, $\dom$ and $\cod$ are usually modelled
  indirectly through their boolean complements in the subalgebras of
  subidentities, that is, by antidomain and antirange operations.
  This allows expressing boolean complementation in $Q_\dom$, which
  then becomes the largest complete boolean subalgebra of $Q$ bounded
  by $\bot$ and $1$~\cite{DesharnaisS11}. As we intend to lift from
  $\ell r$-multisemigroups, where complements of source and target
  operations may not exist, we do not follow this approach.
\end{remark}

In a boolean quantale $Q$, the subalgebra $Q_1$ of subidentities forms
a complete boolean algebra with quantalic composition as binary inf,
hence in particular $Q_\dom =Q_1$. One can then axiomatise domain by
the adjunction
$\dom (\alpha) \le \rho \Leftrightarrow \alpha\le \rho\top$, for all
$\rho\in Q_1$, and weak locality
$\dom (\alpha\beta) \le \dom (\alpha\dom
(\beta))$~\cite{FahrenbergJSZ20}.
Dual results hold for codomain. Using the adjunction alone yields only
\emph{predomain} and \emph{precodomain}
operations~\cite{DesharnaisMS06}.  Finally, in a boolean quantale,
antidomain and anticodomain operations can be defined as
$\ad = (\lambda x.\ x')\circ \dom$ and its dual. The axioms of
antidomain and anticodomain semirings~\cite{DesharnaisS11} can then be
derived~\cite{FahrenbergJSZ20}.

Some of the $\ell r$-magmas, $\ell r$-semigroups and $\ell r$-monoids
in the examples in Section~\ref{S:examples} fail to yield
associativity or locality laws when lifted. This requires a more
fine-grained view on modal quantales.

\begin{itemize}
\item A \emph{modal prequantale} is a prequantale in which the locality
axioms for $\dom$ and $\cod$ are replaced by the export axioms
\begin{equation*}
  \dom(\dom(\alpha)\beta)) = \dom(\alpha)\dom(\beta)\qquad\text{ and }\qquad
  \cod(\alpha\cod(\beta)) = \cod(\alpha)\cod(\beta).
\end{equation*}
The algebra $Q_\dom=\dom(Q)=\cod(Q)=Q_\cod$ still forms a complete
distributive lattice in this case. Yet neither the weak locality laws
$\dom(\alpha\beta) \le \dom(\alpha\dom(\beta)$ or
$\dom(\alpha\beta) \ge \dom(\alpha \dom(\beta)$ nor their opposites
for $\cod$ are derivable in this setting.

\item A \emph{weakly local modal quantale} is a modal quantale in
  which the locality axioms for $\dom$ and $\cod$ have once again been
  replaced by the export axioms above.

In the presence of associativity, the \emph{weak locality} laws
\begin{equation*}
\dom(\alpha\beta) \le \dom(\alpha \dom(\beta)) \qquad\text{ and }\qquad 
\cod(\alpha\beta) \le \cod(\cod(\alpha)\beta) 
\end{equation*}
are now derivable, but not the opposite inequalities; locality
therefore does not hold.
\end{itemize}

As mentioned before, this foliation of definitions is reflected by
lifting and modal correspondence properties in
Sections~\ref{S:lr-pow}, \ref{S:conv-alg} and \ref{S:correspondences}
and justified by mathematically meaningful examples in
Section~\ref{S:examples}.

The definitions and properties mentioned have been verified with
Isabelle, they also transfer to dioids, but have not yet been
developed in greater detail for these.


\section{Modal Powerset  Quantales}\label{S:lr-pow}

We show how $\ell r$-multisemigroups can be lifted to powerset
algebras, using the lifted operations $\ell,r:\Pow X\to \Pow X$
defined, for all $A\subseteq X$, by
$\ell (A) = \{\ell (x) \mid x\in A\}$ and
$r (A) = \{r (x) \mid x\in A\}$.

\begin{remark}
  To relate this with the lifting of the ternary relation $R$ to a
  binary modality $\mcomp$ on the powerset quantale in the sense of
  J\'onsson and Tarski, consider
  $\mathcal{R}(\ell) = \{(x,y)\mid y = \ell (x)\}$, the graph of
  $\ell$, which is a functional binary relation, and similarly the
  graph $\mathcal{R}(r)$ of $r$. Then
\begin{equation*}
\ell(A) = \{y\mid (x,y)\in \mathcal{R}(\ell) \text{ for some } x\in
A\}\qquad\text{ and }\qquad
r(A) = \{y\mid (x,y)\in \mathcal{R}(r) \text{ for some } x\in A\},
\end{equation*}
which shows that $\ell$ and $r$ are modal diamond operators on
$\Pow X$. The following lemmas show that they indeed preserve binary
sups and are strict, and hence are operators on boolean algebras.
\end{remark}

Our main aim is to verify the domain quantale axioms and then use
properties of domain and codomain to infer that local
$\ell r$-multisemigroups can be lifted to boolean modal quantales at
powerset level. We develop this theorem step-by-step, starting from
$\ell r$-multimagmas, to clarify correspondences.

\begin{lemma}\label{P:lr-magma-lift}
Let $X$ be an $\ell r$-multimagma. For $A,B\subseteq
X$ and $\mathcal{A}\subseteq \Pow X$, 
\begin{enumerate}
\item  the compatibility laws $\ell (r(A)) = r (A)$ and $r (\ell (A)) =
  \ell (A)$ hold,
\item the absorption laws $\ell (A)\cdot A = A$ and $A\cdot r (A)= A$
  hold,
\item the sup-preservation laws $\ell \left(\bigcup \mathcal{A}\right) =
   \bigcup\{\ell (A)\mid A \in
  \mathcal{A}\}$ and $r \left(\bigcup \mathcal{A}\right)= \bigcup\{r (A)\mid A
  \in \mathcal{A}\}$ hold,
\item the binary sup-preservation laws  $\ell (A\cup B)= \ell (A) \cup \ell (B)$,
  $r (A\cup B)= r (A) \cup r (B)$ and the zero laws
  $\ell (\emptyset) = \emptyset = r (\emptyset)$ hold,
\item the commutativity laws $f (A) g (B) = g (B) f (A)$ hold for $f,g\in\{\ell,r\}$,
\item the subidentity laws $\ell (A) \subseteq X_\ell$ and $r (A)
  \subseteq X_r$ hold,
\item the export laws $\ell (\ell (A)\cdot  B) = \ell (A) \ell (B)$
  and $r (A\cdot  r(B)) = r (A)r (B)$ hold.
\end{enumerate}
\end{lemma}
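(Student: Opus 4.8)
The plan is to lift each law of Lemma~\ref{P:lr-magma-props} pointwise along the extended operations $\ell,r:\Pow X\to\Pow X$ and $\mcomp:\Pow X\times\Pow X\to\Pow X$, exploiting that these extensions are union-preserving (hence the powerset-level statements follow from the element-level ones) together with the quantalic structure of $\Pow X$ from Proposition~\ref{P:lr-pow}. Concretely, I would dispatch the items roughly in the printed order. Items (1), (5), (6) are purely pointwise: $\ell(r(A))=\{\ell(r(x))\mid x\in A\}=\{r(x)\mid x\in A\}=r(A)$ by the compatibility law $\ell\circ r=r$, and similarly for the other equations; (5) follows from the commutativity law $r(x)\ell(y)=\ell(y)r(x)$ of Lemma~\ref{P:lr-magma-props}(3) once one checks it also holds with $f=g$, which the excerpt already records; (6) is immediate since $\ell(A)\subseteq\ell(X)=X_\ell$. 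Items (3) and (4) are routine set-theoretic bookkeeping: $\ell$ applied to a union of a family is the union of the images, and the binary and nullary cases are special cases; these only use that taking images commutes with unions.

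The two slightly more substantial items are (2) and (7). For absorption $\ell(A)\cdot A=A$: the inclusion $\supseteq$ is the interesting one and follows because for each $x\in A$ we have $x\in\{x\}=\ell(x)\mcomp x\subseteq\ell(A)\mcomp A$ using the $\ell r$-multimagma axiom $\ell(x)\mcomp x=\{x\}$; for $\subseteq$, any element of $\ell(A)\mcomp A$ lies in $\ell(x)\mcomp y$ for some $x,y\in A$, which is nonempty only if $r(\ell(x))=\ell(y)$, i.e.\ $\ell(x)=\ell(y)$, and then $\ell(y)\mcomp y=\{y\}\subseteq A$. The codomain half is dual. For the export laws (7), again $\supseteq$ is straightforward from the element-level export law of Lemma~\ref{P:lr-magma-props}(4): $\ell(x)\ell(y)$ is either empty or $\{\ell(x)\ell(y)\text{-stuff}\}$; more carefully, $\ell(A)\ell(B)=\bigcup_{x\in A,y\in B}\ell(x)\mcomp\ell(y)$ and each such set is $\{\ell(x)\}$ if $\ell(x)=\ell(y)$ and $\emptyset$ otherwise, while $\ell(\ell(A)\cdot B)=\bigcup_{x\in A,y\in B}\ell(\ell(x)\mcomp y)$; one then matches these up using $\ell(\ell(x)y)=\ell(x)\ell(y)$ together with the observation that $\ell(x)\mcomp y\neq\emptyset\iff r(\ell(x))=\ell(y)\iff\ell(x)=\ell(y)$.

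The main obstacle, such as it is, is bookkeeping rather than mathematics: one must be careful that the extended multioperation $\mcomp:\Pow X\times\Pow X\to\Pow X$ genuinely distributes over arbitrary unions in each argument (so that $\ell(A)\cdot A=\bigcup_{x,x'\in A}\ell(x)\mcomp x'$ and not just the diagonal), and that the definedness side-conditions $D_{\ell(x)y}\iff\ell(x)=\ell(y)$ (from orthogonality of units) are invoked correctly when collapsing these unions. Since every identity to be proved is a union of element-level identities or inclusions already available from Lemma~\ref{P:lr-magma-props} and the $\ell r$-multimagma axioms, and since $\Pow X$ is a boolean quantale by Proposition~\ref{P:lr-pow}, no genuinely new argument is needed; indeed, as the paper remarks, all of this has been machine-checked in Isabelle, so I would present the element-level reductions and leave the union manipulations implicit.
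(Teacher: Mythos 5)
Your proposal is correct and follows essentially the same route as the paper's proof in Appendix~\ref{A:three}: each powerset identity is unwound as a union over pairs of elements and collapsed to the element-level laws of Lemma~\ref{P:lr-magma-props} and the $\ell r$-multimagma axioms, with the definedness conditions handled via $D_{\ell(x)y}\Leftrightarrow \ell(x)=\ell(y)$ (forward by the axiom $D_{xy}\Rightarrow r(x)=\ell(y)$ plus compatibility, backward by absorption $\ell(y)\mcomp y=\{y\}$). The only cosmetic quibble is that you attribute this equivalence to ``orthogonality of units,'' whereas orthogonality as stated concerns $D_{\ell(x)\ell(y)}$; your actual uses of the equivalence in items (2) and (7) are nonetheless justified exactly as the paper justifies them.
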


The proof has been verified with Isabelle and is subsumed by that of
Theorem~\ref{P:lr-lifting}. Yet because powerset liftings are
important and the proof may be instructive, we present details in
Appendix~\ref{A:three}.  Lemma~\ref{P:lr-magma-lift} shows that the
domain axioms, except locality, can already be lifted from
$\ell r$-multimagmas.  The identities in (4) are subsumed by those in
(3) in the powerset quantale, but they are needed for lifting to modal
semirings; see Section~\ref{S:conv-alg}. That is why we are listing
them.

Lifting weak variants of locality for $\ell$ and $r$ requires
$\ell r$-multisemigroups; lifting locality, in addition, requires
locality.

\begin{lemma}\label{P:lr-semigroup-lift}
Let $X$ be an $\ell r$-multisemigroup and $A,B\subseteq
X$. Then
\begin{equation*}
  \ell (AB) \subseteq \ell (A \ell (B))\qquad\text{
    and }\qquad r (AB)\subseteq r (r (A)B).
\end{equation*}
The converse inclusions hold if $X$ is local.
\end{lemma}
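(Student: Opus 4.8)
The plan is to derive the weak locality inclusions directly from Lemma~\ref{P:lr-semigroup-props}, which already gives the pointwise versions $\ell(xy)\subseteq\ell(x\ell(y))$ and $r(xy)\subseteq r(r(x)y)$, and then observe that both sides of the displayed inclusions are unions of such pointwise instances over $x\in A$, $y\in B$. Concretely, to show $\ell(AB)\subseteq\ell(A\ell(B))$, I would take an arbitrary element $u\in\ell(AB)$; by the definition of the lifted map this means $u=\ell(v)$ for some $v\in AB$, hence $v\in xy$ for some $x\in A$, $y\in B$. Then $\ell(v)\in\ell(xy)\subseteq\ell(x\ell(y))$ by the weak locality law of Lemma~\ref{P:lr-semigroup-props}(1), and since $\ell(y)\in\ell(B)$ we have $x\ell(y)\subseteq A\ell(B)$, so $\ell(x\ell(y))\subseteq\ell(A\ell(B))$; thus $u=\ell(v)\in\ell(A\ell(B))$. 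The argument for $r$ is completely dual, using the other half of Lemma~\ref{P:lr-semigroup-props}(1) and the duality of $\ell r$-multisemigroups under swapping $\ell\leftrightarrow r$ and the arguments of $\mcomp$.

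For the converse inclusions under the locality assumption, the same reduction to pointwise statements works: Lemma~\ref{P:coherent-lr-semigroup-props}(1) gives the equational locality laws $\ell(xy)=\ell(x\ell(y))$ and $r(xy)=r(r(x)y)$ in any local $\ell r$-multisemigroup, so in particular $\ell(x\ell(y))\subseteq\ell(xy)$. To show $\ell(A\ell(B))\subseteq\ell(AB)$, take $u\in\ell(A\ell(B))$, so $u=\ell(w)$ for some $w\in x\mcomp\ell(y)$ with $x\in A$ and $\ell(y)\in\ell(B)$, i.e. $y\in B$. Then $\ell(w)\in\ell(x\ell(y))=\ell(xy)$, and since $xy\subseteq AB$ we get $\ell(w)\in\ell(AB)$, as required. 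Again the $r$-case is dual.

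I do not expect a genuine obstacle here: the only subtlety is bookkeeping the lifted-map definitions (that $\ell$ and $r$ on $\Pow X$ are just direct images, so membership in $\ell(C)$ means being $\ell(c)$ for some $c\in C$, and that images commute with unions) and invoking the correct clause of the earlier lemmas. One minor point worth stating explicitly is that the equational locality laws of Lemma~\ref{P:coherent-lr-semigroup-props} are what is needed for the converse, not merely $\ell r$-locality as a definedness condition; but this is already available. No Isabelle-level computation is needed beyond what the earlier lemmas package.
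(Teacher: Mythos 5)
Your proof is correct, but it takes a different route from the paper's for the forward inclusions. The paper derives the weak locality laws at powerset level indirectly: it first obtains the least-left-absorption property (lla) from the lifted laws of Lemma~\ref{P:lr-magma-lift}, and then invokes the quantale structure of $\Pow X$ (Proposition~\ref{P:lr-pow}) together with the adjunction characterising domain, handling the $r$-inclusion by opposition. You instead argue element-wise, reducing the set-level inclusion $\ell(AB)\subseteq\ell(A\ell(B))$ to the pointwise weak locality law $\ell(xy)\subseteq\ell(x\ell(y))$ of Lemma~\ref{P:lr-semigroup-props}(1) and the fact that the lifted maps are direct images commuting with unions. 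Your version is more elementary and self-contained at the level of $X$ (it does not need the quantale-theoretic machinery of Section~\ref{S:modal-quantales}), at the cost of leaning on a pointwise lemma whose proof the paper delegates to Isabelle; the paper's version buys a uniform quantale-level argument that fits its general lifting narrative. For the converse inclusions under locality, both arguments are essentially the same direct computation with unions — the paper expands $\ell(A\ell(B))$ using the definedness condition $D_{xy}\Leftrightarrow r(x)=\ell(y)$ and the retraction law, while you appeal to the packaged equational locality laws of Lemma~\ref{P:coherent-lr-semigroup-props}(1); these amount to the same thing. No gaps.
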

The proofs have again been checked with Isabelle and can be found in
Appendix~\ref{A:three}. Weak locality holds in any weakly local modal
semiring and quantale (which satisfy export axioms, see
Section~\ref{S:modal-quantales}). 

The results of the previous two lemmas can be summarised as follows.

\begin{theorem}\label{P:lr-pow-lifting}
  Let $X$ be an $\ell r$-multimagma. 
\begin{enumerate}
\item Then $(\Pow X,\subseteq,\odot,E,\dom,\cod)$ is a boolean modal
  prequantale in which $\dom (A) = \ell(A)$, $\cod (A)= r (A)$, for
  all $A\subseteq X$, and the complete boolean algebra is atomic.
\item It is a weakly local modal quantale if $X$ is an $\ell r$-multisemigroup.
\item It is a modal quantale if $X$ satisfies locality.
\end{enumerate}
\end{theorem}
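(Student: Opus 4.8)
The plan is to assemble Theorem~\ref{P:lr-pow-lifting} from the three preceding lemmas, verifying in each case that the collection of axioms required for the relevant flavour of modal quantale is exactly what has already been established. First I would recall from Proposition~\ref{P:lr-pow} (together with its following remarks) that for an $\ell r$-multimagma $X$ the structure $(\Pow X,\subseteq,\odot,E)$ is an atomic boolean prequantale, and a quantale as soon as $X$ is an $\ell r$-multisemigroup. This fixes the lattice-theoretic and multiplicative skeleton; what remains is to check that $\dom=\ell$ and $\cod=r$ (the lifted image maps) satisfy the modal axioms layered on top of it.

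For part~(1), I would go through the axioms of a boolean modal prequantale one by one and point to the corresponding clause of Lemma~\ref{P:lr-magma-lift}: absorption is clause~(2) (indeed strengthened to an identity $\ell(A)\cdot A = A$), the subidentity law $\ell(A)\subseteq E$ is clause~(6) (using $X_\ell = E$), strictness $\ell(\emptyset)=\emptyset$ is part of clause~(4), binary sup preservation is clause~(4), and the export axioms $\dom(\dom(A)\cdot B)=\dom(A)\dom(B)$, which replace locality in a prequantale, are clause~(7). The compatibility axioms $\dom\circ\cod=\cod$ and $\cod\circ\dom=\dom$ come from clause~(1). Thus all prequantale-modal axioms hold; atomicity of the boolean algebra is clause immediate since atoms of $\Pow X$ are singletons. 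Dually the same clauses give the codomain axioms, so $(\Pow X,\subseteq,\odot,E,\ell,r)$ is a boolean modal prequantale.

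For part~(2), when $X$ is an $\ell r$-multisemigroup, $\odot$ becomes associative so $\Pow X$ is a genuine quantale by Theorem~\ref{P:lr-conv} specialised to $Q=2$; the modal axioms are inherited from part~(1), and a weakly local modal quantale is by definition a modal quantale with the export axioms in place of locality — which is precisely what we already have — while Lemma~\ref{P:lr-semigroup-lift} additionally supplies the weak locality inequalities $\ell(AB)\subseteq\ell(A\ell(B))$ and its codomain dual. For part~(3), if $X$ is moreover local, the converse inclusions in Lemma~\ref{P:lr-semigroup-lift} hold, upgrading weak locality to the genuine locality equation $\dom(\alpha\beta)=\dom(\alpha\dom(\beta))$, which is the only axiom separating a weakly local modal quantale from a modal quantale; hence $\Pow X$ is a modal quantale.

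The proof is essentially bookkeeping — matching lemma clauses to axiom names — so there is no real obstacle beyond being careful that the \emph{identity} form of absorption (not merely the inequality $\alpha\le\dom(\alpha)\alpha$) is what Lemma~\ref{P:lr-magma-lift}(2) gives, and that the prequantale-to-quantale and weakly-local-to-local jumps are each controlled by exactly one input (associativity of $\odot$ in the first case, the converse inclusion in Lemma~\ref{P:lr-semigroup-lift} in the second). The one point worth a sentence in the write-up is that these three parts are genuinely nested: each adds one hypothesis on $X$ and correspondingly strengthens one family of axioms on $\Pow X$, with everything else carried over verbatim.
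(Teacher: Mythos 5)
Your proposal is correct and follows essentially the same route as the paper: the paper's proof is a one-line assembly of Proposition~\ref{P:lr-pow} (and the prequantale remark after it) with Lemmas~\ref{P:lr-magma-lift} and~\ref{P:lr-semigroup-lift}, and your clause-by-clause matching of lemma items to modal (pre)quantale axioms is exactly the bookkeeping the paper leaves implicit. The only cosmetic difference is that you invoke Theorem~\ref{P:lr-conv} at $Q=2$ where the paper cites Proposition~\ref{P:lr-pow} directly; these are interchangeable here.
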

\begin{proof}
  We have derived the respective variants of modal prequantale and
  quantale axioms in Lemmas~\ref{P:lr-magma-lift} and
  \ref{P:lr-semigroup-lift} for $\ell r$-magmas, $\ell r$-semigroups
  and local $\ell r$-multisemigroups. They hold in addition to the
  boolean prequantale and quantale axioms lifted via
  Proposition~\ref{P:lr-pow} and the remark following it.
\end{proof}

This shows the particular role of weak locality and locality in the
three stages of lifting. The construction shown is one direction of
the well known J\'onsson-Tarski duality between relational structures
and boolean algebras with operators~\cite{JonssonT51}, which
generalises to categories of relational structures and boolean
algebras with operators~\cite{Goldblatt89}.
Theorem~\ref{P:lr-pow-lifting} is an instance of this duality. Like in
modal logic, there are correspondences between relational structures
and boolean algebras with operators. The identities lifted in
Lemma~\ref{P:lr-magma-lift} and \ref{P:lr-semigroup-lift} are one
direction of these. They are further investigated in a more general
setting in Section~\ref{S:correspondences}.

\begin{example}[Modal Powerset Quantales over $\ell
  r$-Semigroups]\label{ex:mp-quantales}~
  \begin{enumerate}
  \item Any category as a local partial $\ell r$-semigroup can be
    lifted to a modal powerset quantale. It is boolean and has the
    arrows of the category as atoms. The domain algebra is the entire
    boolean subalgebra below the unit of the quantale, the set of all
    objects of the category (or the identity arrows). In this sense, a
    modal algebra can be defined over any category. 

  \item As an instance, in the modal powerset quantale over the pair
    groupoid on $X$, that is, the modal quantale of binary relations,
    the domain and codomain elements are the relational domains and
    codomains of relations mentioned in Example~\ref{ex:rel-dom-cod}.
    Domain and codomain elements are precisely the subidentity
    relations below $\mathit{Id}_X$.  In the associated matrix
    algebra, these correspond to (boolean-valued) identity matrices
    and further to predicates, cf.\ Example
    \ref{ex:lr-mgs}(\ref{ex:matrices}). The domain and codomain
    operators allow constructing predicate transformers and algebraic
    variants of dynamic logics, with applications in program
    verification; see Section~\ref{S:dualities}.

  \item In this and the next example, locality at powerset-level
    fails.  Recall that the partial $\ell r$-semigroup in the broken
    monoid (Example \ref{ex:lr-mgs}(\ref{ex:broken-mon-bis})) is only
    weakly local. The powerset quantale is only weakly local as
    well. To check this, we simply replay the non-locality proof for
    the partial $\ell r$-semigroup with $A=\{a\}$:
    $\dom (AA) = \dom (\emptyset) = \emptyset \subset \{1\} = \dom (A
    \{1\}) = \dom (A \dom (A))$.
    Locality of codomain is ruled out by duality.
  \item Locality of domain and codomain of the powerset algebra over
    the non-local partial abelian monoid of heaplets is ruled out with
    singleton sets as in the previous example, using the non-locality
    argument from Example~\ref{ex:lr-mgs}(\ref{ex:heaplets2}).  The
    powerset quantale of heaplets is the assertion quantale of
    algebraic separation logic with separating conjunction as
    quantalic composition. In separation logic, another notion of
    locality is related a frame rule of its inference system. It is
    completely unrelated to the notion used in this article.\qed
  \end{enumerate}
\end{example}

In these examples, antidomain and anticodomain operations can be
defined along the lines of Section~\ref{S:modal-quantales}.  Most of
the models of domain and modal semirings considered previously are in
fact powerset structures lifted from
categories. Theorem~\ref{P:lr-pow-lifting} yields a uniform
construction recipe for all of them. Further examples of modal
prequantales, weakly local modal quantales and modal quantales lifted
from underlying $\ell r$-structures can be found in
Section~\ref{S:examples}.

The two final examples of this section show that Axioms (\ref{eq:3c})
and (\ref{eq:D3c}) for function systems (shown in
Appendix~\ref{A:one}) do not lift to powersets.

\begin{example}\label{ex:no-twisted}~
\begin{enumerate}
\item The category $1\stackrel{a}{\to}2$, also known as walking arrow,
  forms the partial local $\ell r$-semigroup with elements
  $X=\{1,a,2\}$, $\ell$ and $r$ defined by
  $\ell (1) = r (1) = 1 =\ell (a)$ and $\ell (2) = r (2) =2 = r (a)$
  and composition $11=1$, $1a=a= a2$ and $22=2$.  Then, for
  $A=\{1,a\}$ and $B=\{2\}$,
  \begin{equation*}
    A\cdot \dom(B) = A\cdot B = \{a\} \subset A = \{1\}\cdot A =
    \dom(A\cdot B)\cdot A
  \end{equation*}
  refutes (\ref{eq:3c}) in $\Pow X$.  The opposite law (\ref{eq:D3c})
  for $\cod$ is refuted by a dual example.
\item Axiom (\ref{eq:3c}) also fails in modal quantales of
  relations. Encoding the walking arrow on the set $X=\{a,b\}$ using
  the relations
\begin{equation*}
\begin{tikzcd}[column sep=10mm,row sep=3mm]
a\arrow[->,r,"R"]\arrow[->,out=140,in=220,loop,swap,"R"] & b\arrow[->,out=40,in=320,loop,"S"]
\end{tikzcd}
\end{equation*}
yields $R\dom(S) = RS= \{(a,b)\} \subset R = \{(a,a)\}R = \dom(RS)R$
in $\Pow (X\times X)$.  The expression $\dom(RS)=\dom(R\dom(S))$
models the relational preimage of $\dom(S)$ under $R$.  Obviously,
executing $R$ from all those inputs that may lead into $\dom(S)$ and
restricting the outputs of $R$ to $\dom(S)$ is only the same when $R$
is a function.

A dual example for $\cod$ and (\ref{eq:D3c}) uses the converses of $R$
and $S$.  From the discussion above it is evident that (\ref{eq:D3c})
does not hold for general functions, yet it does for monos in
$\mathsf{Rel}$.\qed
\end{enumerate}
\end{example}

Hence we remain within the realm of modal quantales as opposed to
function systems.  Weak variants of Schweizer and Sklar's axioms
(\ref{eq:3c}) and (\ref{eq:D3c}),
$\alpha\dom (\beta) \le\dom (\alpha\beta) \alpha$ and
$\cod (\alpha) \beta\le \beta \cod (\alpha\beta)$, can be derived in
any modal semiring, as already mentioned, yet they need not hold in
modal prequantales. The equational $\ell r$-multisemigroup variants of
Axioms (\ref{eq:3c}) and (\ref{eq:D3c}) do not lift to powersets.  A
well known theorem by Gautam~\cite{Gautam57} shows that identities
lift to the powerset level if and only if all variables occur on each
side of the identity (or else the two sides are identical). Neither
the generalisations of (\ref{eq:3c}) and (\ref{eq:D3c}) satisfy this
condition, nor their equational specialisations.

\begin{remark}
  The functor $G:\mathbf{Set}\to \mathbf{Rel}$ that maps functions to
  their graph associates convolutions of (graphs of) functions with
  function composition $G(f)\odot G(g)= G(f; g)=G(g\circ f)$ and
  distributes over $\dom$:
  $G(\dom(f))= \textit{Id}_{\dom(G(f))} = \dom(G(f)) $.  Axiom
  (\ref{eq:3c}) can then be derived in the convolution algebra:
  \begin{equation*}
    G(f)\odot \dom(G(g)) = G(f;\dom(g)) =
    G(\dom(f;g);f) = 
    \dom(G(f)\odot G(g))\odot G(f)
  \end{equation*}
  Yet this depends on the specific form of convolution for (graphs of)
  functions and thus goes beyond the general lifting by convolution.
  The argument for (\ref{eq:D3c}) is analogous.
\end{remark}

An asymmetry between $\ell r$-multisemigroups and modal quantales
remains. While the domain axioms for quantales are purely equational,
those for $\ell r$-multisemigroups are based on the implication
$D_{xy}\Rightarrow r(x)=\ell(y)$, and we do not know an equational
axiomatisation for this class. We leave this as an open question.


\section{Modal Convolution Quantales}\label{S:conv-alg}

In this section we prove one of the main theorems in this article, and
perhaps its most useful one.  It refines Proposition~\ref{P:lr-pow} as
a powerset lifting. Here we aim at a similar refinement of
Theorem~\ref{P:lr-conv} as a lifting to functions valued in domain
quantales.  

Once again we aim to expose the conditions on the algebras used in the
lifting, including modal prequantales and weakly local modal
quantales. Yet first we need to generalise the definition of domain
and codomain at the powerset level to make it suitable for convolution
quantales. For every $\ell r$-multimagma $X$, modal prequantale $Q$
and functions $f,g:X\to Q$ we define the operations $\Dom$ and $\Cod$
by
\begin{equation*}
  \Dom(f) = \Sup_{x\in X} \dom(f(x))\cdot \delta_{\ell(x)}\qquad
  \text{ and }\qquad \Cod(f)= \Sup_{x\in X} \cod(f(x))\cdot
  \delta_{r(x)}. 
\end{equation*}
These definitions imply that $\Dom(f)(x)= \Cod(f)(x)= \bot$ for
$x\notin E$.  Restricted to functions $E\to Q$, they are equivalent to
$\Dom(f)\circ \ell= \dom\circ f$ and $\Cod(f)\circ r= \cod\circ f$.

\begin{theorem}\label{P:lr-lifting}
  Let $(X,\mcomp,\ell,r)$ be an $\ell r$-multimagma and $Q$ a modal
  prequantale.
 \begin{enumerate}
\item Then $(Q^X,\le,\ast,\id_E,\Dom,\Cod)$ is a
  modal prequantale.
\item It is a weakly local modal quantale if $X$ is an $\ell
  r$-multisemigroup and $Q$ a weakly local modal quantale.
\item It is a modal quantale if $X$ is also local and $Q$ a modal
  quantale.
  \end{enumerate}
\end{theorem}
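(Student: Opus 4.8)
The plan is to verify the (co)domain quantale axioms for $(Q^X,\le,\ast,\id_E,\Dom,\Cod)$ by reducing each one, via the definitions of $\Dom$, $\Cod$, $\ast$ and the basis expansion $f = \Sup_{x\in X} f(x)\cdot\delta_x$, to the corresponding axiom in $Q$ together with an identity in the $\ell r$-multimagma/semigroup $X$. I would treat $\Dom$ throughout; everything for $\Cod$ follows by the duality that interchanges $\ell$ with $r$ and the arguments of $\odot$ (and hence $\dom$ with $\cod$), which is an instance of the duality noted for $\ell r$-multimagmas and for quantales.

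\medskip
\noindent\textbf{Step 1 (sup-preservation and strictness).} Since $\dom$ preserves all sups in $Q$ and sups in $Q^X$ are pointwise, $\Dom\bigl(\Sup F\bigr) = \Sup_{x} \dom\bigl(\Sup_{f\in F} f(x)\bigr)\cdot\delta_{\ell(x)} = \Sup_{f\in F}\Sup_x \dom(f(x))\cdot\delta_{\ell(x)} = \Sup_{f\in F}\Dom(f)$, using that the scalar action $\alpha\mapsto\alpha\cdot\delta_e$ preserves sups. Strictness $\Dom(\bot_{Q^X}) = \bot_{Q^X}$ is the case $F=\emptyset$ (equivalently $\dom(\bot)=\bot$). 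Binary sup-preservation is the two-element case, and the zero laws on basis vectors are needed only for the semiring-level refinement discussed after the theorem.

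\medskip
\noindent\textbf{Step 2 (subidentity).} I must show $\Dom(f)\le\id_E$, i.e. $\Dom(f)(x)\le[x\in E]$ for all $x$. For $x\notin E$ we already noted $\Dom(f)(x)=\bot$. For $x=e\in E=X_\ell$, $\Dom(f)(e) = \Sup\{\dom(f(y))\mid \ell(y)=e\}\le 1$ since each $\dom(f(y))\le 1$ in $Q$ and sups of subidentities in a quantale are subidentities; and $\id_E(e)=1$. (Here I use $E=X_\ell$ from the $\ell r$-multimagma laws.)

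\medskip
\noindent\textbf{Step 3 (absorption).} The claim is $f\le\Dom(f)\ast f$, equivalently $\Dom(f)\ast f = f$ since $\Dom(f)\le\id_E$ gives the $\le$ direction for free once $\id_E$ is a left unit. Unfolding, $(\Dom(f)\ast f)(x) = \Sup_{x\in u\odot z}\Dom(f)(u)\cdot f(z)$; the only contributing $u$ lie in $E$, and for $u=e\in E$ we have $e\odot z\neq\emptyset$ iff $e=\ell(z)$, in which case $e\odot z=\{z\}$ by the absorption law $\ell(z)\odot z=\{z\}$ of $\ell r$-multimagmas. Hence $(\Dom(f)\ast f)(x) = \Sup\{\Dom(f)(\ell(z))\cdot f(z)\mid x=z\} = \Dom(f)(\ell(x))\cdot f(x) = \dom(\Sup\{f(y)\mid \ell(y)=\ell(x)\})\cdot f(x)$. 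Now $\dom(f(x))\le\dom(\Sup\{f(y)\mid\ell(y)=\ell(x)\})$ by order-preservation of $\dom$, so this is $\ge\dom(f(x))\cdot f(x)=f(x)$ by absorption in $Q$; the reverse follows from the subidentity law. So absorption holds, indeed as the identity $\Dom(f)\ast f = f$.

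\medskip
\noindent\textbf{Step 4 (export / locality — the main obstacle).} This is the crux and the only place where the three-way stratification of the theorem appears. I need $\Dom(f\ast\Dom(g)) = \Dom(f)\,\Dom(g)$ in the prequantale case, and $\Dom(f\ast g)=\Dom(f\ast\Dom(g))$ under the stronger hypotheses. First compute $\Dom(g) \ast$-multiplied on the right of $f$: $(f\ast\Dom(g))(x) = \Sup_{x\in u\odot e,\ e\in E}f(u)\cdot\Dom(g)(e)$, and $u\odot e\neq\emptyset$ iff $r(u)=e$, in which case $u\odot e=\{u\}$ by $u\odot r(u)=\{u\}$; so $(f\ast\Dom(g))(x) = f(x)\cdot\Dom(g)(r(x)) = f(x)\cdot\dom\bigl(\Sup\{g(w)\mid\ell(w)=r(x)\}\bigr)$. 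Then $\Dom(f\ast\Dom(g))(e)$ for $e\in E$ equals $\Sup\{\dom(f(x)\cdot\dom(\Sup\{g(w)\mid \ell(w)=r(x)\}))\mid \ell(x)=e\}$. Using the locality axiom of $Q$ (prequantale export suffices here since the inner argument is already a domain element — I would verify $\dom(\alpha\cdot\dom\beta)=\dom(\alpha\cdot\beta)$ is available, falling back to export if needed) and the fact that $\Sup_w\{g(w)\mid\ell(w)=r(x)\}$ is exactly $\Dom(g)$ evaluated appropriately, this should collapse to $\Sup\{\dom(f(x)\cdot g(w))\mid \ell(x)=e,\ \ell(w)=r(x)\}$. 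Meanwhile $(\Dom(f)\,\Dom(g))(e) = (\Dom(f)\ast\Dom(g))(e)$; since both factors are supported on $E$ and $e_1\odot e_2\neq\emptyset$ iff $e_1=e_2$ with $e_1\odot e_1=\{e_1\}$ by orthogonality of units, this is $\Dom(f)(e)\cdot\Dom(g)(e) = \dom(\Sup\{f(x)\mid\ell(x)=e\})\cdot\dom(\Sup\{g(w)\mid\ell(w)=e\})$, which by the export law in $Q$ and sup-distributivity of $\dom$ equals $\Sup\{\dom(f(x)\cdot\dom(g(w)))\mid \ell(x)=e,\ \ell(w)=e\} = \Sup\{\dom(f(x)\cdot g(w))\mid \ell(x)=e=\ell(w)\}$. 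So matching the two requires comparing the index sets $\{(x,w)\mid\ell(x)=e,\ \ell(w)=r(x)\}$ versus $\{(x,w)\mid\ell(x)=e=\ell(w)\}$; the discrepancy between $\ell(w)=r(x)$ and $\ell(w)=e$ is precisely governed by the definedness/locality of $X$, and this is exactly where I expect to invoke $\ell(xy)=\{\ell(x)\}$ when $D_{xy}$ (Lemma~\ref{P:lr-semigroup-props}) and the weak/equational locality laws (Lemmas~\ref{P:lr-semigroup-props}--\ref{P:coherent-lr-semigroup-props}): for a general $\ell r$-multimagma only the export identity survives, for an $\ell r$-multisemigroup one gets the weak-locality inequality, and for a local one full equality. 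I would carefully track which inclusion comes for free and which needs locality, mirroring the powerset-level argument of Lemmas~\ref{P:lr-magma-lift} and~\ref{P:lr-semigroup-lift} that this theorem is meant to generalise.

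\medskip
\noindent\textbf{Step 5 (compatibility and assembly).} The compatibility axioms $\Dom\circ\Cod=\Cod$, $\Cod\circ\Dom=\Dom$ reduce, via $\Dom(h)=\Sup_x\dom(h(x))\delta_{\ell(x)}$ and the compatibility laws $\ell\circ r=r$, $r\circ\ell=\ell$ in $X$ together with $\dom\circ\cod=\cod$, $\cod\circ\dom=\dom$ in $Q$, to a direct computation; the retraction laws $\ell\circ\ell=\ell$ give $\Dom\circ\Dom=\Dom$. Finally, $(Q^X,\le,\ast,\id_E)$ is already a prequantale (resp. quantale under associativity) by Theorem~\ref{P:lr-conv} and the remark following it, so combining Steps 1--5 yields part (1); adding the $\ell r$-multisemigroup and weak-locality hypotheses upgrades Step 4 to the weak-locality inequalities, giving part (2); adding locality of $X$ and of $Q$ gives the equational locality axiom, hence part (3). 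Throughout, the main technical care is bookkeeping the index-set manipulations in Step 4 and checking that only sups that actually occur in $Q^X$ are used, so that no distributivity beyond what a quantale provides is invoked.
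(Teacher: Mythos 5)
Your overall strategy---reduce each modal axiom of $Q^X$ to the matching axiom in $Q$ plus an identity in $X$, handling $\Cod$ by duality---is exactly the paper's, and Steps 1--3 and 5 are essentially the paper's own computations. The problem is Step 4. You take the export axiom of a modal prequantale to be $\Dom(f\ast\Dom(g)) = \Dom(f)\ast\Dom(g)$, but the paper's export axiom places the inner domain on the \emph{left} factor: $\dom(\dom(\alpha)\cdot\beta) = \dom(\alpha)\cdot\dom(\beta)$, so the identity to verify is $\Dom(\Dom(f)\ast g) = \Dom(f)\ast\Dom(g)$. The identity you state is false in general: already for binary relations, $\dom(R\odot\dom(S))$ is the preimage of $\dom(S)$ under $R$, while $\dom(R)\cdot\dom(S)$ is the intersection of the two domains (take $R=\{(a,b)\}$ and $S=\{(b,c)\}$). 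The mismatch of index sets you observe at the end of Step 4, namely $\ell(w)=r(x)$ versus $\ell(w)=\ell(x)=e$, is a symptom of this and not something locality of $X$ can repair---no hypothesis of the theorem makes those two suprema equal.

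The fix is to prove the correct export identity $\Dom(\Dom(f)\ast g) = \Dom(f)\ast\Dom(g)$. Unfolded, its left side involves $\ell(\ell(w)\odot z)$ and $\dom(\dom(\alpha)\cdot\beta)$ and its right side $\ell(w)\odot\ell(z)$ and $\dom(\alpha)\cdot\dom(\beta)$; these match by the export law of Lemma~\ref{P:lr-magma-props}(4), valid in \emph{every} $\ell r$-multimagma, together with export in $Q$---no locality is needed, which is precisely why export (unlike locality) survives at the prequantale level of part (1). Your treatment of the locality axiom proper---$\Dom(f\ast g)\le\Dom(f\ast\Dom(g))$ for part (2), equality for part (3), driven by weak respectively equational locality in $X$ and the corresponding law in $Q$---is the right idea and is what the paper does; it just has to be kept separate from export rather than folded into the same (miscast) identity.
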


\begin{proof}
  Relative to Theorem~\ref{P:lr-conv} and the remark following it we
  need to check the domain and codomain axioms as well as the
  compatibility axioms. We show proofs up-to duality. We point out
  where an $\ell r$-multimagma $X$ together with a modal prequantale
  $Q$ or an $\ell r$-multisemigroup together with a weakly local modal
  quantale suffices for the lifting: this is the case in (1)-(5)
  below.  Here it is convenient to view $\alpha\cdot \delta_x$ as an
  element of the convolution algebra or an associated (imaginary)
  $Q$-module.
\begin{enumerate}
\item First we show sup-preservation in $Q^X$, assuming the
  corresponding law in $Q$:
\begin{align*}
  \Dom \left(\Sup F\right)  
&= \Sup_{x} \dom \left(\Sup \{f(x) \mid f\in F\}\right) \cdot
  \delta_{\ell(x)}\\
&= \Sup\left\{\Sup_{x} \dom(f(x))\cdot \delta_{\ell(x)} \mid f\in F\right\}
  \\
&=\Sup\{\Dom(f)\mid f \in F\}.
\end{align*}
Preservation of binary sups and domain strictness, $\Dom(\lambda x.\
\bot) =\bot$, then follow. 

\item For the first compatibility axiom in $Q^X$, assuming the
  corresponding laws in $X$ and $Q$,
\begin{align*}
  (\Dom\circ \Cod)(f) 
&= \Sup_{x} \dom\left(\Sup_{y}\cod\left(f(y)\right)\cdot
  \delta_{r(y)}(x)\right)\cdot \delta_{\ell(x)}\\
&= \dom\left(\Sup_{y}\cod(f(y))\right)\cdot \delta_{\ell(r(y))}\\
&= \Sup_{y}\dom(\cod(f(y)))\cdot \delta_{\ell(r(y))}\\
&= \Sup_{y}\cod(f(y))\cdot \delta_{r(y)}\\
&= \Cod(f).
\end{align*}

\item \label{en:lr-lifting.pf.subid} For the domain subidentity axiom
  in $Q^X$, assuming the corresponding law in $Q$,
\begin{equation*}
\Dom (f) = \Sup_{x}\dom (f (x))\cdot \delta_{\ell(x)} \le \Sup_{x
  \in E} 1\cdot \delta_x  = \id_E.
\end{equation*}

\item For domain absorption in $Q^X$, assuming the corresponding laws
  in $X$ and $Q$, 
\begin{align*}
  \Dom (f) \ast f
&= \Sup_{w,x,y} \left(\Sup_{z}\dom(f(z))\cdot
  \delta_{\ell(z)}(x)\right)\cdot f(y) \cdot [w \in x\odot y]\cdot \delta_w\\
&= \Sup_{w,y,z} \dom(f(z))\cdot f(y) \cdot [w\in \ell(z)\odot
  y]\cdot \delta_w\\
&\ge \Sup_{w,y} \dom(f(y))\cdot f(y) \cdot [w\in \ell(y)\odot y]\cdot \delta_w\\
&= \Sup_{y} f(y) \cdot \delta_{y}\\
&= f,
\end{align*}
and $\Dom(f) \ast f\le f$ follows from \eqref{en:lr-lifting.pf.subid}.

\item For domain export in $Q^X$, assuming the corresponding laws in
  $X$ and $Q$,
\begin{align*}
  \Dom (\Dom (f) \ast g)
&= \Sup_{x} \dom\left(\Sup_{y,z}\left(\Sup_w\dom(f(w))\cdot \delta_{\ell(w)}(y)\right)\cdot g(z)\cdot
  [x\in y\odot z]\right) \cdot \delta_{\ell(x)}\\
&=\Sup_{x,z,w} \dom(\dom(f(w)) \cdot g(z))\cdot 
  [\ell(x)\in \ell(\ell(w)\odot z)]\\
&=\Sup_{x,z,w} \dom(f(w)) \cdot \dom(g(z))\cdot 
  [\ell(x)\in\ell(w)\odot \ell(z)]\cdot \delta_{\ell(x)}\\
&=\left(\Sup_w\dom(f(w))\cdot \delta_{\ell(w)}\right)\ast
  \left(\Sup_z\dom(g(z))\cdot \delta_{\ell(z)}\right)\\
&=\Dom(f)\ast \Dom(g).
\end{align*}

This is not a domain quantale axiom, but a domain axiom for
prequantales, as already mentioned.

\item For weak domain locality in $Q^X$, assuming the corresponding
  laws in $X$ and $Q$, 
\begin{align*}
\Dom (f \ast g)
&= \Sup_x \dom\left(\Sup_{y,z}f(y)\cdot g(z) \cdot [x\in y\odot z]\right)
  \cdot \delta_{\ell(x)}\\
 &= \Sup_{x,y,z}\dom(f(y)\cdot g(z)) \cdot
  [\ell(x)\in \ell(y\odot z)]\\
 &\le \Sup_{x,y,z}\dom(f(y)\cdot \dom(g(z))) \cdot
   [\ell(x)\in \ell(y\odot \ell(z))]\\
 &=\Sup_x \dom\left(\Sup_{y,z}f(y)\cdot \left(\Sup_w\dom(g(w))\cdot
   \delta_{\ell(w)}(z)\right) \cdot [x\in y\odot z]\right)\cdot \delta_{\ell(x)}\\
 &= \Dom (f \ast \Dom (g)).
\end{align*}

\item For domain locality, we replay the weak locality proof with
  equations. This requires locality in $X$ and $Q$.\qedhere
\end{enumerate}
\end{proof}

\begin{remark}
  Theorem~\ref{P:lr-lifting} comprises the case where $Q$ is merely a
  (unital) quantale, that is, $\dom$ and $\cod$ map $\bot$ to $\bot$
  and all other elements to $1\neq \bot$ (or another fixed element
  $\alpha\neq \bot$). A further specialisation leads to the quantale $2$ of
  booleans, and hence Theorem~\ref{P:lr-pow-lifting}.
\end{remark}

Full modal correspondence results for $Q$-valued functions are more
complicated than for powersets. They are investigated in the next
section. Finally, examples for all three cases of
Theorem~\ref{P:lr-lifting}, beyond mere powerset liftings, are
presented in Section~\ref{S:examples}.

Next we return to our two running examples.
\begin{example}[Modal Convolution Quantales over $\ell
  r$-Semigroups]\label{ex:modal-convolution-quantales}~
  \begin{enumerate}
  \item In the construction of category quantales, the
    $\ell r$-structure of the underlying category lifts to the modal
    structure of the convolution quantale. In this sense, a
    $Q$-valued modal algebra can be defined over any category.
  \item As an instance, for any modal quantale $Q$, the convolution
    algebra $Q^X$ over the pair groupoid $X$ forms a modal quantale.
    In this algebra,
    \begin{align*}
      \Dom(f)(a,b) &= \Sup_c
      \dom(f(a,c))\cdot
      \delta_a(b)=\dom\left(\Sup_cf(a,c)\right)\cdot \delta_a(b),\\
\Cod(f)(a,b) &= \Sup_c\cod(f(c,a))\cdot \delta_a(b)
               =\cod\left(\Sup_cf(c,a)\right)\cdot \delta_a(b).
\end{align*}
The subalgebra of weighted domain and codomain elements is then the
algebra of all weighted elements below the identity relation.  These
can be identified with $Q$-valued predicates. For $Q=2$, this reduces
to the standard definitions of $\dom$ and $\cod$ in the quantale of
binary relations.  When $Q$-valued relations are viewed as possibly
infinite dimensional matrices, domain and codomain elements correspond
to diagonal matrices with values given by domain and codomain elements
below $1$ in $Q$ along the diagonal according to the formulas above
and $\bot$ everywhere else.  For instance,
\begin{align*}
  \Dom \left( 
    \begin{pmatrix}
      \alpha & \beta\\
\gamma & \delta
    \end{pmatrix}
\right)
&=
\begin{pmatrix}
  \dom(\alpha\sup \beta) & \bot\\
\bot & \dom(\gamma \sup\delta)
\end{pmatrix},\\
 \Cod \left( 
    \begin{pmatrix}
      \alpha & \beta\\
\gamma & \delta
    \end{pmatrix}
\right)
&=
\begin{pmatrix}
  \cod(\alpha\sup \gamma) & \bot\\
\bot & \cod(\beta\sup\delta)
\end{pmatrix}.
\end{align*}

This can be seen as a refinement of the matrix theories of Example
\ref{ex:lr-mgs}(\ref{ex:matrices}).\qed
  \end{enumerate}
\end{example}

The matrix example does not actually require a quantale as value
algebra: in a finite weighted relation, the summation used in
relational composition, that is, matrix multiplication, is over a
finite set and can therefore be represented by a finite
supremum.  In the absence of domain and codomain in the value
algebra, even a semiring can be used. 

More generally, we can require that the multioperation $\mcomp$ of the
$\ell r$-multisemigroup satisfies a finite decomposition property.
Here are two classical examples beyond matrices.
\begin{itemize}
\item Sch\"utzenberger and Eilenberg's approach to weighted formal
  languages~\cite{book/DrosteKV09} generalises language product to a
  convolution of functions $\Sigma^\ast\to S$ from the free monoid
  $\Sigma^\ast$ over the finite alphabet $\Sigma$ into a semiring $S$,
  as mentioned in the introduction. A semiring suffices in the
  convolution because the number of prefix/suffix pairs into which any
  finite word in $\Sigma^\ast$ can be split is obviously finite. The
  resulting convolution algebra is again a semiring.
\item Rota's incidence algebra~\cite{Rota64} of functions $P\to R$
  from a poset $P$ to a commutative ring $R$ requires $P$ to be
  locally finite, that is, every closed segment
  $[x,y]=\{z\mid x\le z\le y\}$ must be finite. The incidence algebra
    is then an associative algebra in which the sup of the convolution
    is replaced by a summation in $R$. See also
    Section~\ref{S:examples}.
  \end{itemize}
  Alternatively, in groups rings, it is usually assumed that functions
  $G\to R$ have finite support, yet we focus on the property used for
  matrices, formal powerseries and incidence algebras. 

  A multimagma $(X,\mcomp)$ is \emph{finitely decomposable} if for
  each $x\in X$ the set $\{(y,z)\mid x \in y\mcomp z\}$ is finite. The
  following variant of Theorem~\ref{P:lr-conv} is then immediate from
  corresponding results for relational monoids~\cite{CranchDS20a}.

\begin{theorem}\label{P:interchange-semiring-correspondence}
  If $X$ is a finitely decomposable $\ell r$-multisemigroup and $(S,+,\cdot,0,1)$ a
  semiring, then the convolution algebra $S^X$ is a semiring.
\end{theorem}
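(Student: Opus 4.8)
The plan is to mimic the proof of Theorem~\ref{P:lr-conv}, replacing ``quantale'' by ``semiring'' and the infinitary supremum by a finite sum, and to reduce the statement to the known semiring result for finitely decomposable relational monoids~\cite{CranchDS20a} via the isomorphism between $\ell r$-multisemigroups and relational monoids. First I would observe that by Proposition~\ref{P:lr-semigroup-rel-monoid} every $\ell r$-multisemigroup $X$ is a multimonoid $(X,\mcomp,E)$, hence corresponds, up to the isomorphism explained in Appendix~\ref{A:two}, to a relational monoid $(X,R,E)$ with $R(x,y,z)\Leftrightarrow x\in y\mcomp z$. The finite decomposition property of $X$, namely that $\{(y,z)\mid x\in y\mcomp z\}$ is finite for every $x$, is literally the statement that each fibre $\{(y,z)\mid R(x,y,z)\}$ of $R$ is finite, so $X$ is finitely decomposable precisely when the associated relational monoid is. The corresponding semiring result for finitely decomposable relational monoids is cited from~\cite{CranchDS20a}, so the theorem follows at once.

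If one instead wants a self-contained argument, the key steps are the following. Define convolution on $S^X$ by $(f\ast g)(x)=\sum_{x\in y\mcomp z} f(y)\cdot g(z)$, which is a finite sum by finite decomposability, and the unit as $\id_E$ with $\id_E(x)=[x\in E]$ (now using $1,0\in S$ rather than $1,\bot\in Q$ in the bracket notation). One checks that $(S^X,+,\ast,0,\id_E)$ is a semiring: pointwise $(S,+,0)$ makes $S^X$ a commutative monoid; bilinearity of $\ast$ over $+$ follows from distributivity of $\cdot$ over $+$ in $S$ together with the fact that finite sums may be freely split and reindexed; and the unit laws $\id_E\ast f=f=f\ast \id_E$ reduce to the multimonoid unit equations $E\mcomp x=\{x\}=x\mcomp E$, using that each such singleton gives exactly one surviving summand $f(x)\cdot 1=f(x)$ (here finite decomposability guarantees there is no issue summing). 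The only non-routine point is associativity, $(f\ast g)\ast h=f\ast(g\ast h)$; both sides expand, after interchanging the finite sums, to $\sum f(y)\cdot g(z)\cdot h(w)$ ranged over $\{(y,z,w)\mid x\in (y\mcomp z)\mcomp w\}$ on the left and over $\{(y,z,w)\mid x\in y\mcomp (z\mcomp w)\}$ on the right, and these index sets coincide by associativity of $\mcomp$ (in the spelled-out form $\bigcup\{u\mcomp w\mid u\in y\mcomp z\}=\bigcup\{y\mcomp v\mid v\in z\mcomp w\}$).

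The main obstacle is purely bookkeeping: justifying the interchange and reindexing of the nested finite sums that appear in the associativity computation, and checking that no infinite sums sneak in through intermediate expressions such as $\sum_{u\in y\mcomp z} f(y)g(z)\Bigl(\sum_{x\in u\mcomp w} h(w)\Bigr)$. Finite decomposability of $X$ ensures each individual index set $\{(y,z)\mid x\in y\mcomp z\}$, and hence each $\{(y,z,w)\mid x\in(y\mcomp z)\mcomp w\}$, is finite, so all sums are finite and the manipulations are legitimate in an arbitrary (not necessarily commutative) semiring. Since this is exactly the reasoning already carried out for relational monoids in~\cite{CranchDS20a}, I would present the theorem as an immediate corollary via the isomorphism, as the statement itself suggests (``immediate from corresponding results for relational monoids''), and leave the expansion above as a remark or omit it entirely.
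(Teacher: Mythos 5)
Your proposal takes essentially the same route as the paper: the result is obtained as an immediate consequence of the corresponding statement for finitely decomposable relational monoids in~\cite{CranchDS20a}, transported through the isomorphism of Proposition~\ref{P:lr-semigroup-rel-monoid} and Appendix~\ref{A:two}, and the paper's only additional comment is precisely your observation that a direct proof amounts to checking that every supremum in the proof of Theorem~\ref{P:lr-conv} becomes a finite sum, with $+$ in place of $\sup$ and $0$ in place of $\bot$. Your extra self-contained expansion is more detail than the paper records but is consistent with its argument.
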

A direct proof requires verifying that all sups in the proof of
Theorem~\ref{P:lr-conv} remain finite if $X$ is finitely decomposable.
The operation $+$ of semirings corresponds to $\sup$ in quantales, $0$
corresponds to $\bot$.

This result easily extends to domain semirings. Formally, a
\emph{domain semiring}~\cite{DesharnaisS11} is a semiring
$(S,+,\cdot,0,1)$ equipped with a domain operation $\dom:S\to S$ that
satisfies the same domain axioms as those of domain quantales,
replacing $\bot$ by $0$ and $\sup$ by $+$. Every domain semiring is
automatically additively idempotent, and hence the relation $\le$,
defined as $\lambda x,y.\ x+y = y$, is a partial order.

A \emph{modal semiring} is a domain semiring and a codomain semiring,
defined like for quantales by opposition, which satisfy the
compatibility conditions $\cod\circ\dom = \dom$ and
$\dom\circ \cod = \cod$.

Verifying finiteness of sups in the proof of
Theorem~\ref{P:lr-lifting} then extends
Theorem~\ref{P:interchange-semiring-correspondence} as follows.

\begin{proposition}\label{P:interchange-modal-semiring-correspondence}
  If $X$ is a finitely decomposable local $\ell r$-multisemigroup
  and $S$ a modal semiring, then the convolution algebra $S^X$ is a
  modal semiring.
\end{proposition}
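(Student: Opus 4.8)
The plan is to obtain the statement by combining Theorem~\ref{P:interchange-semiring-correspondence} with a finite-sum transcription of the proof of Theorem~\ref{P:lr-lifting}(3). Since $X$ is finitely decomposable and $S$ is in particular a semiring, Theorem~\ref{P:interchange-semiring-correspondence} already gives that $(S^X,+,\ast,\id_E,0)$ is a semiring; what remains is to equip it with the operations $\Dom,\Cod$ defined exactly as in Section~\ref{S:conv-alg} (now reading every $\Sup$ there as a semiring sum $+$ and $\bot$ as $0$) and to verify the modal semiring axioms. The value side is unproblematic: a modal semiring $S$ is additively idempotent, so its finite sums behave like the finite-join fragment of a quantale, with $0$ in the role of $\bot$, with $\alpha\cdot 0=0$, and with $\dom,\cod$ satisfying precisely the domain, codomain, export and compatibility laws used in the quantale argument.

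The second step is to replay the proof of Theorem~\ref{P:lr-lifting} line by line, reading each occurrence of $\Sup$ as a finite $+$. This is legitimate because, under finite decomposability, every index set occurring in that proof is finite: the convolution sums $(f\ast g)(x)=\Sup_{x\in y\odot z}f(y)\cdot g(z)$ range over the finite set $\{(y,z)\mid x\in y\odot z\}$, and the nested joins appearing in the absorption, export and locality computations — such as $\Sup_{w,y,z}\dom(f(w))\cdot g(z)\cdot[\ell(x)\in\ell(w)\odot\ell(z)]$ — are, once the bracketed predicate prunes them, likewise indexed by such decomposition sets. Every manipulation in that proof uses only finite joins, the semiring distributivity and idempotence laws, and the $\ell r$-multisemigroup identities (compatibility, absorption, export and, for part~(3), the equational locality laws of Lemma~\ref{P:coherent-lr-semigroup-props}), all of which hold here since $X$ is a local $\ell r$-multisemigroup. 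Hence the domain and codomain axioms, the locality/export axioms and the compatibility axioms of a modal semiring all carry over, giving the claim; the translation here is the exact analogue, one level up, of the translation of the proof of Theorem~\ref{P:lr-conv} that underlies Theorem~\ref{P:interchange-semiring-correspondence}.

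The main obstacle is precisely this finiteness bookkeeping, and its one delicate point is well-definedness of $\Dom$ and $\Cod$ themselves: pointwise one has $\Dom(f)(e)=\Sup_{x:\ell(x)=e}\dom(f(x))$ and dually, and one must check these are finite sums so that $\Dom,\Cod$ land in $S^X$ at all. This is immediate when the fibres of $\ell$ and $r$ are finite, as for finite weighted relations, and more generally it holds on the sub-semiring of functions carrying the usual finite-support condition (as for formal power series and incidence algebras over suitably finite posets); it is the one place where finite decomposability of $X$ needs to be complemented by such a condition. Modulo this, the proof is a routine transcription of Theorem~\ref{P:lr-lifting} into the additively idempotent semiring setting.
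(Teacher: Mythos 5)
Your proposal coincides with the paper's proof, which consists precisely of the instruction to re-run the argument of Theorem~\ref{P:lr-lifting} while verifying that all sups are finite (the paper adds only that the binary sup-preservation laws of Lemma~\ref{P:lr-magma-lift}(4) are what replace arbitrary sup-preservation on the value side); your transcription into the additively idempotent semiring setting is exactly that argument. Your caveat about well-definedness of $\Dom$ and $\Cod$ is moreover a genuine subtlety that the paper's one-line proof does not address: finite decomposability bounds the sets $\{(y,z)\mid x\in y\mcomp z\}$ but not the fibres of $\ell$ and $r$ (for the free monoid, which is finitely decomposable and local, $\ell^{-1}(\varepsilon)=\Sigma^\ast$ is infinite, so $\Dom(f)(\varepsilon)=\Sup_{w}\dom(f(w))$ is not a finite sum), so your observation that an additional finite-fibre or finite-support hypothesis is needed is a point where your write-up is more careful than the paper, not a defect of your proof.
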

The small sup-preservation properties of
Lemma~\ref{P:lr-magma-lift}(4) are needed in the proof.  The result
can obviously be adapted to $\ell r$-multimagmas and modal
presemirings, and to $\ell r$-multisemigrou ps and weakly modal
semirings, as in previous sections.  These results cover many
examples, as we shall see in Section~\ref{S:examples}.


\section{From Liftings to Correspondences}\label{S:correspondences}

Section~\ref{S:conv-alg} has presented lifting results from
$\ell r$-multimagmas to modal convolution prequantales. These yield
one direction of a triangle of modal correspondences between
$\ell r$-magmas $X$, modal prequantales $Q$ and modal prequantales
$Q^X$, shown in Figure~\ref{fig:correspondence-triangle}.  In the
relational setting, the triangular correspondences
from~\cite{CranchDS20a} show that certain properties in any two of a
relational magma $X$, a prequantale $Q$ and a prequantale $Q^X$ induce
corresponding properties in the remaining algebra, for instance
associativity.  They include the following results for units,
translated to multimagmas by isomorphism.
\begin{proposition}\label{P:lr-correspondence2}
Let $X$ be a multimagma and $Q$ a prequantale, not necessarily unital.
\begin{enumerate}
\item If the prequantale $Q^X$ is unital and $1\neq \bot$ in $Q$, then
  $X$ is an $\ell r$-multimagma.
\item If the prequantale $Q^X$ is unital and $X$ an
  $\ell r$-multimagma, then $Q$ is unital.
\end{enumerate}
\end{proposition}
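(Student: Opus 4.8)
The plan is to work throughout with the (unique) unit $u$ of the convolution prequantale $Q^X$ — unique because $u_1=u_1\ast u_2=u_2$ for any two units — and to probe $u$ against the ``point functions'' $\alpha^{a}\in Q^X$ defined by $\alpha^{a}(a)=\alpha$ and $\alpha^{a}(y)=\bot$ for $y\neq a$. These need no unit of $Q$ to be formed (when $Q$ is unital, $\alpha^{a}=\alpha\cdot\delta_a$), and evaluating the identities $u\ast f=f$ and $f\ast u=f$ pointwise at well-chosen $f$ and points will force exactly the required structure. I would treat the easier item~2 first, since it isolates the technique.

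For Part~2 I would use that, as $X$ is an $\ell r$-multimagma, $E=X_\ell=X_r$ is non-empty (it contains $\ell(x)$ for any $x$); fix $e\in E$, so $\ell(e)=e$ and the absorption axiom gives $e\odot e=\{e\}$, in particular $e\in e\odot e$. Evaluating $u\ast\alpha^{e}=\alpha^{e}$ at the point $e$ and using that $\cdot$ preserves suprema in its first argument collapses the left-hand side to $\bigl(\Sup\{u(y)\mid e\in y\odot e\}\bigr)\cdot\alpha$, so
\[
  s:=\Sup\{u(y)\mid e\in y\odot e\}\quad\text{is a left unit of }\cdot .
\]
Dually, $\alpha^{e}\ast u=\alpha^{e}$ read at $e$ yields a right unit $t$ of $\cdot$, and then $s=s\cdot t=t$ is a two-sided unit; hence $Q$ is unital. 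No associativity is used here.

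For Part~1, now with $Q$ unital and $1\neq\bot$, I would probe with the Dirac functions $\delta_a=1^{a}$. Evaluating $u\ast\delta_a=\delta_a$ at an arbitrary point $w$ gives $\Sup\{u(y)\mid w\in y\odot a\}=[w=a]$; reading this at $w=a$ produces (using $1\neq\bot$) some $y$ with $u(y)\neq\bot$ and $a\in y\odot a$, and reading it at $w\neq a$ shows that $u(y)\neq\bot$ together with $w\in y\odot a$ forces $w=a$. Writing $L=\{e\mid u(e)\neq\bot\}$, this says: every $a\in X$ has a \emph{left identity} $e\in L$ (i.e.\ $e\odot a=\{a\}$), and every $e\in L$ satisfies $e\odot x\subseteq\{x\}$ for all $x$; the equation $\delta_a\ast u=\delta_a$ gives the right-handed analogues ($a\odot e'=\{a\}$ for some $e'\in L$, and $x\odot e\subseteq\{x\}$ for all $x$ whenever $e\in L$). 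I would then tentatively set $\ell(a)=$ the left identity of $a$ in $L$ and $r(a)=$ the right identity, so that the absorption axioms $\ell(a)\odot a=\{a\}$ and $a\odot r(a)=\{a\}$ hold by construction, and derive $D_{xy}\Rightarrow r(x)=\ell(y)$ as follows: with $e=r(x)\in L$ we have $e\odot y\subseteq\{y\}$, and $e\odot y=\emptyset$ would make $x\odot y=(x\odot e)\odot y=x\odot(e\odot y)=\emptyset$ contradict $D_{xy}$, so $e\odot y=\{y\}$, i.e.\ $e$ is a left identity of $y$ in $L$, whence $e=\ell(y)$.

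The delicate point — and the step I expect to be the main obstacle — is the well-definedness of $\ell$ and $r$, i.e.\ the uniqueness of left/right identities inside $L$: if $e_1,e_2\in L$ are both left identities of $a$, then $e_1\odot e_2\subseteq\{e_1\}\cap\{e_2\}$, and to conclude $e_1=e_2$ one argues that $e_1\neq e_2$ would give $e_1\odot e_2=\emptyset$ and hence $\{a\}=e_1\odot a=e_1\odot(e_2\odot a)=(e_1\odot e_2)\odot a=\emptyset$, a contradiction that relies on the \emph{associativity} of $\odot$. This is exactly where the multisemigroup structure (equivalently, associativity of the convolution on $Q^X$) is needed; for a genuine non-associative multimagma an element can have several left identities in $L$, so no choice of $\ell$ would work. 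Once uniqueness is in hand the rest is routine, and Part~2 requires nothing of this kind.
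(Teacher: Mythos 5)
Your Part~2 is correct and, unlike the paper --- which disposes of both items in one line by citing Proposition~4.1 of \cite{CranchDS20a} for relational magmas and then invoking Proposition~\ref{P:lr-semigroup-rel-monoid} --- it is self-contained. Fixing $e\in E$ (non-empty since $X\neq\emptyset$ and $\ell(x)\in E$) with $e\in e\odot e$ and evaluating $u\ast\alpha^{e}=\alpha^{e}$ and $\alpha^{e}\ast u=\alpha^{e}$ at the point $e$ does produce a left and a right unit of $Q$, which then coincide; the only properties used are sup-preservation of $\cdot$ in each argument (to kill the $\bot$-terms and pull the supremum out). This is a genuinely more informative argument than the citation.

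For Part~1 your diagnosis is not just a worry about your particular construction: the obstruction is real, and the statement as literally printed fails for non-associative $X$. The paper's own final item of Example~\ref{ex:multimon} is a counterexample. That three-element multimagma $X=\{e,e',a\}$ is unital with $E=\{e,e'\}$, so for $Q=2$ the prequantale $Q^X=\Pow X$ is unital with unit $E$ and $1\neq\bot$ in $Q$; yet $X$ admits no $\ell r$-structure whatsoever, because $e\odot r(e)=\{e\}$ and $e'\odot r(e')=\{e'\}$ force $r(e)=e$ and $r(e')=e'$, while $D_{ea}$ and $D_{e'a}$ would force $\ell(a)=r(e)=e$ and $\ell(a)=r(e')=e'$ simultaneously --- exactly the multiplicity of left identities you predicted. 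What unitality of $Q^X$ actually yields for a bare multimagma is that $X$ is a \emph{unital multimagma} (your sets of left and right identities inside $L$); promoting unit \emph{sets} to \emph{functions} $\ell,r$ needs uniqueness of units, hence associativity, which is why the associative version reappears later as Proposition~\ref{P:assoc-correspondence} with $Q^X$ a quantale, and why the paper's own proof is shaky here: Proposition~\ref{P:lr-semigroup-rel-monoid}, used to translate the cited result, is stated only for multisemigroups and multimonoids. So your Part~1 is not a proof of the literal statement, but the defect lies in the statement (or its citation-based proof), not in your reasoning; to repair it one must either weaken the conclusion of item~1 to ``$X$ is a unital multimagma'' or add associativity of $\odot$ (equivalently, of $\ast$) to the hypotheses, after which your argument closes.
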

\begin{proof}
  The results are known for relational magmas \cite[Proposition
  4.1]{CranchDS20a} and thus hold for
  multimagmas. Proposition~\ref{P:lr-semigroup-rel-monoid} translates
  them to $\ell r$-multimagmas.
\end{proof}
The restriction $\bot\neq 1$ in Proposition~\ref{P:lr-correspondence2}
is very mild as its negation would imply $Q=\{ \bot\}$.

We now complete the triangle in
Figure~\ref{fig:correspondence-triangle} by refining
Proposition~\ref{P:lr-correspondence2} to variants of modal quantales.
Relative to Proposition~\ref{P:lr-correspondence2} we only consider the
$\ell r$-axioms and the modal axioms for prequantales and quantales
while assuming the multimagma and prequantale axioms. The
constructions use a technique from~\cite{CranchDS20a} which in turn has
been adapted from representation theory in algebra.

In the following theorems, we tacitly assume that $\ell$, $\dom$ and
$\Dom$, and similarly $r$, $\cod$ and $\Cod$, are related as in
Theorem~\ref{P:lr-lifting}.  In particular, therefore
\begin{equation*}
\Dom(\alpha\cdot \delta_x) = \Sup_y \dom(\alpha\cdot
\delta_x(y))\delta_{\ell(y)} = \dom(\alpha) \cdot
\delta_{\ell(x)}\quad\text{ and }\qquad \Dom(\delta_x) =\delta_{\ell(x)}
\end{equation*}
with dual laws for $\Cod$, and
\begin{equation*}
  (\alpha\cdot \delta_x\ast \beta\cdot \delta_y)(z)= \Sup_{u,v}
  \alpha\cdot \delta_x(u)\cdot \beta\cdot 
  \delta_y(v)\cdot [z\in u\odot v]= \alpha\cdot \beta\cdot [z\in x\odot y]
\end{equation*}
and thus in particular $(\delta_x\ast \beta\cdot \delta_y)(z)=[z\in x\odot y]$.

Our first lemma link properties of $X$ with those of $Q$ and $Q^X$
through $\delta$-functions.

\begin{lemma}\label{P:lr-correspondence-aux}
  Let $X$ be a multimagma with functions $\ell,r:X\to X$, let $Q$ and
  $Q^X$ be prequantales with functions $\dom,\cod:Q\to Q$ and
  $\Dom,\Cod:Q^X\to Q^X$. Then, for all $\alpha,\beta\in Q$ and $x,y,z\in X$,
  \begin{enumerate}
  \item
    $\Dom(\alpha\cdot \delta_x) \ast (\alpha\cdot \delta_x) = \Sup_y
    \dom(\alpha)\cdot \alpha\cdot [y \in \ell(x)\mcomp x]\cdot \delta_y$,
 \item $\Dom(\Dom(\alpha\cdot \delta_x)\ast (\beta\cdot \delta_y) =
    \Sup_z\dom(\dom(\alpha)\cdot \beta)\cdot  [z \in \ell(\ell(x)
    \odot y)]\cdot \delta_{\ell(z)}$,
\item $\Dom(\alpha\cdot \delta_x)\ast \Dom(\beta\cdot \delta_y) = \Sup_z\dom(\alpha)\cdot
  \dom(\beta)\cdot [z\in \ell(x)\odot \ell(y)]\cdot \delta_z$,
  \item
    $\Dom((\alpha\cdot \delta_x) \ast \Dom\left(\beta\cdot \delta_y)\right) = \Sup_z\dom(\alpha\cdot
    \dom(\beta))\cdot [ z \in \ell(x \mcomp \ell(y))]\cdot \delta_{\ell(z)}$,
\item $\Dom((\alpha\cdot \delta_x) \ast (\beta\cdot \delta_y)) = \Sup_z\dom(\alpha\cdot
  \beta)\cdot [z \in \ell(x \mcomp y)]\cdot \delta_{\ell(z)}$.
\item Corresponding properties hold for $\Cod$, $\cod$ and $r$. 
  \end{enumerate}
\end{lemma}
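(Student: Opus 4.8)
The plan is to prove each identity by direct computation, unwinding the definitions of $\Dom$, $\ast$, and the $\delta$-functions exactly as in the preamble to the lemma, and then simplifying the resulting iterated suprema using the single-point collapse $\delta_x(u) = [u=x]$. The key observation that makes all five parts routine is already recorded just before the lemma: $\Dom(\alpha\cdot\delta_x) = \dom(\alpha)\cdot\delta_{\ell(x)}$, and $(\alpha\cdot\delta_x \ast \beta\cdot\delta_y)(z) = \alpha\cdot\beta\cdot[z\in x\odot y]$. So each left-hand side is a composite of applications of $\Dom$ and $\ast$ to scalar multiples of $\delta$-functions, and I expect each step to collapse one outer operation at a time.

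\textbf{Key steps.} For part (1), I would first rewrite $\Dom(\alpha\cdot\delta_x) = \dom(\alpha)\cdot\delta_{\ell(x)}$, then apply the convolution formula to get $\bigl(\dom(\alpha)\cdot\delta_{\ell(x)} \ast \alpha\cdot\delta_x\bigr)(y) = \dom(\alpha)\cdot\alpha\cdot[y\in\ell(x)\odot x]$, and finally re-express this pointwise description as the claimed supremum $\Sup_y \dom(\alpha)\cdot\alpha\cdot[y\in\ell(x)\odot x]\cdot\delta_y$ using the identity $f = \Sup_{y} f(y)\cdot\delta_y$ from Section~\ref{S:quantales}. Parts (2)--(5) follow the same pattern with one or two extra layers: for (5) I unwind $\Dom((\alpha\cdot\delta_x)\ast(\beta\cdot\delta_y))$ by first computing the inner convolution, obtaining $\Sup_z \alpha\cdot\beta\cdot[z\in x\odot y]\cdot\delta_z$, then pushing $\Dom$ through the sup (using sup-preservation of $\Dom$, available in any modal prequantale, cf. step (1) of the proof of Theorem~\ref{P:lr-lifting}) and through the scalar, using $\Dom(\gamma\cdot\delta_z) = \dom(\gamma)\cdot\delta_{\ell(z)}$; this yields $\Sup_z \dom(\alpha\cdot\beta)\cdot[z\in x\odot y]\cdot\delta_{\ell(z)}$, which after reindexing the condition $[z\in x\odot\ell(y)]$ — wait, here one just keeps $z\in x\odot y$ and writes $\ell(z)$, matching the stated $[z\in x\odot y]$ hidden inside $\ell(x\mcomp y)$ via the image notation. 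Parts (2) and (4) insert an additional $\Dom$ on one factor first, so I would handle those by combining the computation of (5)-type with the substitution $\beta\cdot\delta_y \mapsto \dom(\beta)\cdot\delta_{\ell(y)}$ (for (4)) or $\alpha\cdot\delta_x\mapsto\dom(\alpha)\cdot\delta_{\ell(x)}$ together with the export-style nesting $\ell(\ell(x)\odot y)$ (for (2)). Part (3) is the product of two already-simplified domain terms, so it is just one application of the convolution formula to $\dom(\alpha)\cdot\delta_{\ell(x)}$ and $\dom(\beta)\cdot\delta_{\ell(y)}$.

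\textbf{Main obstacle.} None of the steps is deep; the only place demanding care is bookkeeping the index sets and the bracket predicates $[\,\cdot\,]$ when $\Dom$ is applied after a convolution, since the outer $\delta_{\ell(z)}$ makes several distinct $z$'s contribute to the same point $\ell(z)$, and one must be sure the supremum over the fibre of $\ell$ is correctly absorbed rather than lost. Concretely, the subtle identity is that $\Sup_z \gamma\cdot[z\in A]\cdot\delta_{\ell(z)} = \Sup_{z}\gamma\cdot[z\in\ell(A)]\cdot\delta_z$ when $\gamma$ does not depend on $z$ through the collapsed coordinate — which is exactly the situation here because the scalar is $\dom(\alpha\cdot\beta)$, independent of $z$. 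I would state this small reindexing fact once and then reuse it. Finally, part (6) is immediate by the duality for $\ell r$-multimagmas and modal prequantales (interchange $\ell\leftrightarrow r$, $\dom\leftrightarrow\cod$, and the arguments of $\odot$), exactly as invoked throughout Section~\ref{S:lr-semigroups} and in the proof of Theorem~\ref{P:lr-lifting}, so no separate argument is needed.
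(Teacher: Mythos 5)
Your proposal is correct and follows essentially the same route as the paper: unfold $\Dom$ and $\ast$ on scalar multiples of $\delta$-functions using the identities $\Dom(\alpha\cdot\delta_x)=\dom(\alpha)\cdot\delta_{\ell(x)}$ and $(\alpha\cdot\delta_x\ast\beta\cdot\delta_y)(z)=\alpha\cdot\beta\cdot[z\in x\odot y]$ recorded just before the lemma, collapse the brackets (using strictness $\dom(\bot)=\bot$ to pull $[\,\cdot\,]$ out of $\dom$), and perform the reindexing over the fibres of $\ell$ that you correctly single out as the only delicate step; part (6) is by duality in both treatments.
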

\begin{proof} We write $\delta^\alpha_x$ instead of
  $\alpha\cdot \delta_x$ and drop multiplication symbols wherever
  convenient.
  \begin{enumerate}
  \item
    $\Dom(\delta^\alpha_x) \delta^\alpha_x
    =\Sup_{y,u,v}\delta^{\dom(\alpha)}_{\ell(x)}(u) \delta^\alpha_x(v)
    [y\in uv]\delta_y =\Sup_y\dom(\alpha)\alpha [y \in \ell(x)x]\delta_y$.
  \item 
    \begin{align*}
      \Dom(\Dom(\delta^\alpha_x)\delta^\beta_y) 
&=\Sup_z \dom\left(\Sup_{v,w}\delta^{\dom(\alpha)}_{\ell(x)}(v)
  \delta^\beta_y(w)[z\in vw]\right)\delta_{\ell(z)}\\
&=\Sup_z \dom(\dom(\alpha)\beta)[z\in \ell(x)y])
  \delta_{\ell(z)}\\
&=\Sup_z\dom(\dom(\alpha)\beta)[z\in \ell(\ell(x)y)]\delta_z.
    \end{align*}
\item 
$\Dom(\delta^\alpha_x)\Dom(\delta^\beta_y)
=\Sup_{z,v,w}
  \delta^{\dom(\alpha)}_{\ell(x)}(v)
  \delta^{\dom(\beta)}_{\ell(y)}(w)[z\in vw]\delta_z
=\Sup_z\dom(\alpha)\dom(\beta)[z\in \ell(x)\ell(y)]\delta_z$. 
\item 
\begin{align*}
  \Dom\left((\delta^\alpha_x) \Dom(\delta^\beta_y)\right)
&=\Sup_z \dom\left(\Sup_{u,v}\delta^\alpha_x(u)\delta^{\dom(\beta)}_{\ell(y)}(v)[z \in uv]\right)\delta_{\ell(z)}\\
&=\Sup_z \dom(\alpha \dom(\beta)[z \in x\ell(y)])
  \delta_{\ell(z)}\\
&= \Sup_z\dom(\alpha\dom(\beta))[ z \in \ell(x\ell(y))]\delta_{\ell(z)}.
\end{align*}
\item
\begin{align*}
  Dom(\delta^\alpha_x\delta^\beta_y)
  &=\Sup_z \dom\left(\Sup_{v,w}\delta^\alpha_x(v) \delta^\beta_y(w)
    [z\in vw]\right)\delta_{\ell(z)}\\
  &=\Sup_z \dom\left(\alpha\beta
    [z\in xy]\right)\delta_{\ell(z)}\\
  &= \Sup_z\dom(\alpha\beta)[z \in \ell(xy)]\delta_{\ell(z)}.
\end{align*}
\item Proofs for $\Cod$, $\cod$ and $r$ are dual. \qedhere
 \end{enumerate}
\end{proof}

The following statements add structure to
Proposition~\ref{P:lr-correspondence2}(1). They expose the laws in $Q$
and $Q^X$ needed to derive the $\ell r$-multimagma and
$\ell r$-multisemigroup axioms, with and without locality.  The
$\ell r$-multisemigroup structure arises from of~\cite[Corollary
4.7]{CranchDS20a} for relational semigroups together with
Proposition~\ref{P:lr-semigroup-rel-monoid}, which translates it to
$\ell r$-semigroups. Proving them requires very mild assumptions on
$X$ and $Q$ or $Q^X$.

\begin{proposition}\label{P:assoc-correspondence}
  If $Q^X$ and $Q$ are quantales and $1\neq \bot$ in $Q$, then $X$ is
  an $\ell r$-multisemigroup.
\end{proposition}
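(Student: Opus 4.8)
The plan is to split the conclusion into its two constituents. Recall from Proposition~\ref{P:lr-semigroup-rel-monoid} that an $\ell r$-multisemigroup is the same thing as an associative and unital multimagma, so we must supply (i)~maps $\ell,r:X\to X$ turning $(X,\odot,\ell,r)$ into an $\ell r$-multimagma and (ii)~associativity of $\odot$. Part (i) needs no new work: a quantale is in particular a unital prequantale, so $Q^X$ is a unital prequantale, and since $1\neq\bot$ in $Q$, Proposition~\ref{P:lr-correspondence2}(1) already produces such $\ell,r$ (the left and right unit maps of the induced multimonoid). Everything therefore reduces to associativity of $\odot$.

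For (ii) I would pass through $Q^X$ via the Dirac functions $\delta_x$. From the convolution identities recalled just before Lemma~\ref{P:lr-correspondence-aux}, taken with scalars equal to $1$, one has $(\delta_x\ast\delta_y)(w)=[w\in x\odot y]$ for all $x,y,w\in X$. Convolving once more, and using $\Sup\emptyset=\bot$ together with the fact that a supremum of copies of $1$ and $\bot$ equals $1$ precisely when some summand is $1$, one computes $\bigl((\delta_x\ast\delta_y)\ast\delta_z\bigr)(w)=[w\in(x\odot y)\odot z]$ and dually $\bigl(\delta_x\ast(\delta_y\ast\delta_z)\bigr)(w)=[w\in x\odot(y\odot z)]$. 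Since $Q^X$ is a quantale, $\ast$ is associative, so these two characteristic functions coincide; and since $1\neq\bot$ in $Q$ forces $[P]=[P']$ to imply $P\Leftrightarrow P'$, we obtain $w\in(x\odot y)\odot z\Leftrightarrow w\in x\odot(y\odot z)$ for every $w$, i.e.\ $(x\odot y)\odot z=x\odot(y\odot z)$. Together with~(i) this makes $X$ an $\ell r$-multisemigroup.

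An alternative, and probably the route the surrounding text has in mind, is to quote the relational analogue \cite[Corollary~4.7]{CranchDS20a}, transport it along the isomorphism between relational and multi-algebras of Appendix~\ref{A:two}, and conclude with Proposition~\ref{P:lr-semigroup-rel-monoid}. I do not anticipate a genuine obstacle here: the $\delta$-computation is entirely of the type already carried out in Lemma~\ref{P:lr-correspondence-aux}, and the only point demanding a moment's care is the collapse of the iterated supremum to a single $[\,\cdot\,]$ and the subsequent use of $1\neq\bot$ to read the set identity back off. It is worth noting that associativity of $\cdot$ in $Q$ never enters the direct argument --- only that $Q$ is a unital prequantale --- so the hypothesis ``$Q$ a quantale'' is somewhat stronger than strictly needed and is kept for uniformity with the companion correspondences.
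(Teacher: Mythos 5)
Your proof is correct and follows essentially the route the paper intends: the $\ell r$-multimagma structure comes from Proposition~\ref{P:lr-correspondence2}(1), and the associativity of $\odot$ is read off from associativity of $\ast$ via the $\delta$-function (representation) technique, which is exactly the method the paper invokes by citing the relational analogue and illustrating on the absorption law. You merely write out explicitly the computation $\bigl((\delta_x\ast\delta_y)\ast\delta_z\bigr)(w)=[w\in(x\odot y)\odot z]$ that the paper delegates to the cited corollary, and your side observation that only unitality (not associativity) of $Q$ enters is accurate and consistent with the hypotheses of Proposition~\ref{P:lr-correspondence2}.
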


The $\ell r$-multisemigroup $X$ is therefore completely determined by
elements below $\id_E$, more specifically, functions
$\delta_{\ell(x)}$ and their relations to the elements in $X$.  We
calculate the absorption law for $\ell$ explicitly as an example of
the technique used: With Lemma~\ref{P:lr-correspondence-aux}(1),
$\delta_{x} = \Dom(\delta_x)\ast \delta_x= \Sup_y[y\in
\ell(x)\mcomp x]\delta_y$.
Hence $\ell(x)\mcomp x = \{x\}$ whenever the corresponding domain
absorption law holds in $Q^X$ and $1\neq \bot$ in $Q$. The fact that
$\Dom$ appears in the calculation does not go beyond
Proposition~\ref{P:assoc-correspondence}:
$\Dom(\delta_x) = \delta_{\ell(x)}$, which is below $\id_E$ in $Q^X$.

The next statement adds locality to the picture.
  
\begin{theorem}\label{P:lr-correspondence2-thm}
  Let $Q^X$ and $Q$ be modal quantales, with $1\neq \bot$ in $Q$. Then
  $X$ is a local $\ell r$-multisemigroup.
\end{theorem}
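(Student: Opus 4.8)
The plan is to build on Proposition~\ref{P:assoc-correspondence}, which already gives that $X$ is an $\ell r$-multisemigroup, and to harvest \emph{locality} from the locality axioms of the modal quantale $Q^X$, tested on $\delta$-functions. By Proposition~\ref{P:local-coherent} it then suffices to prove the equational locality laws $\ell(x\mcomp\ell(y)) \subseteq \ell(x\mcomp y)$ and $r(r(x)\mcomp y) \subseteq r(x\mcomp y)$ for all $x,y\in X$; in fact the argument will deliver them as equalities. Recall that, by the convention fixed before the theorem, the domain and codomain operations of $Q^X$ are $\Dom$ and $\Cod$ as in Theorem~\ref{P:lr-lifting}.

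First I would specialise Lemma~\ref{P:lr-correspondence-aux}(4) and (5) to $\alpha = \beta = 1$, using $\dom(1)=1$ (immediate from subidentity and absorption in $Q$) together with $1\cdot\delta_x = \delta_x$, which yields
\begin{equation*}
  \Dom\bigl(\delta_x\ast\Dom(\delta_y)\bigr) = \Sup_z [z\in\ell(x\mcomp\ell(y))]\cdot\delta_{\ell(z)}
  \qquad\text{and}\qquad
  \Dom(\delta_x\ast\delta_y) = \Sup_z [z\in\ell(x\mcomp y)]\cdot\delta_{\ell(z)} .
\end{equation*}
Applying the $\Dom$-locality axiom $\Dom(f\ast\Dom(g)) = \Dom(f\ast g)$ of the modal quantale $Q^X$ with $f = \delta_x$ and $g = \delta_y$ identifies the two right-hand sides. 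To read off a set equality, I would evaluate a function of the shape $\Sup_z [z\in\ell(A)]\cdot\delta_{\ell(z)}$ at a point $w$: this gives $\Sup\{[z\in\ell(A)] \mid \ell(z)=w\}$, and since every element of $\ell(A)$ is an $\ell$-fixpoint by the retraction law of Lemma~\ref{P:lr-magma-props}, this supremum equals $[w\in\ell(A)]$. Hence the identity of the two functions becomes $[w\in\ell(x\mcomp\ell(y))] = [w\in\ell(x\mcomp y)]$ for every $w\in X$, and since $1\neq\bot$ in $Q$ this forces $\ell(x\mcomp\ell(y)) = \ell(x\mcomp y)$. The dual computation, using the $\Cod$-locality axiom of $Q^X$ and Lemma~\ref{P:lr-correspondence-aux}(6), gives $r(r(x)\mcomp y) = r(x\mcomp y)$. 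As $X$ is an $\ell r$-multisemigroup by Proposition~\ref{P:assoc-correspondence}, Proposition~\ref{P:local-coherent} now yields that $X$ is local.

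The only step that needs genuine care is the passage from the equality of the two $Q$-valued convolution functions to the set equality $\ell(x\mcomp\ell(y)) = \ell(x\mcomp y)$: one must observe that with $\alpha=\beta=1$ the scalar coefficients collapse to $1$, so that no quantalic cancellation interferes, and that the indices $z$ carrying non-trivial weight are exactly $\ell$-fixpoints, so the shift $\delta_{\ell(z)}$ is harmless; the hypothesis $1\neq\bot$ is precisely what makes the final comparison of characteristic values conclusive. Everything else is a routine unwinding of the definitions of $\ast$, $\Dom$ and $\delta$, all of which is already packaged in Lemma~\ref{P:lr-correspondence-aux} and the conventions stated before the theorem.
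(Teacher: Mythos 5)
Your proposal is correct and follows essentially the same route as the paper's proof: both reduce to Proposition~\ref{P:assoc-correspondence} for the $\ell r$-multisemigroup structure, then apply the $\Dom$-locality axiom of $Q^X$ to $\delta_x$ and $\delta_y$ and read off $\ell(x\mcomp\ell(y))=\ell(x\mcomp y)$ from Lemma~\ref{P:lr-correspondence-aux}(4) and (5), with the dual argument for $r$. Your pointwise evaluation at $w$, using that elements of $\ell(A)$ are $\ell$-fixpoints, just makes explicit the step the paper leaves implicit.
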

\begin{proof}
  It remains to consider locality.
  Lemma~\ref{P:lr-correspondence-aux} yields
  $\Dom(\delta_x \ast \Dom\left(\delta_y)\right)
  = \Sup_z[z \in \ell(x\odot \ell(y))]\delta_{\ell(z)}$
  and
  $\Dom(\delta_x \ast \delta_y) = \Sup_z[z \in
  \ell(x\odot y)]\delta_{\ell(z)}$.
  Therefore
  $\Dom(\delta_x \ast \Dom\left(\delta_y)\right)(z)
  =\Dom(\delta_x \ast \delta_y)(z)$
  implies $\ell (x\odot \ell(y)) = \ell(x\odot y)$.
\end{proof}

Finally, we turn to the correspondences from $X$ and $Q^X$ to $Q$. We
first consider modal axioms in $Q$ that do not depend on
$\ell r$-multisemigroups.  Similarly to the assumption that $1\ne
\bot$ in $Q$ above, we need to assume the existence of certain
elements in $X$.

\begin{theorem}\label{T:correspondence-X+QX->Q}
  Let $X$ be an $\ell r$-multimagma in which there exist
  $x,y,z,w\in X$, not necessarily distinct, such that
  $\ell(x)\mcomp \ell(y)\neq \emptyset$ and $z\mcomp w\neq \emptyset$,
  let $Q$ be a prequantale and $Q^X$ a modal prequantale such that
  $\id_E\neq \bot$.
  \begin{enumerate}
  \item Then $Q$ is a modal prequantale.
\item It is a weakly modal quantale if $X$ is an $\ell
  r$-multisemigroup and $Q^X$ a weakly local quantale.
\item It is a modal quantale if $X$ is also local and $Q^X$ a modal
  quantale. 
  \end{enumerate}
\end{theorem}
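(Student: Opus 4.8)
The plan is to imitate the proofs of Theorems~\ref{P:lr-lifting} and~\ref{P:lr-correspondence2-thm}: each domain or codomain axiom that holds in $Q^X$ is transported to $Q$ by instantiating it at elementary functions of the form $\alpha\cdot\delta_x$, computing the resulting convolutions and domain expressions with Lemma~\ref{P:lr-correspondence-aux}, and then evaluating the resulting identity of $Q^X$ at a single well-chosen argument $z_0\in X$ to extract an identity of $Q$. I use throughout the standing relations $\Dom(\alpha\cdot\delta_x)=\dom(\alpha)\cdot\delta_{\ell(x)}$ and $(\alpha\cdot\delta_x\ast\beta\cdot\delta_y)(z)=\alpha\cdot\beta\cdot[z\in x\mcomp y]$ recorded before Lemma~\ref{P:lr-correspondence-aux}, together with their duals for $\Cod,\cod,r$. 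Note first that $Q$ is already unital by Proposition~\ref{P:lr-correspondence2}(2), and that $\id_E\neq\bot$ forces $1\neq\bot$ in $Q$; note also that the two existence hypotheses on $X$ are automatically met in any $\ell r$-multimagma, since $\ell(u)\mcomp\ell(u)=\{\ell(u)\}$ by idempotency (Lemma~\ref{P:lr-magma-props}(2)).

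For part~(1) I would verify the modal prequantale axioms for $\dom$, the $\cod$-case being dual. Strictness, $\dom(\bot)=\bot$, follows by evaluating $\Dom(\bot)=\bot$ at any $\ell(x)$; binary sup preservation and the subidentity law $\dom(\alpha)\le1$ follow by evaluating the corresponding laws of $Q^X$ at $\ell(x)$ on the functions $\alpha\cdot\delta_x$ and $(\alpha\cdot\delta_x)\sup(\beta\cdot\delta_x)$, using $\id_E(\ell(x))=1$. For absorption, Lemma~\ref{P:lr-correspondence-aux}(1) and the $\ell r$-multimagma law $\ell(x)\mcomp x=\{x\}$ give $\Dom(\alpha\cdot\delta_x)\ast(\alpha\cdot\delta_x)=\dom(\alpha)\cdot\alpha\cdot\delta_x$, so evaluating $f\le\Dom(f)\ast f$ at $x$ with $f=\alpha\cdot\delta_x$ yields $\alpha\le\dom(\alpha)\cdot\alpha$. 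The compatibility axioms come from $\Dom\circ\Cod=\Cod$ and $\Cod\circ\Dom=\Dom$ in $Q^X$, applied to $\alpha\cdot\delta_x$ and simplified with $\ell\circ r=r$ and $r\circ\ell=\ell$ in $X$ (Lemma~\ref{P:lr-magma-props}(1)). The only axiom needing an existence hypothesis is export: by Lemma~\ref{P:lr-correspondence-aux}(2),(3) the two sides of $\Dom(\Dom(\alpha\cdot\delta_x)\ast(\beta\cdot\delta_y))=\Dom(\alpha\cdot\delta_x)\ast\Dom(\beta\cdot\delta_y)$ are supported on $\ell(\ell(x)\mcomp y)$ and on $\ell(x)\mcomp\ell(y)$ respectively, which coincide by the $\ell r$-multimagma export law (Lemma~\ref{P:lr-magma-props}(4)); evaluating at a point of this common support, nonempty by the hypothesis $\ell(x)\mcomp\ell(y)\neq\emptyset$, gives $\dom(\dom(\alpha)\cdot\beta)=\dom(\alpha)\cdot\dom(\beta)$.

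For part~(2) the extra ingredient is associativity of $Q$, which makes it a quantale and hence, with part~(1), a weakly local modal quantale. This is the associativity direction of the triangle of correspondences of~\cite{CranchDS20a}: associativity of $Q^X$ together with associativity of $X$ yields associativity of $Q$, the computation being witnessed by any nonempty ternary product of $X$, for instance $\ell(z)\mcomp z\mcomp w$ with $z\mcomp w\neq\emptyset$. For part~(3) I would, in addition, upgrade the export law to the full locality law: Lemma~\ref{P:lr-correspondence-aux}(4),(5) show that the two sides of $\Dom((\alpha\cdot\delta_x)\ast\Dom(\beta\cdot\delta_y))=\Dom((\alpha\cdot\delta_x)\ast(\beta\cdot\delta_y))$ are supported on $\ell(x\mcomp\ell(y))$ and $\ell(x\mcomp y)$, which coincide by the equational locality law of local $\ell r$-multisemigroups (Lemma~\ref{P:coherent-lr-semigroup-props}(1)); evaluating at a point of this common set, nonempty because $x\mcomp y\neq\emptyset$ is available from the hypothesis $z\mcomp w\neq\emptyset$, gives $\dom(\alpha\cdot\dom(\beta))=\dom(\alpha\cdot\beta)$, and dually for $\cod$.

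The only genuine subtlety, and the reason the existence hypotheses on $X$ are stated at all, is the bookkeeping about \emph{where} to evaluate: each transported identity of $Q^X$ has the shape $\Sup_z\gamma_z\cdot[z\in S]\cdot\delta_z$, and to read off an honest identity of $Q$ one must evaluate at a point $z_0$ lying in the support sets of both sides simultaneously. That these supports coincide is precisely the content of the $\ell r$-multimagma export law in part~(1) and of the equational locality law in part~(3), while the hypotheses (or idempotency) guarantee the common support is nonempty. This is the exact analogue of the $1\neq\bot$ assumption used in the earlier correspondence results; modulo it, every step reduces to the formulas already collected in Lemma~\ref{P:lr-correspondence-aux}, and I do not anticipate any further obstacle.
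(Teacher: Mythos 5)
Your proposal is correct and follows essentially the same route as the paper's proof: instantiate the modal axioms of $Q^X$ at the functions $\alpha\cdot\delta_x$, reduce via Lemma~\ref{P:lr-correspondence-aux}, and evaluate at a point of the (nonempty, coinciding) support sets, with export/locality in $X$ guaranteeing the coincidence and the existence hypotheses (which, as you rightly note, are automatic for $\ell r$-multimagmas) guaranteeing nonemptiness. Your treatment of part (2) via the associativity correspondence of~\cite{CranchDS20a} plus the fact that weak locality is derivable in any weakly local modal quantale is a legitimate (slightly more economical) packaging of what the paper spells out explicitly.
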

\begin{proof}
  Note that by definition $\emptyset\neq E\subseteq X$. We verify the
  modal quantale axioms in $Q$, using in each case the corresponding
  axiom in $X$ and $Q^X$. Suppose that $X$ is an $\ell r$-magma and
  $Q^X$ a modal prequantale with $\id_E\neq \bot$.
\begin{itemize}
\item For domain absorption, using
  Lemma~\ref{P:lr-correspondence-aux}(1) with
  $\ell(x)\odot x = \{x\}$, 
  \begin{equation*}
    \dom(\alpha)\cdot \alpha =
    (\Dom(\delta^\alpha_x)\ast (\delta^\alpha_x))(x) = \delta^\alpha_{x}(x) =\alpha.
\end{equation*}
\item For domain export, using Lemma~\ref{P:lr-correspondence-aux}(2)
  and (3) with $z\in \ell(\ell(x)\odot y)=\ell(x)\odot
  \ell(y)$,
  \begin{equation*}
    \dom(\dom(\alpha)\cdot \beta) = \Dom(\Dom(\delta^\alpha_x)\ast \delta^\beta_y)(z) =
    (\Dom(\delta^\alpha_x)\ast \Dom(\delta^\beta_y))(z) = \dom(\alpha)\cdot
    \dom(\beta). 
  \end{equation*}
\item   For the domain subidentity axiom, $\dom(\alpha) = 
  \Dom(\delta^\alpha_{x})(\ell(x))\le \id_E(\ell(x)) = 1$.
\item For domain strictness,
  $\dom(\bot) =\Dom(\delta^\bot_x)(\ell(x))=\Dom(\bot)(\ell(x))=\bot$.
\item For binary sup-preservation of domain, 
\begin{align*}
  \dom(\alpha\sup \beta) 
&=\Dom(\delta^{\alpha\sup\beta}_x)(\ell(x))\\
&=\Dom(\delta^\alpha_x\sup \delta^\beta_x)(\ell(x))\\
&=\Dom(\delta^\alpha_x)(\ell(x))\sup \Dom(\delta^\beta_x)(\ell(x))\\
&= \dom(\alpha)\sup \dom(\beta).
\end{align*}
\item For weak domain locality, using
  Lemma~\ref{P:lr-correspondence-aux}(2) and (3) with $z\in \ell(x\odot y)
  \subseteq \ell(x\odot \ell(y))$,
\begin{align*}
\dom(\alpha\cdot \beta)
= \Dom(\delta^\alpha_x\ast \delta^\beta_y)(z)
\le \Dom((\delta^\alpha_x) \ast \Dom(\delta^\beta_y))(z)
=   \dom(\alpha\cdot \dom(\beta)).
\end{align*}
\item For domain locality, replay the previous proof with identities
  instead of inequalities.
\item  For compatibility, $\dom(\cod(\alpha)) = \Dom (\Cod(\delta^\alpha_x))(\ell(r(x))) =
    \Cod(\delta^\alpha_x)(r(x)) = \cod(\alpha)$. 
 \end{itemize}
The remaining proofs follow by duality. \qedhere
\end{proof}


\section{Examples}\label{S:examples}

In this section we list additional examples of modal convolution
quantales. We start with those that come from categories. 

\begin{example}[Modal Convolution Quantales over Path Categories of
  Digraphs (Quivers)]\label{ex:traces}~
\begin{enumerate} 
\item \label{ex:path-category} A digraph or quiver is a structure
  $K$ consisting of a set $V_K$ of vertices, a set $E_K$ of edges and
  source and target functions $s, t: E_K\to V_K$.  The path category
  of $K$~\cite{MacLane98} has elements $V_K$ as objects and sequences
  $(v_1,e_1,v_2,\dots, v_{n-1},e_{n-1},v_n):v_1\to v_n$, in which
  vertices and edges alternate, as arrows.  Composition
  $\pi_1\cdot \pi_2$ of $\pi_1:v_3\to v_4$ and $\pi_2:v_1\to v_2$ is
  defined whenever $v_2=v_3$, and it concatenates the two paths while
  gluing the common end $v_2=v_3$. Sequences $(v)$ of length $1$ are
  identities. Path categories become local partial $\ell r$-semigroups
  if we introduce an edge $i_v$ for every vertex $v$ and functions
  $\ell(e)=i_{s(e)}$ and $r(e)= i_{t(e)}$ for every edge $e$, and it
  is common to define path algebras of quivers this way as sequences
  of arrows. By Theorem~\ref{P:lr-lifting}, the convolution algebra or
  category algebra over the path category of any digraph with values
  in $Q$ is a modal quantale for any modal quantale $Q$. All paths are
  finitely decomposable, hence we can replace $Q$ by a modal semiring.
 
\item A special path category is generated by the one-point quiver
  with $n$ arrows. It represents the free monoid with $n$
  generators. The $\ell r$-structure and hence the modal structure is
  then trivial. Lifting along Theorem~\ref{P:lr-lifting} yields the
  quantale or semiring of weighted languages.
\item Forgetting edges represents paths as sequences of vertices. All
  lifting results transfer.  
\item Forgetting the internal structure of paths and keeping only
  their ends, brings us back to the pair groupoid and weighted binary
  relations.\qed
\end{enumerate}
\end{example}

In computing, paths arise as execution sequences of automata or
transition systems, and are sometimes called traces. Sets of traces
are are models for the behaviours of concurrent or distributed
computing systems. Lifting along Theorem~\ref{P:lr-pow-lifting}
constructs modal quantales of traces at powerset level; lifting along
Theorem~\ref{P:lr-lifting} the associated modal convolution quantales
of $Q$-weighted traces. Once again, as finite traces are considered,
semirings suffice as weight algebras. Algebras of weighted paths are
important for quantitative analysis of systems or for the design of
algorithms. The construction of the modal powerset quantale over the
path category of a digraph has recently been extended to
higher-dimensional modal Kleene algebras and higher-dimensional
(poly)graphs~\cite{CalkGMS20}.

\begin{example}[Modal Incidence Algebras over Categories of Segments
  and Intervals]\label{ex:segments-intervals}

  The arrows or pairs in poset categories, as in
  Example~\ref{ex:multimon}(\ref{ex:poset}), represent (closed)
  segments of the poset; those of a linear order represent (closed)
  intervals. By Theorem~\ref{P:lr-pow-lifting}, for any modal quantale
  $Q$, the convolution algebra or category algebra over a poset
  category with values in $Q$ forms a modal quantale.

  Without the modal structure, such convolution quantales have already
  been studied~\cite{DongolHS21}.  When the underlying partial or
  total orders are \emph{locally finite}, that is, all segments and
  intervals are finite (all arrows in the poset category are finitely
  decomposable), values can be taken in (semi)rings and the
  convolution algebras become the incidence algebras \`a la
  Rota~\cite{Rota64}. The general setting supports algebraic
  generalisations of duration calculi~\cite{ZhouH04,DongolHS21}; it
  specialises to interval and interval temporal
  logics~\cite{HS91,Mos12} for powerset liftings. The additional modal
  structure yields richer algebras for reasoning about states as well
  as intervals and supports mixed modalities over weighted intervals
  and their endpoints. This, and in particular applications to
  weighted and probabilistic interval temporal logics and duration
  calculi, remains to be explored.\qed
\end{example}

Our next example considers $\ell r$-multisemigroups that arise from
composing digraphs, yet we specialise to finite partial orders. We
show how locality can be obtained by introducing interfaces.

\begin{example}[Weighted Poset Languages]\label{ex:pomset}
We restrict our attention to finite posets.
\begin{enumerate}
\item Finite posets form partial $\ell r$-multisemigroups (based on
  classes) with respect to serial composition, which is the disjoint
  union of posets with all elements of the first poset preceding that
  of the second one in the order of the composition.  This yields
  partial monoids with the empty poset as unit, and hence partial
  $\ell r$-semigroups in which $\ell$ and $r$ map every poset to the
  empty poset. The algebra is therefore not local and does not form a
  category. The convolution algebras are therefore weakly local
  quantales, but the modal structure is trivial as $\Dom$ and $\Cod$
  map the empty poset to $\bot$ and any other element to
  $\id_{\{1\}}$. The powerset lifting yields poset languages.
\item The points of finite posets can be labelled with letters from
  some alphabet and equivalence classes of such labelled posets can be
  taken with respect to isomorphism that preserves the order structure
  and labels, but forgets the names of nodes. This leads to partial
  words~\cite{Grabowski81}, which are also known as \emph{pomsets} in
  concurrency theory. The serial composition passes to a total
  operation on equivalence classes, which makes the resulting
  $\ell r$-monoid a monoid and hence a category.
  
\item Pomsets can be equipped with interfaces~\cite{Winkowski77}. The
  source interface of a pomset consists of all its minimal elements
  (with their labels); its target interface of all its maximal
  elements (again with their labels). Pomsets with interfaces form a
  partial $\ell r$-semigroup with $\ell$ mapping every pomset to its
  source interface, $r$ mapping every poset to its target interface,
  and composition defined by gluing pomsets on their interfaces
  whenever they match, and extending the order as in the previous
  example.  The partial $\ell r$-semigroup of such pomsets with
  interfaces is local and hence a category. Winkowski also defines a
  parallel composition that turns pomsets with interfaces into a weak
  monoidal category in which parallel composition is a partial
  tensor. Details are beyond the scope of this article, but see
  Section~\ref{S:modal-chantales} for initial steps that extend modal
  quantales towards concurrency. \qed
\end{enumerate}
\end{example}

Winkowski's pomsets with interfaces have recently been extended to
posets in which interfaces may be arbitrary subsets of the sets of minimal or
maximal elements of posets~\cite{FahrenbergJST20}. Our lifting results
extend to these.  Compositions of digraphs with interfaces can also be
defined differently, simply by juxtaposing these objects and then
making interface nodes disappear in the composition. This yields again
categories, yet with different units given by identity
relations~\cite{Hotz65,FioreD13}.  The usual liftings therefore apply.
In all these examples one considers an additional operation of
parallel composition, which is simply disjoint union of posets or
dags. For these, the powerset lifting yields standard models of
concurrency.

The following example lifts a local $\ell r$-multimagma to a modal
prequantale.

\begin{example}[Weighted Path Languages in Topology]\label{ex:continuous-paths}~
\begin{enumerate}
\item A \emph{path} in a set $X$ is a map $f:[0,1]\to X$, where it is
  usually assumed that $X$ is a topological space and $f$
  continuous. The source of path $f$ is $\ell(f)=f(0)$; its target
  $r(f)=f(1)$.  Two paths $f$ and $g$ in $X$ can be composed whenever
  $r(f)=f (1) = g (0)=\ell(g)$, and then
\begin{equation*}
  (f\cdot g) (x) =
  \begin{cases}
    f (2x) & \text{ if } 0\le x\le \frac{1}{2},\\
g(2x-1) & \text{ if } \frac{1}{2}\le x\le 1.
  \end{cases}
\end{equation*}
The parameterisation destroys associativity of composition, hence
$(X^{[0,1]},\cdot,\ell,r)$ is a local partial $\ell r$-magma.  The
powerset lifting to $\Pow (X^{[0,1]})$ satisfies the properties of
Lemma~\ref{P:lr-magma-lift}, but even weak locality fails due to the
absence of associativity in $X^{[0,1]}$ and, accordingly,
$\Pow (X^{[0,1]})$. The same failure occurs when lifting to a
convolution quantale $Q^{X^{[0,1]}}$ along Theorem~\ref{P:lr-lifting}:
associativity of the modal (pre)quantale $Q$ of weights makes no
difference; the convolution algebra forms merely a modal prequantale
whenever $Q$ does.

\item Path composition is of course associative up-to homotopy. The
  associated local partial $\ell r$-semigroup can then be lifted like
  any other category.  Associativity can be enforced in a more
  fine-grained way when paths are considered up-to reparametrisation
  equivalence \cite{journals/jhrs/FahrenbergR07}: two paths
  $f,g:[0,1]\to X$ are \emph{reparametrisation equivalent} if there is
  a path $h$ and surjective increasing maps
  $\varphi,\psi:[0,1]\to[0,1]$ such that $f=h\circ \varphi$ and
  $g=h\circ\psi$.  Composition of reparametrisation equivalence
  classes of paths (also called \emph{traces}) is associative and the
  convolution algebra a modal quantale.

\item Alternatively, categories of \emph{Moore paths} can be defined
  on intervals of arbitrary length~\cite{Brown06}.  A path is then a
  (continuous) map $f:[0,n]\to X$ and, writing $|f|$ instead of $n$
  and likewise,
\begin{equation*}
  (f\cdot g) (x) =
  \begin{cases}
    f (x) & \text{ if } 0\le x\le |f|,\\
g (x-|f|) & \text{ if } |f|\le x\le |f|+|g|.
  \end{cases}
\end{equation*}
Lifting along Theorem~\ref{P:lr-pow-lifting} now yields a modal
convolution quantale appropriate, for instance, for interval temporal
logics~\cite{Mos12} or durational calculi~\cite{ZhouH04} for hybrid or
continuous dynamical systems (paths could correspond to trajectories
for initial value problems for vector fields or systems of
differential equations). Lifting along Theorem~\ref{P:lr-lifting}
yields algebras that may be useful for modelling for weighted,
probabilistic or stochastic systems with continuous dynamics. \qed
\end{enumerate}
\end{example}

\def\Int{\mathrm{Int}}
\begin{example}[$\Delta$-sets]
  \emph{A presimplicial set} \cite{RourkeSanderson71} $K$ is a
  sequence of sets $(K_n)_{n\geq 0}$ equipped with face maps
  $d_i:K_n\to K_{n-1}$, $i\in \{0,\dotsc,n\}$, satisfying the
  simplicial identities $d_i\circ d_j=d_{j-1}\circ d_i$ for all $i<j$.
  Elements of $K_n$ are $n$-simplices.  Presimplicial sets generalise
  digraphs: $K_0$, $K_1$ are the sets of vertices and edges of a
  digraph, respectively; $d_0,d_1:K_1\to K_0$ are the source and
  target maps.  Higher simplices represent compositions of edges.  For
  edges $x,y\in X_1$, for example, a $2$-simplex $z\in K_2$ with
  $d_2(z)=x$ and $d_0(z)=y$ reflects the fact that the edge $d_1(z)$
  is a composition of $x$ and $y$.  For given $x$ and $y$ there may be
  many such $2$-simplices $z$, or none at all.
  
  There are at least two ways of constructing an
  $\ell r$-multisemigroup from a presimplicial set $K$.

  First, for $0\leq i\leq n$ and $x\in K_n$, denote
  \begin{equation*}
    s_i(x)=(d_{i+1}\circ d_{i+2}\circ \dots \circ
    d_{n})(x)\qquad\text{ and }\qquad
    t_i(x)=(d_{0}\circ d_{1}\circ \dots \circ d_{n-i-1})(x).
  \end{equation*}
  By definition, $s_i(x),t_i(x)\in K_i$.  The set of simplices
  $K=\bigsqcup_{n\geq 0} K_n$ forms a graded $\ell r$-multisemigroup
  $(K, \mcomp,\ell,r)$ with
    \begin{equation*}
      x\in y\mcomp z\ \Leftrightarrow\ \exists i.\  y=s_i(x) \land z=t_{n-i}(x)
    \end{equation*}
    and $\ell(x)=s_0(x)$, $r(x)=t_0(x)$.  Associativity follows from
    the relationship $t_j(s_i(x))=s_j ( t_k(x))$ that holds for all
    $x\in X_n$ and $0\leq i,j,k\leq n$ such that $i+k=n+j$.

    In general, the graded $\ell r$-multisemigroup $(K,\mcomp)$ is
    neither local nor partial.  Locality and partiality hold if $K$ is
    the nerve of a category $C$ (we omit degeneracies).  In this case,
    elements of $K_n$ are sequences of morphisms
    \begin{equation*}
    	x=(c_0 \xrightarrow{\alpha_1} c_1\xrightarrow{\alpha_2} \dotsm \xrightarrow{\alpha_n} c_n)
      \end{equation*}
    with $\ell(x)=(c_0)$, $r(x)=(c_n)$, while $\mcomp$ is the
    concatenation of sequences.
    
    The second construction uses \emph{simplices with interfaces} in
    $K$, which are triples $(s_i(x),x,t_j(x))$ for $x\in K_n$,
    $0\leq i,j\leq n$.  The set $\Int(K)$ of all simplices with
    interfaces in $K$ forms an $\ell r$-multisemigroup with
	    \begin{multline*}
      (s_p(x),x,t_q(x))\in (s_i(y),y, t_j(y))\mcomp (s_k(z),z,t_l(z))
      \\
      \Leftrightarrow
       s_p(x)=s_i(y) \land
       t_j(y)=s_k(z) \land
       t_q(x)=t_l(z)\land
       y=s_{u}(x) \land z=t_{n-u+j}(x),
    \end{multline*}
    for $x\in K_n$, $y\in K_u$, $z\in K_{n-u+j}$, and
    \begin{align*}
    	\ell(s_i(x),x,t_j(x))&=(s_i(s_i(x)),s_i(x),t_i(s_i(x)))=(s_i(x),s_i(x),s_i(x)),\\
    	r(s_i(x),x,t_j(x))&=(s_j(t_j(x)),t_j(x),t_j(t_j(x))=(t_j(x),t_j(x),t_j(x)).
    \end{align*}
	There is an obvious embedding
    \begin{equation*}
      K\ni x \mapsto (s_0(x),x,t_0(x)) \in \Int(K)
    \end{equation*}
    of $\ell r$-multisemigroups; hence, $\Int(K)$ is, again, neither
    partial nor local.  As with the first construction, partiality and
    locality hold for nerves of categories.  An element $x$ of
    $\Int(K)$ is then a sequence of composable morphisms of $C$ with
    distinguished initial segment $\ell(x)$ and final segment $r(x)$,
    while $x\mcomp y$ is a concatenation of $x$ and $y$ with $r(x)$
    and $\ell(y)$ identified.\qed
\end{example}

\begin{example}[Precubical sets]
  \emph{A precubical set} $X$ \cite{thesis/Serre51,book/Grandis09} is
  a sequence of sets $(X_n)_{n\geq 0}$ equipped with face maps
  $d^\varepsilon_{i}:X_n\to X_{n-1}$, $1\leq i\leq n$,
  $\varepsilon\in\{0,1\}$, satisfying the identities
  $d^\varepsilon_i\circ d^\eta_j = d^\eta_{j-1}\circ d^\varepsilon_i$
  for $i<j$ and $\varepsilon,\eta\in\{0,1\}$.  Like presimplicial
  sets, precubical sets generalise digraphs: $X_0$ and $X_1$ may be
  regarded as  sets of vertices and edges, respectively, and
  $d^0_1,d^1_1:X_1\to X_0$ as source and target maps.  Higher cells
  represent equivalences between paths. A square $x\in X_2$, for
  example, reflects the fact that the paths $(d^0_1(x),d^1_2(x))$ and
  $(d^0_1(x),d^1_1(x))$ are equivalent.

  For a subset $A=\{a_1<\dotsm<a_k\}\subseteq [n]$ and
  $\varepsilon\in\{0,1\}$ define the iterated face map
  $d^\varepsilon_A:X_n\to X_{n-|A|}$ by
\begin{equation*}
	d^\varepsilon_A(x)=d^\varepsilon_{a_1}\circ \dotsm \circ d^\varepsilon_{a_k}(x).
\end{equation*}
The precubical set $X$ then forms an $\ell r$-semigroup
$(X,\mcomp,\ell,r)$ where
	\begin{equation*}
          x\in y\mcomp z \Leftrightarrow \exists A\subseteq[n].\  y=d^0_A(x) \land z=d^1_{[n]\setminus A}(x)
	\end{equation*}
	and $\ell(x)=d^0_{[n]}(x)\in X_0$, $r(x)=d^1_{[n]}(x)\in X_0$
        for all $x\in X_n$.  Like in the previous example, the
        $\ell r$-multisemigroup $X$ is neither partial nor local.
	
        A special case of this example is the shuffle multimonoid from
        Example \ref{ex:multimon}.  Let $\Sigma$ be a finite alphabet,
        $X_n$ the set of all words of length $n$, and
        $d^\varepsilon_i:X_n\to X_{n-1}$ the map that removes the
        $i$--th letter.  Then $X=(X_n,d^\varepsilon_i)$ is a
        precubical set and the associated $\ell r$-multisemigroup
        $(X,\mcomp,\ell,r)$ is the shuffle multimonoid on
        $\Sigma$.\qed
 \end{example}

Next we present an example of a weakly local modal convolution
quantale.

\begin{example}[Weighted Assertions in Separation Logic]
  Revisiting the non-local partial $\ell r$-semigroup of heaplets,
  lifting along Theorem~\ref{P:lr-lifting} yields a weakly local modal
  quantale as the convolution algebra, yet once again with trivial
  domain/codomain structure, as the empty heaplet is the only
  unit~\cite{DongolHS16}. The modal structure is trivial as there are
  no elements between $\emptyset$ and $\{\varepsilon\}$.  This models
  weighted assertions of separation logic, including fuzzy or
  probabilistic ones.\qed
\end{example}

Our final example is a lifting from a proper $\ell r$-multisemigroup
that is not even partial.

\begin{example}[Weighted Shuffle languages]\label{ex:shuffle}
  We have already seen that words under shuffle form a proper
  $\ell r$-multisemigroup that is not partial, but
  local~\cite{CranchDS20}.  There is only one unit---the empty
  word. Liftings to $Q$-weighted shuffle languages are discussed
  in~\cite{CranchDS20a}. The domain/codomain structure of the
  convolution algebra is trivial.\qed
\end{example}


\section{Modal Concurrent Convolution Quantales}\label{S:modal-chantales}

Correspondences for biquantales and relational bimonoids that satisfy
certain weak interchange laws have already been
studied~\cite{CranchDS20a}.  The two operations in this setting are
interpreted as a serial or sequential and a concurrent composition,
yet the latter need not be commutative. We outline a simple modal
extension.

An \emph{interchange $\ell r$-multisemigroup} is formed by two local
$\ell r$-multisemigroups $(X,\odot_i,\ell_i,r_i)_{i\in \{0,1\}}$ that
interact via the \emph{weak interchange law}
\begin{equation*}
  (w \odot_1 x)\odot_0 (y\odot_1 z) \subseteq (w\odot_0 y) \odot_1
  (x\odot_0 z).
\end{equation*}

It is helpful to think of $\odot_0$ as a \emph{horizontal}
multioperation and of $\odot_1$ as a \emph{vertical} one; weak
interchange can then be represented graphically:
\begin{equation*}
  \begin{tikzpicture}[-, x=.5cm, y=.4cm]
    \begin{scope}
      \draw (0,0) to (6,0) to (6,-6) to (0,-6) to (0,0);
      \draw (3,0) to (3,-6);
      \draw (0,-2) to (3,-2);
      \draw (3,-4) to (6,-4);
      \node at (1.5,-1) {$w$};
      \node at (1.5,-4) {$x$};
      \node at (4.5,-2) {$y$};
      \node at (4.5,-5) {$z$};
      \node at (7.5,-3) {$\subseteq$};
    \end{scope}
    \begin{scope}[shift={(9,0)}]
      \draw (0,0) to (6,0) to (6,-6) to (0,-6) to (0,0);
      \draw (0,-3) to (6,-3);
      \draw (2,0) to (2,-3);
      \draw (4,-3) to (4,-6);
      \node at (1,-1.5) {$\vphantom{y}w$};
      \node at (2,-4.5) {$x$};
      \node at (4,-1.5) {$y$};
      \node at (5,-4.5) {$z$};
    \end{scope}
  \end{tikzpicture}
\end{equation*}

Similarly, an \emph{interchange quantale} is formed by two quantales
$(Q,\le,\cdot_i,1_i)_{i\in\{0,1\}}$ that interact via the weak
interchange law
\begin{equation*}
    (\alpha \cdot_1 \beta)\cdot_0 (\gamma\cdot_1 \delta) \le (\alpha\cdot_0 \gamma) \cdot_1
  (\beta\cdot_0 \delta).
\end{equation*}

In any interchange quantale, substituting $1_0$ and $1_1$ in the
interchange law yields $1_0\le 1_1$.  Moreover, $1_0=1_1=1$ leads to
an Eckmann-Hilton collapse where weak interchange implies the smaller
variants
\begin{alignat*}{5}
  \alpha\cdot_0 \beta &\le \alpha\cdot_1 \beta, & \alpha\cdot_0 \beta &\le\beta \cdot_1 \alpha,\\
\alpha\cdot_0(\beta\cdot_1 \gamma) &\le (\alpha\cdot_0 \beta)\cdot_1 \gamma,&\qquad (\alpha\cdot_1
\beta)\cdot_0 \gamma &\le \alpha\cdot_1 (\beta \cdot_0 \gamma),\\
\alpha\cdot_0(\beta\cdot_1 \gamma) &\le \beta \cdot_0(\alpha\cdot_1 \gamma),& (\alpha\cdot_1
\beta)\cdot_0 \gamma &\le (\alpha \cdot_0 \gamma)\cdot_1 \beta.
\end{alignat*}

Correspondence triangles between such interchange laws in double
$\ell r$-multimagmas $X$, double prequantales $Q$ and double
prequantales $Q^X$ (where no interchange laws are assumed) have
already been established~\cite{CranchDS20a} .

Here we extend interchange $\ell r$-multisemigroups to modal
interchange $\ell r$-multisemigroups by imposing locality for $\ell_i$
and $r_i$.  Interchange quantales are extended to \emph{modal
  interchange quantales} by adding domain and codomain operations
$\dom_i$ and $\cod_i$, for $i\in\{0,1\}$, with the usual axioms.

Combining the results for interchange algebras~\cite{CranchDS20a} with
those in this article then yields the following lifting result.
\begin{theorem}
  Let $X$ be a local interchange $\ell r$-multisemigroup and $Q$ a
  modal interchange quantale. Then $Q^X$ is a modal interchange
  quantale.
\end{theorem}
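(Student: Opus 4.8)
The plan is to assemble the statement from two ingredients already available: the modal convolution quantale theorem (Theorem~\ref{P:lr-lifting}), applied once for each of the two compositions, and the weak interchange correspondence for convolution algebras of~\cite{CranchDS20a}.

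First I would fix $i\in\{0,1\}$ and apply Theorem~\ref{P:lr-lifting}(3) to the local $\ell r$-multisemigroup $(X,\odot_i,\ell_i,r_i)$ and the modal quantale $(Q,\le,\cdot_i,1_i,\dom_i,\cod_i)$. This yields that $(Q^X,\le,\ast_i,\id_{E_i},\Dom_i,\Cod_i)$ is a modal quantale, where $\ast_i$ is the convolution induced by $\odot_i$ and $\cdot_i$, and $\Dom_i,\Cod_i$ are built from $\ell_i,r_i,\dom_i,\cod_i$ exactly as $\Dom,\Cod$ are built from $\ell,r,\dom,\cod$ in Section~\ref{S:conv-alg}. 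Carrying this out for both $i=0$ and $i=1$ equips $Q^X$ with the two modal quantale structures required by the definition of a modal interchange quantale; these live on the same function space $Q^X$ and the same pointwise order, so nothing needs to be reconciled beyond noting that the construction of $\ast_i$ and of $\Dom_i,\Cod_i$ is uniform in $i$.

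Second, it remains to verify that these two structures interact via the weak interchange law $(f\ast_1 g)\ast_0(h\ast_1 k)\le(f\ast_0 h)\ast_1(g\ast_0 k)$ in $Q^X$. This is precisely the lifting direction of the weak interchange correspondence triangle for double prequantales established in~\cite{CranchDS20a}: since weak interchange holds between $\odot_0$ and $\odot_1$ in $X$ and between $\cdot_0$ and $\cdot_1$ in $Q$, it holds between $\ast_0$ and $\ast_1$ in $Q^X$. A self-contained verification expands both sides pointwise at a fixed $x\in X$: using sup-preservation of $\cdot_0$, each summand of the left-hand side is $(f(w)\cdot_1 g(p))\cdot_0(h(y)\cdot_1 k(z))$ for some decomposition $x\in a\odot_0 b$ with $a\in w\odot_1 p$ and $b\in y\odot_1 z$; weak interchange in $X$ gives $c\in w\odot_0 y$ and $d\in p\odot_0 z$ with $x\in c\odot_1 d$, while weak interchange in $Q$ bounds that summand by $(f(w)\cdot_0 h(y))\cdot_1(g(p)\cdot_0 k(z))$, which is in turn below $(f\ast_0 h)(c)\cdot_1(g\ast_0 k)(d)$ and hence below $((f\ast_0 h)\ast_1(g\ast_0 k))(x)$. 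Taking the supremum over all summands yields the inequality.

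There is no serious mathematical obstacle here: the claim is a direct composition of Theorem~\ref{P:lr-lifting} with the cited interchange correspondence, both of which are stated at the modal prequantale level while the present hypotheses (locality in $X$, a full modal quantale $Q$) are the strong ones, so no further side conditions arise. The only point requiring care is bookkeeping — keeping the indices $0$ and $1$ straight and checking that the $\Dom_i,\Cod_i$ obtained from the two applications of Theorem~\ref{P:lr-lifting} are exactly the operations named in the definition of a modal interchange quantale, which they are by construction. If one prefers, the whole argument can be phrased as an instance of the ``two-out-of-three'' interchange correspondence of~\cite{CranchDS20a} layered on top of the per-index modal lifting of Section~\ref{S:conv-alg}.
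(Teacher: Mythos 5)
Your proposal is correct and follows essentially the same route as the paper, which proves this theorem precisely by combining Theorem~\ref{P:lr-lifting} (applied once per index $i\in\{0,1\}$) with the weak interchange lifting for double (pre)quantales from~\cite{CranchDS20a}. Your self-contained pointwise verification of the interchange inequality in $Q^X$ is a correct expansion of the step the paper delegates to the citation, so you have if anything supplied more detail than the text itself.
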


In addition, the results in this article combine with those for
interchange algebras to structures with only weak locality or even
without locality in a modular way and extend to full correspondence
triangles.  Smaller interchange laws in two of $X$, $Q$ and $Q^X$ also
yield a small interchange law in the remaining
algebra~\cite{CranchDS20a}.

In concrete models such as the pomset and graph categories with
interfaces discussed in Section~\ref{S:examples}, it seems reasonable
to impose additional laws. For pomsets with interfaces, the parallel
composition is simply disjoint union, and interfaces extend
accordingly. The unit of parallel composition is the empty pomset with
empty interfaces, so that $\ell_1$ and $r_1$ become trivial.

Parallel composition $\odot_1$ can then be made commutative in
$X$~\cite{Winkowski77}. This yields commutativity of~$\ast_1$ in $Q^X$
whenever $\cdot_1$ commutes in $Q$, plus the two usual remaining
correspondences~\cite{CranchDS20a}.

Moreover, the serial source and target maps $\ell_0$ and $r_0$ may be
assumed to distribute with parallel composition $\odot_1$ whenever
this composition is defined:
\begin{equation*}
  \ell_0(x\odot_1 y) \subseteq \ell_0(x)\odot_1\ell_0(y)\qquad\text{ and
  }\qquad r_0(x\odot_1 y) \subseteq  r_0(x)\odot_1 r_0(y).
\end{equation*}
This means that $\ell_0$ and $r_0$ are multisemigroup automorphisms
with respect to $(X,\odot_1)$.  They may be extended to
$\ell r$-multisemigroup morphisms on pomsets with interfaces, but we
are not interested in this property in this section.  Alternatively,
equational variants of the two automorphism laws may be assumed. We
impose corresponding laws in $Q$, 
 \begin{equation*}
\dom_0(\alpha\cdot_1 \beta) \le \dom_0(\alpha) \cdot_1 \dom_0(\beta)\qquad\text{ and }\qquad
  \cod_0(\alpha\cdot_1 \beta) \le \cod_0(\alpha) \cdot_1 \cod_0(\beta),
\end{equation*}
and refer to these laws uniformly as \emph{weak automorphism laws} in
the sequel.  We use $[P]_0$ and $[P]_1$ in the proof that follows,
mapping to different units as needed.

\begin{lemma}
  Let $X$ be a local interchange $\ell r$-semigroup and $Q$ a modal
  interchange quantale.  If the weak automorphism laws hold in $X$ and
  $Q$, then they hold in hold in $Q^X$: 
\begin{equation*}
  \Dom_0(f\ast_1 g) \le \Dom_0(f) \ast_1 \Dom_0(g)\qquad\text{ and }\qquad
  \Cod_0(f\ast_1 g) \le \Cod_0(f) \ast_1 \Cod_0(g).
\end{equation*}
\end{lemma}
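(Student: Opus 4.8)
The plan is to prove the inequality pointwise: for every $p\in X$ I will show $\Dom_0(f\ast_1 g)(p) \le (\Dom_0(f)\ast_1\Dom_0(g))(p)$, and then conclude by the pointwise order on $Q^X$. The statement for $\Cod_0$ follows by the duality that reverses the arguments of each $\odot_i$ and $\cdot_i$ and interchanges $\ell_i$ with $r_i$, $\dom_i$ with $\cod_i$, and $\Dom_i$ with $\Cod_i$, so only the domain inequality needs to be written out.

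First I would unfold both sides as suprema in $Q$. Using the definition of $\Dom_0$, the definition of $\ast_1$, and the fact (Section~\ref{S:modal-quantales}) that $\dom_0$ preserves arbitrary sups in $Q$, one gets
\[
  \Dom_0(f\ast_1 g)(p) = \Sup\bigl\{\dom_0\bigl(f(y)\cdot_1 g(z)\bigr) \mid y,z\in X,\ \exists q.\ \ell_0(q)=p \ \land\ q\in y\odot_1 z\bigr\}.
\]
For the other side, I would expand $\Dom_0(f) = \Sup_x \dom_0(f(x))\cdot_0\delta_{\ell_0(x)}$ and likewise for $\Dom_0(g)$, push $\ast_1$ through these sups (it preserves them, being a quantalic composition on $Q^X$ by Theorem~\ref{P:lr-conv}), evaluate the elementary convolution $(\alpha\cdot_0\delta_x)\ast_1(\beta\cdot_0\delta_y)$, which takes value $\alpha\cdot_1\beta$ at each $w\in x\odot_1 y$ and $\bot$ elsewhere --- the two-operation analogue of the $\delta$-computations preceding Lemma~\ref{P:lr-correspondence-aux} --- and finally distribute $\cdot_1$ over sups in $Q$. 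This yields
\[
  (\Dom_0(f)\ast_1\Dom_0(g))(p) = \Sup\bigl\{\dom_0(f(y))\cdot_1\dom_0(g(z)) \mid y,z\in X,\ p\in\ell_0(y)\odot_1\ell_0(z)\bigr\}.
\]

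Then I would compare the two suprema term by term. The weak automorphism law in $Q$ bounds every summand of the first supremum: $\dom_0(f(y)\cdot_1 g(z)) \le \dom_0(f(y))\cdot_1\dom_0(g(z))$. The weak automorphism law in $X$, namely $\ell_0(y\odot_1 z)\subseteq\ell_0(y)\odot_1\ell_0(z)$, shows that whenever some $q$ satisfies $\ell_0(q)=p$ and $q\in y\odot_1 z$ one has $p=\ell_0(q)\in\ell_0(y)\odot_1\ell_0(z)$; hence every index pair $(y,z)$ contributing to the first supremum also contributes to the second. Combining these two observations gives $\Dom_0(f\ast_1 g)(p)\le(\Dom_0(f)\ast_1\Dom_0(g))(p)$, as desired.

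I do not expect a serious obstacle here: the algebraic content is carried entirely by the two weak automorphism hypotheses, and the only point requiring care is the bookkeeping of the nested suprema together with the fact that $\dom_0$, $\cdot_1$ and $\ast_1$ each commute with the relevant sups --- which holds because $(Q,\le,\cdot_i,1_i)$ and $(Q^X,\le,\ast_i,\id_{E_i})$ are quantales and $\dom_0$ is a domain operation on a quantale.
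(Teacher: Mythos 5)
Your proposal is correct and follows essentially the same route as the paper's proof: unfold $\Dom_0(f\ast_1 g)(p)$ and $(\Dom_0(f)\ast_1\Dom_0(g))(p)$ into suprema via sup-preservation of $\dom_0$, $\cdot_1$ and $\ast_1$, apply the weak automorphism law in $Q$ to each summand and the one in $X$ to the index condition $p\in\ell_0(y\odot_1 z)\Rightarrow p\in\ell_0(y)\odot_1\ell_0(z)$, and dispatch $\Cod_0$ by duality. The only difference is organizational (you compare the two expanded suprema term by term, while the paper runs a single chain of (in)equalities), which does not change the argument.
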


\begin{proof}
\begin{align*}
  \Dom_0(f\ast_1 g)(x) 
&=\Sup_u\dom_0\left(\Sup_{v,w}f(v)\cdot_1 g(w))\cdot_1 [u\in v\odot_1 w]_1\right)\cdot_0
  \delta_{\ell_0(u)}(x)\\
&=\dom_0\left(\Sup_{v,w}f(v)\cdot_1 g(w)\right)\cdot_0 [x\in \ell_0(v\odot_1 w)]_0\\
&=\Sup_{v,w}\dom_0(f(v)\cdot_1 g(w))\cdot_0 [x\in \ell_0(v\odot_1 w)]_0\\
&\le\Sup_{v,w}\dom_0(f(v))\cdot_1 \dom_0(g(w))\cdot_0 [x\in
  \ell_0(v)\odot_1 \ell_0(w)]_0\\
&=\Sup_{t,u} (\Sup_v\dom_0(f(v))\cdot_1 \delta_{\ell(v)}(t))\cdot_1
  (\Sup_w\dom_0(g(w))\cdot_1 \delta_{\ell(w)}(u)\cdot_1 [x\in t\odot u]_1\\
&= (\Dom_0(f)\ast_1 \Dom_0(g))(x).
\end{align*}
The proof for $r$, $\cod$ and $\Cod$ is dual.
\end{proof}

\begin{remark}
For the remaining correspondences,  observe that
\begin{align*}
  \Dom_0(\delta^\alpha_x\ast_1 \delta^\beta_y)(z) &= \dom_0(\alpha \cdot_1 \beta) \cdot_0 [z\in 
                                                      \ell(x\odot_1 y)]_0,\\
  (\Dom_0(\delta^\alpha_x)\ast_1 \Dom_0(\delta^\beta_y))(z)
                                                    &= \dom_0(\alpha) \cdot_1 \dom_0(\beta)\cdot_1 [z\in \ell_0(x)\odot_1 \ell_0(y)]_1.
\end{align*}
\begin{itemize}
\item For the correspondence from $X$ and $Q^X$ to $Q$, suppose the
  weak automorphism laws hold in $X$ and $Q^X$. Assume that there exist
  $x,y,z\in X$ such that $z \in \ell_0(x)\odot_1 \ell_0(y)$. Then
\begin{equation*}
  \dom_0(\alpha\cdot_1 \beta) = \Dom_0(\delta^\alpha_x\ast_1 \delta^\beta_y)(z)
  \le (\Dom_0(\delta^\alpha_x)\ast_1 \Dom_0(\delta^\beta_y))(z)=\dom_0(\alpha) \cdot_1 \dom_0(\beta).
\end{equation*}
\item For the correspondence from $Q$ and $Q^X$ to $X$, suppose
  the weak automorphism laws hold in $Q$ and $Q^X$. Assume that there exist
  $\alpha,\beta\in Q$ such that $\dom_0(\alpha\cdot_1\beta) \neq 0$.
  It then follows from the above observation that
  $\Dom_0(\delta^\alpha_x\ast_1\delta^\beta_y)(z) \le
  (\Dom_0((\delta^\alpha_x)\ast_1 \Dom_0(\delta^\beta_y))(z)$ implies
  $\ell_0(x\odot_1 y)\le \ell_0(x)\odot_1 \ell_0(y)$.
\end{itemize}
\end{remark}

Further adjustments to concrete semantics for concurrent systems are
left for future work.


\section{Quantitative Modal Algebras}\label{S:dualities}

The domain and codomain yield modal diamond and box operators on the
convolution algebra, as usual for modal
semirings~\cite{DesharnaisS11}. Many properties developed for such
semirings transfer automatically to quantales, yet additional
properties hold.  Forward and backward modal diamond operators can be
defined in a modal quantale $Q$, as on any modal semiring, for
$\alpha,\beta\in Q$, as
\begin{equation*}
  |\alpha\rangle \beta = \dom(\alpha\beta)\qquad\text{ and }\qquad \langle \alpha| \beta = \cod(\beta\alpha).
\end{equation*}
The two operators are related by opposition and the following
conjugation, as for modal semirings~\cite{DesharnaisS11}.
\begin{lemma}\label{P:dia-conj}
In every modal quantale $Q$,  for all $\alpha\in Q$ and $\rho,\sigma\in Q_\dom$, 
\begin{equation*}
  \rho\cdot |\alpha\rangle \sigma = \bot \Leftrightarrow \langle \alpha|\rho \cdot \sigma =\bot.
\end{equation*}
\end{lemma}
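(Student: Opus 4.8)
The plan is to show that both sides of the claimed equivalence are equivalent to the single condition $\rho\cdot\alpha\cdot\sigma = \bot$, and then chain the two equivalences. The only preliminary fact I would need is that in a domain quantale $\dom(\gamma)=\bot$ iff $\gamma=\bot$: right-to-left is strictness, and left-to-right follows from the strengthened absorption identity $\dom(\gamma)\cdot\gamma=\gamma$ together with $\bot\cdot\gamma=\bot$, the latter being sup-preservation of composition over the empty set in a quantale. Dually, $\cod(\gamma)=\bot$ iff $\gamma=\bot$.

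For the left-hand side I would unfold $|\alpha\rangle\sigma=\dom(\alpha\sigma)$ and argue
\[
\rho\cdot|\alpha\rangle\sigma=\bot
\ \Leftrightarrow\ \dom\bigl(\rho\cdot\dom(\alpha\sigma)\bigr)=\bot
\ \Leftrightarrow\ \dom(\rho\cdot\alpha\cdot\sigma)=\bot
\ \Leftrightarrow\ \rho\cdot\alpha\cdot\sigma=\bot,
\]
where the first and third steps use the preliminary fact and the middle step is domain locality $\dom(\beta\cdot\dom(\gamma))=\dom(\beta\cdot\gamma)$ with $\beta=\rho$, $\gamma=\alpha\sigma$. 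Symmetrically, unfolding $\langle\alpha|\rho=\cod(\rho\alpha)$ and using codomain locality $\cod(\cod(\beta)\cdot\gamma)=\cod(\beta\cdot\gamma)$ gives
\[
\langle\alpha|\rho\cdot\sigma=\bot
\ \Leftrightarrow\ \cod\bigl(\cod(\rho\alpha)\cdot\sigma\bigr)=\bot
\ \Leftrightarrow\ \cod(\rho\cdot\alpha\cdot\sigma)=\bot
\ \Leftrightarrow\ \rho\cdot\alpha\cdot\sigma=\bot .
\]
Combining the two displays yields the conjugation law.

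I do not anticipate a real obstacle: the computation is routine once one has the preliminary fact, and the points that need a moment of care are merely that $\bot$ genuinely annihilates on the relevant side (which is exactly sup-preservation of $\cdot$ over $\Sup\emptyset$ in a quantale) and that the locality axioms are applied with the correct bracketing. Note that the hypotheses $\rho,\sigma\in Q_\dom$ are not actually used in the chain; they only ensure that the diamond expressions land in the domain algebra. Alternatively, since the domain quantale axioms coincide with the domain semiring axioms with $\sup$ playing the role of $+$, one may simply transfer the corresponding statement for modal semirings, as the excerpt already does for the surrounding diamond properties.
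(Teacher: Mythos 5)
Your proof is correct and is essentially the paper's own argument: the paper also chains both sides through the pivot $\rho\cdot\alpha\cdot\sigma=\bot$, justifying the steps by the laws $\alpha\cdot\dom(\beta)=\bot\Leftrightarrow\alpha\cdot\beta=\bot$ and $\cod(\alpha)\cdot\beta=\bot\Leftrightarrow\alpha\cdot\beta=\bot$, which it in turn derives exactly as you do, from $\dom(\gamma)=\bot\Leftrightarrow\gamma=\bot$ (strictness plus the strengthened absorption identity) together with locality. Your observation that $\rho,\sigma\in Q_\dom$ is not needed for the computation itself is also accurate.
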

\begin{proof}
 $  \rho|\alpha\rangle \sigma = \bot
\Leftrightarrow \rho\dom(\alpha\sigma) = \bot
\Leftrightarrow  \rho\alpha\sigma = \bot
\Leftrightarrow \cod(\rho\alpha)\sigma = \bot
\Leftrightarrow \langle \alpha|\rho \cdot \sigma = \bot$.
 \end{proof}
 The laws
 $\cod(\alpha)\beta = \bot \Leftrightarrow \alpha\beta = \bot
 \Leftrightarrow \alpha\dom(\beta)=\bot$ used in this proof hold in
 any modal quantale or modal semiring because
 $\dom(\alpha)=\bot\Leftrightarrow \alpha=\bot\Leftrightarrow
 \cod(\alpha)=\bot$ and by locality~\cite{DesharnaisS11}.  We have
 seen similar laws for $\ell r$-multisemigroups in
 Lemma~\ref{P:local-alt} in Section~\ref{S:lr-semigroups}.

 Another consequence of locality is that
 $|\alpha\rangle \beta = |\alpha\rangle \dom(\beta)$, so that $\beta$
 is automatically a domain element. We henceforth indicate this by
 writing $|\alpha\rangle \rho$ for $\rho\in Q_\dom$.  Moreover,
 $|\alpha\beta\rangle = |\alpha\rangle\circ |\beta\rangle$ and
 $\langle \alpha\beta| = \langle \beta|\circ \langle \alpha|$, as
 is standard in modal logic. Without locality, only inequalities hold.

\begin{lemma}\label{P:dia-sup}
  In any modal quantale $Q$, the operators $|\alpha\rangle$ and
  $\langle \alpha|$ preserve arbitrary sups in $Q_\dom$, for all
  $\alpha\in Q$.
\end{lemma}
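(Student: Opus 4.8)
The plan is to reduce the claim to the fact, already recorded just before this lemma, that in a modal quantale $\dom$ (and dually $\cod$) preserves arbitrary sups, together with the defining equations $|\alpha\rangle\rho = \dom(\alpha\rho)$ and $\langle\alpha|\rho = \cod(\rho\alpha)$. I would argue for $|\alpha\rangle$ and obtain the statement for $\langle\alpha|$ by opposition, since $\langle\alpha|$ on $Q$ is just $|\alpha\rangle$ on $Q^\op$ and the hypotheses are self-dual.

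First I would fix $\alpha\in Q$ and a family $A\subseteq Q_\dom$, and compute
\begin{equation*}
  |\alpha\rangle \Sup A = \dom\!\left(\alpha\cdot \Sup A\right) = \dom\!\left(\Sup_{\rho\in A}\alpha\cdot\rho\right) = \Sup_{\rho\in A}\dom(\alpha\cdot\rho) = \Sup_{\rho\in A} |\alpha\rangle\rho,
\end{equation*}
where the second step uses that quantalic composition preserves sups in its second argument, the third uses sup-preservation of $\dom$ (quoted from the discussion of domain quantales above), and the outer equalities are the definition of $|\alpha\rangle$. One point that needs a remark: the supremum $\Sup A$ is taken \emph{in $Q_\dom$}, which as noted earlier need not coincide with the supremum in $Q$; however, also as noted earlier, $\Sup\dom(A)$ is again a domain element and $\dom(\Sup\dom(A)) = \Sup\dom(A)$, so the sup of a family of domain elements computed in $Q$ already lies in $Q_\dom$ and hence agrees with the sup in $Q_\dom$. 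Thus the computation above is legitimate with all sups read in $Q$, and the result is an equality of domain elements. The empty-family case is covered since $\dom(\alpha\cdot\bot) = \dom(\bot) = \bot$ by strictness.

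Then for $\langle\alpha|$ I would simply invoke the opposite: $\langle\alpha|\rho = \cod(\rho\alpha)$, and $\cod$ preserves arbitrary sups in $Q^\op$ by the dual of the same quoted fact, and composition in $Q^\op$ preserves sups in its second argument because composition in $Q$ preserves them in its first. I do not expect a genuine obstacle here; the only subtlety worth being careful about is the distinction between sups in $Q_\dom$ and in $Q$, which is precisely why I would state the argument in terms of sups in $Q$ and then note they land back in $Q_\dom$. Note also that the restriction $A\subseteq Q_\dom$ is not strictly needed for the computation — by locality $|\alpha\rangle\beta = |\alpha\rangle\dom(\beta)$, so arbitrary families could be reduced to families of domain elements — but since the conjugation and other diamond laws above already use the convention $|\alpha\rangle\rho$ with $\rho\in Q_\dom$, stating it this way keeps the presentation uniform.
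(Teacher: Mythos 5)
Your proof is correct and follows essentially the same route as the paper: unfold $|\alpha\rangle\rho=\dom(\alpha\rho)$, push the sup through composition and then through $\dom$ using its arbitrary-sup-preservation, and handle $\langle\alpha|$ by opposition. Your extra remark that the $Q$-sup of a family of domain elements is itself a domain element (so sups in $Q_\dom$ and $Q$ agree here) is a careful point the paper leaves implicit, but it does not change the argument.
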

\begin{proof}
  The $\dom$ and $\cod$ operations preserve all sups in the complete
  lattice $Q_\dom$ (which is equal to
  $Q_\cod$)~\cite{FahrenbergJSZ20}.  Therefore, diamonds preserve all
  sups, too: for all $P\subseteq Q_\dom$,
\begin{equation*}
  |\alpha\rangle \big(\Sup P\big) = \dom\big(\alpha\cdot \big(\Sup P\big)\big)
= \dom\big(\Sup\{\alpha\rho \mid \rho\in P\}\big)
 = \Sup\{\dom(\alpha\rho)\mid \rho\in P\}
= \Sup \{|\alpha\rangle \rho\mid \rho\in P\}.
\end{equation*}
Sup preservation of $\langle \alpha|$ follows by duality.
\end{proof}

\begin{remark}
  In modal semirings, diamond operators are strict and preserve all
  finite sups, that is,
  $|\alpha\rangle \bot = \bot =\langle \alpha|\bot$,
  $|\alpha\rangle (\rho\sup\sigma) = |\alpha\rangle
  \rho\sup|\alpha\rangle \sigma$ and
  $\langle \alpha|(\rho\sup\sigma) = \langle \alpha|\rho\sup\langle
  \alpha|\sigma$ (using quantale notation with $\sup$ in place of $+$
  and $\bot$ in place of $0$).
\end{remark}

An immediate consequence of sup-preservation is that modal quantales
admit box operators.
 \begin{proposition}\label{P:dia-galois}
   In any modal quantale $Q$, the operators $|\alpha\rangle$
   and $\langle \alpha|$ have right adjoints in $Q_\dom$:
\begin{equation*}
  [\alpha|\rho = \Sup\{\sigma\mid |\alpha\rangle \sigma \le \rho\}\qquad\text{ and }\qquad |\alpha]\rho
  = \Sup\{\sigma\mid \langle \alpha|\sigma \le \rho\}. 
\end{equation*}
They satisfy 
\begin{equation*}
  |\alpha\rangle \rho \le \sigma \Leftrightarrow \rho \le [\alpha|\sigma\qquad\text{ and }\qquad
  \langle \alpha|\rho\le \sigma \Leftrightarrow \rho \le |\alpha]\sigma. 
\end{equation*}
 \end{proposition}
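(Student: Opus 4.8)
The plan is to obtain the adjoints as a direct consequence of sup-preservation, via the adjoint functor theorem for complete lattices. Recall from Section~\ref{S:modal-quantales} that in a modal quantale the set $Q_\dom=\dom(Q)=\cod(Q)=Q_\cod$ is a \emph{complete} lattice: all sups of domain elements are again domain elements, although these sups need not agree with those taken in $Q$. Since $\dom$ and $\cod$ both take values in $Q_\dom$, the forward and backward diamonds $|\alpha\rangle=\lambda\rho.\,\dom(\alpha\rho)$ and $\langle\alpha|=\lambda\rho.\,\cod(\rho\alpha)$ restrict to monotone endofunctions of $Q_\dom$, and by Lemma~\ref{P:dia-sup} they preserve arbitrary sups computed in $Q_\dom$.

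The key steps are then the following. First I would invoke the standard fact that a sup-preserving map $f$ between complete lattices has a right adjoint, given explicitly by $f^{\sharp}(\rho)=\Sup\{\sigma\mid f(\sigma)\le\rho\}$. Applying this with $f=|\alpha\rangle$ yields $[\alpha|$, and with $f=\langle\alpha|$ yields $|\alpha]$, which are precisely the formulas in the statement once all sups are read in $Q_\dom$. The two adjunction equivalences are then immediate from the general theorem; alternatively one checks them by hand: if $|\alpha\rangle\rho\le\sigma$ then $\rho$ lies in the set whose sup defines $[\alpha|\sigma$, hence $\rho\le[\alpha|\sigma$; and conversely, if $\rho\le[\alpha|\sigma$ then by monotonicity and sup-preservation of $|\alpha\rangle$ one gets $|\alpha\rangle\rho\le|\alpha\rangle[\alpha|\sigma=\Sup\{|\alpha\rangle\tau\mid|\alpha\rangle\tau\le\sigma\}\le\sigma$. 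The argument for $\langle\alpha|$ and $|\alpha]$ is identical, or follows by opposition since $Q^{\op}$ with domain and codomain swapped is again a modal quantale.

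I do not expect a genuine obstacle here. The one point requiring care is that all the suprema in the defining formulas, and in the verification, are to be computed in the complete lattice $Q_\dom$, not in $Q$; this is exactly what Lemma~\ref{P:dia-sup} supplies, so nothing extra is needed. The rest is a routine instance of the Galois connection between complete lattices.
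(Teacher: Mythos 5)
Your proposal is correct and follows essentially the same route as the paper: invoke Lemma~\ref{P:dia-sup} for sup-preservation of the diamonds on the complete lattice $Q_\dom$ and then apply the adjoint functor theorem to obtain the right adjoints with exactly the stated formulas. The additional hand verification of the Galois equivalences is sound but not needed beyond what the general theorem already gives.
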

\begin{proof}
  The diamonds $|\alpha\rangle$ and $\langle \alpha|$ preserve
  arbitrary sups by Lemma~\ref{P:dia-sup}. Hence they have right
  adjoints (by the adjoint functor theorem), defined as above.
\end{proof}
It follows that $|\alpha\beta] = |\alpha]\circ |\beta]$, because,
for all $\sigma\in Q_\dom$, 
\begin{equation*}
  \sigma \le |\alpha\beta]\rho \Leftrightarrow
  \langle\alpha\beta|\sigma\le \rho \Leftrightarrow
  \langle\beta|\langle \alpha|\sigma \le \rho 
\Leftrightarrow \langle \alpha|\sigma \le |\beta]\rho
\Leftrightarrow \sigma
  \le |\alpha]|\beta]\rho,
\end{equation*}
and dually $[\alpha\beta| = [\beta|\circ [\alpha|$. 

The following lemma relates boxes and diamonds with laws that do not
mention modalities.
\begin{lemma}\label{P:dia-demod}
In every modal quantale, 
\begin{alignat*}{5}
  |\alpha\rangle \rho \leq \sigma &\Leftrightarrow \alpha\rho \le
  \sigma\alpha, &\qquad 
  \langle \alpha|\rho\leq \sigma &\Leftrightarrow
  \rho\alpha\le \alpha\sigma,\\
  \rho\le |\alpha]\sigma &\Leftrightarrow \rho\alpha\le \alpha\sigma, &
\rho\le [\alpha|\sigma &\Leftrightarrow \alpha\rho\le \sigma\alpha.
  \end{alignat*}
\end{lemma}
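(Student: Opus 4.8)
The plan is to reduce the four equivalences to the least-absorption laws for $\dom$ and $\cod$ together with the adjunctions of Proposition~\ref{P:dia-galois}. Throughout I take $\rho,\sigma\in Q_\dom=Q_\cod$, as in the rest of the section, so that in particular $\rho\le 1$ and $\rho\cdot\rho=\rho$ (domain elements are subidentities and, since $\cdot$ is the binary meet of the lattice $Q_\dom$, idempotent), and likewise for $\sigma$.

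First I would prove $|\alpha\rangle\rho\le\sigma \Leftrightarrow \alpha\rho\le\sigma\alpha$. Since $|\alpha\rangle\rho=\dom(\alpha\rho)$, least left absorption applied to the element $\alpha\rho$ gives $\dom(\alpha\rho)\le\sigma \Leftrightarrow \alpha\rho\le\sigma\alpha\rho$, so it suffices to check $\alpha\rho\le\sigma\alpha\rho \Leftrightarrow \alpha\rho\le\sigma\alpha$: from left to right use $\sigma\alpha\rho\le\sigma\alpha$ (as $\rho\le 1$); from right to left post-multiply $\alpha\rho\le\sigma\alpha$ by $\rho$ and use $\rho\cdot\rho=\rho$ to recover $\alpha\rho\le\sigma\alpha\rho$. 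The second equivalence $\langle\alpha|\rho\le\sigma \Leftrightarrow \rho\alpha\le\alpha\sigma$ is the same argument carried out in $Q^\op$: there $\langle\alpha|\rho=\cod(\rho\alpha)$, least right absorption (the opposite of lla, valid since a codomain quantale is the opposite of a domain quantale) gives $\cod(\rho\alpha)\le\sigma \Leftrightarrow \rho\alpha\le\rho\alpha\sigma$, and $\rho\alpha\le\rho\alpha\sigma \Leftrightarrow \rho\alpha\le\alpha\sigma$ exactly as before (using $\rho\le 1$ one way and $\rho\cdot\rho=\rho$ the other).

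The remaining two equivalences are then formal consequences of the first two. By Proposition~\ref{P:dia-galois}, $|\alpha]$ is the right adjoint of $\langle\alpha|$ on $Q_\dom$, so $\rho\le|\alpha]\sigma \Leftrightarrow \langle\alpha|\rho\le\sigma$, and chaining with the second equivalence yields $\rho\le|\alpha]\sigma \Leftrightarrow \rho\alpha\le\alpha\sigma$. Dually, $[\alpha|$ is the right adjoint of $|\alpha\rangle$, so $\rho\le[\alpha|\sigma \Leftrightarrow |\alpha\rangle\rho\le\sigma \Leftrightarrow \alpha\rho\le\sigma\alpha$ by the first equivalence.

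The only mildly delicate point is the reduction $\alpha\rho\le\sigma\alpha\rho \Leftrightarrow \alpha\rho\le\sigma\alpha$ and its codomain analogue, which is exactly where the subidentity and idempotence of domain elements are used; everything else is a direct appeal to lla/lra or to the adjunctions already established, so I do not expect a genuine obstacle.
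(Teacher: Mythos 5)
Your proof is correct and follows essentially the same route as the paper: establish the two diamond equivalences from the basic domain/codomain laws and derive the two box equivalences from the adjunctions of Proposition~\ref{P:dia-galois}. The only difference is cosmetic — the paper proves $|\alpha\rangle\rho\le\sigma\Leftrightarrow\alpha\rho\le\sigma\alpha$ directly from absorption ($\alpha\rho=\dom(\alpha\rho)\alpha\rho$) and export/monotonicity, whereas you factor the same two steps through (lla) applied to $\alpha\rho$ and then bridge $\sigma\alpha\rho$ and $\sigma\alpha$ using subidentity and idempotence of domain elements, which is a valid but slightly longer detour.
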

\begin{proof}
  We consider only the first law. The others follow from duality and
  the adjunctions. First, suppose $|\alpha\rangle \rho \le \sigma$,
  that is, $\dom(\alpha\rho)\le \sigma$. Then
  $\alpha\rho=\dom(\alpha\rho)\alpha\rho \le \sigma\alpha\rho \le
  \sigma\alpha$. For the converse direction, suppose
  $\alpha\rho\le \sigma\alpha$. Then
  $\dom(\alpha\rho) \le \dom(\sigma\alpha) = \sigma \dom(\alpha) \le
  \sigma$.
\end{proof}
Therefore,
$|\alpha]\sigma = \Sup\{\rho\mid \rho\alpha\le \alpha\sigma\}$ and
$[\alpha|\sigma =\Sup\{\rho\mid \alpha\rho\le \sigma\alpha\}$, and the
diamond operators, as left adjoints, satisfy
$|\alpha\rangle \rho = \Inf\{\sigma\mid \alpha\rho\le \sigma\alpha\}$
and
$\langle \alpha|\rho=\Inf\{\sigma\mid \rho\alpha\le \alpha\sigma\}$.

In any boolean modal semiring and therefore any boolean modal
quantale, the following De Morgan duality relates boxes and
diamonds~\cite{DesharnaisS11}, where $\rho'$ is the complement of
$\rho$ in $Q_\dom$ as before.
\begin{lemma}
  If $Q$ is a boolean modal quantale, then
  $|\alpha]\rho = (|\alpha\rangle \rho')'$ and
  $[\alpha|\rho=(\langle \alpha|\rho')'$.
\end{lemma}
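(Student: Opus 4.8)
The plan is to deduce the De Morgan duality from uniqueness of adjoints together with the diamond conjugation of Lemma~\ref{P:dia-conj}. Recall from Section~\ref{S:modal-quantales} that in a boolean modal quantale the domain algebra satisfies $Q_\dom = Q_1$ and is a complete boolean algebra in which the quantalic composition $\cdot$ is the binary meet and $\rho\mapsto\rho'$ is an involutive complementation; consequently, for $\rho,\sigma\in Q_\dom$ one has $\sigma\le\rho$ if and only if $\sigma\cdot\rho' = \bot$, and moreover $\langle\alpha|\sigma = \cod(\sigma\alpha)$ and $|\alpha\rangle\rho' = \dom(\alpha\rho')$ again lie in $Q_\dom$. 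By Proposition~\ref{P:dia-galois}, $|\alpha]$ is the unique right adjoint of $\langle\alpha|$ on $Q_\dom$, so it suffices to show that $\rho\mapsto(|\alpha\rangle\rho')'$ is also a right adjoint of $\langle\alpha|$.

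To that end I would establish, for all $\rho,\sigma\in Q_\dom$, the chain of equivalences
\begin{equation*}
  \langle\alpha|\sigma\le\rho
  \;\Leftrightarrow\; \langle\alpha|\sigma\cdot\rho' = \bot
  \;\Leftrightarrow\; \sigma\cdot|\alpha\rangle\rho' = \bot
  \;\Leftrightarrow\; \sigma\le(|\alpha\rangle\rho')'.
\end{equation*}
The first and third equivalences are just the boolean characterisation of the order in $Q_\dom$ recalled above (applied to the elements $\langle\alpha|\sigma$ and $|\alpha\rangle\rho'$ of $Q_\dom$). The middle equivalence is Lemma~\ref{P:dia-conj} applied with its $\rho$ instantiated to $\sigma$ and its $\sigma$ instantiated to $\rho'$, both of which are in $Q_\dom$. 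Since right adjoints are unique, this yields $|\alpha]\rho = (|\alpha\rangle\rho')'$. The second identity follows by opposition: in $Q^\op$ the roles of $\dom$ and $\cod$ are interchanged, hence so are the forward and backward diamonds $|\alpha\rangle$ and $\langle\alpha|$ and the boxes $|\alpha]$ and $[\alpha|$, so the already-proven identity becomes $[\alpha|\rho = (\langle\alpha|\rho')'$.

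One may equally well avoid invoking uniqueness of adjoints: the chain above shows $\{\sigma\in Q_\dom \mid \langle\alpha|\sigma\le\rho\} = \{\sigma\in Q_\dom \mid \sigma\le(|\alpha\rangle\rho')'\}$, whose supremum in the complete lattice $Q_\dom$ is plainly $(|\alpha\rangle\rho')'$, and one combines this with the formula for $|\alpha]$ from Proposition~\ref{P:dia-galois}. No genuine obstacle is anticipated; the only point demanding care is the bookkeeping that keeps every element fed to Lemma~\ref{P:dia-conj} and to the boolean manipulations inside $Q_\dom$, where $\cdot$ really is the meet and $\rho\mapsto\rho'$ really is complementation.
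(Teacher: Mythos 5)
Your proof is correct and follows essentially the same route as the paper's: the same three-step chain (boolean characterisation of the order in $Q_\dom$, then the conjugation of Lemma~\ref{P:dia-conj}, then the boolean characterisation again), combined with the adjunctions of Proposition~\ref{P:dia-galois} and a final appeal to opposition; the only cosmetic difference is that you derive the $|\alpha]$ identity directly and get $[\alpha|$ by duality, whereas the paper does it the other way around.
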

\begin{proof}
  $|\alpha\rangle \rho \le \sigma \Leftrightarrow |\alpha\rangle
  \rho\cdot \sigma' = \bot \Leftrightarrow \rho\cdot \langle \alpha|
  \sigma' = \bot \Leftrightarrow \rho\le (\langle \alpha| \sigma')'$,
  hence $(\langle \alpha| \sigma')'=[\alpha|\rho$. The proof for
  $|\alpha]$ follows by opposition.
\end{proof}
In boolean modal semirings and quantales, we therefore obtain
conjugation laws for boxes as for diamonds.
\begin{lemma}
  In every boolean modal quantale,
\begin{equation*}
  \rho \sup |\alpha] \sigma = 1 \Leftrightarrow [\alpha|\sigma \sup \rho = 1.
\end{equation*}
\end{lemma}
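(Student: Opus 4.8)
The plan is to obtain this box conjugation law as the De~Morgan dual of the diamond conjugation of Lemma~\ref{P:dia-conj}, mirroring exactly how the corresponding law is derived for boolean modal semirings. Everything takes place in the domain algebra $Q_\dom$, which, as recorded in Section~\ref{S:modal-quantales}, is a complete boolean algebra: $1$ is its top, $\bot$ its bottom, $\cdot$ its binary meet, and $\lambda\rho.\ \rho'$ (the complement in $Q_1\supseteq Q_\dom$) its complement; moreover $Q_\dom=Q_\cod$ is closed under complements, so $\rho'\in Q_\dom$ whenever $\rho\in Q_\dom$ and the two (co)diamonds and their adjoints all apply to primed elements as well.

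First I would rewrite each side of the desired equivalence as a disjointness statement, using that in any boolean algebra $x\sup y = 1$ is equivalent to $x'\le y$ and hence to $x'\cdot y' = \bot$; applied with $x=\rho$ this turns $\rho\sup|\alpha]\sigma = 1$ into $\rho'\cdot(|\alpha]\sigma)' = \bot$, and similarly for the other side. Second, I would eliminate the boxes with the De~Morgan lemma just proved, $|\alpha]\rho = (|\alpha\rangle\rho')'$ and $[\alpha|\rho = (\langle\alpha|\rho')'$: since $\rho'' = \rho$ in $Q_\dom$ this gives $(|\alpha]\sigma)' = |\alpha\rangle\sigma'$ and $([\alpha|\rho)' = \langle\alpha|\rho'$, so both sides become statements purely about (co)diamonds, namely $\rho'\cdot|\alpha\rangle\sigma' = \bot$ on one side and, after using commutativity of $\cdot$, $\langle\alpha|\rho'\cdot\sigma' = \bot$ on the other. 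Third, these two are connected directly by Lemma~\ref{P:dia-conj} instantiated at $\rho:=\rho'$ and $\sigma:=\sigma'$, which reads precisely $\rho'\cdot|\alpha\rangle\sigma' = \bot \Leftrightarrow \langle\alpha|\rho'\cdot\sigma' = \bot$. Chaining the three steps closes the equivalence; in one sentence, the law is just the image of Lemma~\ref{P:dia-conj} under the order-reversing complementation on $Q_\dom$.

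The main obstacle is bookkeeping rather than mathematics: one has to keep careful track of which of $\rho$ and $\sigma$ ends up underneath the box after complementing, since the box--diamond De~Morgan identities pin down a specific pairing between the argument of the box on one side and the bare element on the other, and it is easy to misalign them. Beyond that, the argument needs only the two lemmas already available (Lemma~\ref{P:dia-conj} and the preceding box--diamond De~Morgan lemma) together with elementary complemented-lattice identities in $Q_\dom$, so no new ingredient is required; the hypothesis that $Q$ be boolean is essential precisely because it is what makes the complementation on $Q_\dom$ available in the first place.
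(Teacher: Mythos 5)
Your overall strategy is exactly the paper's: the proof given there is literally ``straightforward from De Morgan duality'', i.e.\ complement both sides, turn the boxes into diamonds via the preceding De Morgan lemma, and invoke Lemma~\ref{P:dia-conj} at $\rho'$ and $\sigma'$. The route is right, but your second step commits precisely the misalignment you warn against. From $[\alpha|\rho=(\langle\alpha|\rho')'$ you get $([\alpha|\sigma)'=\langle\alpha|\sigma'$, so the right-hand side $[\alpha|\sigma\sup\rho=1$ reduces to $\langle\alpha|\sigma'\cdot\rho'=\bot$, with $\sigma'$ \emph{under} the backward diamond and $\rho'$ outside. Commutativity of $\cdot$ in $Q_\dom$ only reorders the two factors of the product; it cannot exchange which element sits under $\langle\alpha|$. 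What Lemma~\ref{P:dia-conj} (instantiated at $\rho'$, $\sigma'$) actually matches with $\rho'\cdot|\alpha\rangle\sigma'=\bot$ is $\langle\alpha|\rho'\cdot\sigma'=\bot$, and un-complementing that yields $[\alpha|\rho\sup\sigma=1$, not $[\alpha|\sigma\sup\rho=1$.

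So your chain, executed correctly, proves $\rho\sup|\alpha]\sigma=1\Leftrightarrow[\alpha|\rho\sup\sigma=1$, which is the standard form of box conjugation for boolean modal semirings; the statement as printed has $\rho$ and $\sigma$ transposed on the right and is false in general. In the modal quantale of relations over $\{a,b\}$ with $\alpha=\{(a,b)\}$, $\rho=\{(a,a)\}$ and $\sigma=\{(b,b)\}$ one has $|\alpha]\sigma=1$, so the left side holds, while $[\alpha|\sigma=\{(a,a)\}$ and hence $[\alpha|\sigma\sup\rho=\{(a,a)\}\neq 1$. The gap in your write-up is therefore real --- the claimed reduction of the second side does not follow from the cited lemmas, and no amount of commutativity will repair it --- but it is induced by a defect in the statement itself; with the corrected right-hand side $[\alpha|\rho\sup\sigma=1$ your three-step argument goes through verbatim.
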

\begin{proof}
 Straightforward from De Morgan duality.
 \end{proof}

 The results for boolean quantales are summarised in
 Figure~\ref{fig:modal-symmetries}. The green and orange
 correspondences hold in any modal quantale, the black ones only in
 the boolean case.  While the De Morgan dualities do not depend on locality,
the conjugations and Galois connections depend on these laws.

 \begin{remark}
   When the $\ell r$-multisemigroup $X$ is finitely decomposable, one
   can thus use boolean modal semirings, that is, boolean monoids
   equipped with the usual domain axioms, in the lifting. Boolean
   monoids are essentially boolean quantales in which only finite sups
   and infs are assumed to exists and required to be preserved,
   including empty ones. It follows that, if $X$ is a finitely
   decomposable local $\ell r$-multisemigroup and $Q$ a boolean modal
   semiring, then $Q^X$ is a boolean modal semiring.
\end{remark}

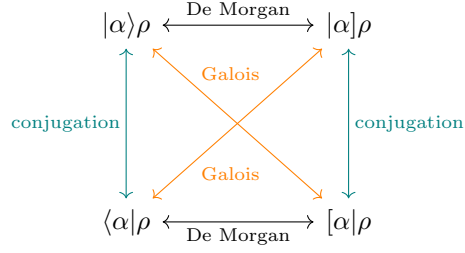
\begin{figure}
  \centering
\begin{tikzcd}[column sep=2cm,row sep=2cm]
  \vert \alpha\rangle \rho \arrow[r, "\text{De Morgan}",
  leftrightarrow]\arrow[d, teal, "\text{conjugation}" ',
  leftrightarrow]\arrow[dr,  orange, "\text{Galois}" near start,
  leftrightarrow]& \vert \alpha] \rho\arrow[d, teal, "\text{conjugation}", leftrightarrow]\\
 \langle \alpha\vert \rho \arrow[r, "\text{De Morgan}" ',
 leftrightarrow]\arrow[ur, orange, "\text{Galois}" ' near start, leftrightarrow] & {[} \alpha\vert \rho
\end{tikzcd}
  \caption{Symmetries between modal operators in boolean modal quantales}
\label{fig:modal-symmetries}
\end{figure}

Next, we consider the modal operators in the convolution quantales
$Q^X$ and relate them with similar properties in $X$ and $Q$. For the
diamonds, 
\begin{align*}
(|f\rangle g)(x) = \Sup_{x \in \ell(y\odot z)} |f(y)\rangle
                         g(z)\qquad\text{ and }\qquad
(\langle f| g)(x) = \Sup_{x\in r(z\odot y)} \langle f(y)| g(z),
\end{align*}
and as upper adjoints are lower adjoints in the dual lattice,
\begin{align*}
(|f] g)(x) = \Inf_{x \in \ell(y\odot z)} |f(y)]
                         g(z)\qquad\text{ and }\qquad
([f| g)(x) = \Inf_{x\in r(z\odot y)} [f(y)| g(z).
\end{align*}
Note that by locality $\ell(x\mcomp y) = \ell(x\mcomp \ell(y))$ and
likewise for $r$, which makes these identities slightly longer, but
more symmetric. Alternatively, we can write
\begin{align*}
  |f\rangle g = \Sup_{x,y,z\in X} |f(y)\rangle g(z) \cdot [x\in \ell
    (y\odot z)] \cdot \delta_x,
  \end{align*}
  and the shape of the laws for the remaining boxes and diamonds is
  then obvious.

Using these definitions we consider the modal operators in relation
and path quantales in more detail. 

\begin{example}[Modal Operators in Powerset Quantales]~
  \begin{enumerate}
  \item In the modal powerset quantale over the path category from
    Example~\ref{ex:traces}(\ref{ex:path-category}), elements are
    sets of paths. Domain elements are formed by the source elements
    in the set and codomain elements by its target elements. Both
    kinds of elements are vertices as paths of length one.  Therefore,
    for any set of paths $A$ and set of vertices $P\subseteq V$ (as
    paths of length one),
    \begin{equation*}
      |A\rangle P=\{ v\in V\mid \exists \pi\in A.\  v\circ \pi\in P\}
    \end{equation*}
    is the set of all vertices in $V$ from which one may reach a
    vertex in $P$ at the end of some path in $A$. Dually,
    $\langle A|P$ models the set of all vertices one may reach from a
    vertex in $P$ at the end of some path in $A$. Moreover,
    \begin{equation*}
      |A] P=\{ v\in V\mid \forall \pi\in A.\ v\circ \pi\in P\}
    \end{equation*}
    models the set of vertices from which one must reach a vertex in
    $P$ at the end of all paths in $A$, and $[A|P$ models the set of
    vertices one must reach from a vertex in $P$ at the end of all
    paths in $A$.
    The shape of the modal operators in algebras of continuous paths
    in Example~\ref{ex:continuous-paths} is very similar.
  \item The modal operators in the modal powerset quantale of the pair
    groupoid---the quantale of binary relations---are the standard
    modal operators with respect to Kripke semantics. Up to
    isomorphism between predicates and subidentity relations, for any
    binary relation $R$ and predicate $P$, we have
    \begin{alignat*}{5}
    |R\rangle P &=\{a\mid \exists b. (a,b)\in R \land  P(b)\}, &\qquad
      \langle R|P &=\{b\mid \exists a.\ (a,b)\in R \land P(a)\},\\
      |R]P &=\{a\mid \forall b. \ (a,b)\in R \Rightarrow  P(b) \},&
      [ R|P &=\{b\mid \forall a. \  (a,b)\in R \Rightarrow  P(a) \}.
    \end{alignat*}
    \qed
 \end{enumerate}
\end{example}

Similar constructions apply to the other powerset algebras over
$\ell r$-multisemigroups outlined in Section~\ref{S:examples}.
Working out details is routine.

\begin{example}\label{ex:weighted-modal-relations}~
  \begin{enumerate}
  \item For weighted relations, the forward diamond is spelled out
    as
\begin{equation*}
(|f\rangle \Sup_c\delta^{g(c,c)}_{(c,c)})((a,b)) =
\Sup_c|f(a,c)\rangle g(c,c)\cdot \delta_a(b).
\end{equation*}
For finite relations, that is, in the finitely decomposable case, we
obtain matrices. In the $2\times 2$ case, for instance,
\begin{align*}
\left\vert
  \begin{pmatrix}
    \alpha_{11} & \alpha_{12}\\
\alpha_{21} & \alpha_{22} 
  \end{pmatrix}
\right\rangle
\begin{pmatrix}
  \beta_{11} & \bot\\
\bot & \beta_{22}
\end{pmatrix}
&=\Dom \left(
  \begin{pmatrix}
    \alpha_{11}\beta_{11} & \alpha_{12} \beta_{22}\\
\alpha_{21}\beta_{11} & \alpha_{22}\beta_{22}
  \end{pmatrix}
\right)\\
&=
  \begin{pmatrix}
    |\alpha_{11}\rangle\beta_{11} \sup |\alpha_{12} \rangle\beta_{22}
    &\bot\\
\bot & |\alpha_{21}\rangle\beta_{11} \sup |\alpha_{22}\rangle\beta_{22}
  \end{pmatrix}.
\end{align*}
The other modal operators satisfy similar laws. The diagonal matrices
are isomorphic to vectors and the computation performed in the above
example corresponds to a linear transformation of a vector, except for
the decoration with diamonds on the diagonal.

\item In the modal quantale of weighted paths,
\begin{equation*}
(|f\rangle \Sup_{v\in V}\delta^{g(v)}_{v})(x)=
\Sup_v|f(v)\rangle g(r(v))\cdot \delta_{\ell(v)}(x).
\end{equation*}
For finite graphs, the element $\delta^{\dom(g(v))}_v$ can again be
seen as a diagonal matrix or vector of weighted vertices, whereas
$f(x)$ can be seen as a matrix labelling pairs of vertices with the
set of weights of the (hom-set) of paths between them. The
multiplication with $\delta^{\dom(g(v))}_v$ then projects on those
paths with targets in $v$ and taking the diamond computes the supremum
of the weight of all paths which end up in a node in $v$. \qed
  \end{enumerate}
\end{example}

From a linear algebra point of view, the use of $\dom$ or $\cod$ in
the computation of weights is irritating. But the results of
Sections~\ref{S:conv-alg} and \ref{S:correspondences} show that the
condition that $Q$ be a modal quantale cannot be weakened. Otherwise,
our correspondence results break.  Yet the approach is
too restricted, for instance,  for deriving probabilistic predicate
transformer semantics in which states and relations carry more general
weights. In the probabilistic quantale from
Example~\ref{ex:value-quantales}, in particular, it is easy to check
that $Q_\ell = \{0,1\}$, which precludes any assignment of non-trivial
weights to states of a system.

Predicate transformers, however, are more general than modal Kleene
algebras: they are simply order-, sup- or inf-preserving functions
between (complete) lattices. The absorption laws
$\dom(\alpha)\cdot \alpha = \alpha$ and
$\alpha\cdot \cod(\alpha)=\alpha$ of domain and codomain, in
particular, seem irrelevant in this setting. Only an action on
weighted relations is needed beyond properties like sup-preservation.
Yet this is given by domain and codomain locality, as discussed at the
beginning of this section. Locality, however, lifts from $X$ to $Q^X$
even if $Q$ is only a quantale.

To explain this, we first redefine domain as
 \begin{equation*}
    \Dom(f)(x) = \Sup_{y\in X} f(y)\cdot \delta_{\ell(y)}(x),
  \end{equation*}
  using $f(y)$ instead of $\dom(f(y))$, and likewise for $\Cod$. Then,
  for locality in $Q^X$ and $Q$ merely a quantale,
\begin{align*}
(\Dom (f \ast g))(v) 
&= \Sup_{x,y,z}f(y)\cdot g(z) \cdot [x\in y\odot z]
  \cdot \delta_{\ell(x)}(v)\\
 &= \Sup_{y,z} f(y)\cdot g(z)\cdot
  [v\in \ell(y\odot z)]\\
 &=\Sup_{y,z}f(y)\cdot g(z)\cdot
   [v\in \ell(y\odot \ell(z))]\\
 &=\Sup_{x,y,z}f(y)\cdot \left(\Sup_wg(w)\cdot
   \delta_{\ell(w)}(z)\right) \cdot [x\in y\odot z]\cdot \delta_{\ell(x)}(v)\\
 &= \Dom (f \ast \Dom (g))(v),
\end{align*}
replaying the proof of Theorem~\ref{P:lr-lifting}. The proof for
$\Cod$ is dual, as usual. Domain export lifts
in the same way:
\begin{align*}
  (\Dom (\Dom (f) \ast g))(v)
&= \Sup_{x,y,z}\left(\Sup_w f(w)\cdot \delta_{\ell(w)}(y)\right)\cdot g(z)\cdot
  [x\in y\odot z]\cdot \delta_{\ell(x)}(v)\\
&=\Sup_{z,w} f(w)\cdot g(z)\cdot
  [v\in \ell(\ell(w)\odot z)]\\
&=\Sup_{z,w} f(w)\cdot g(z)\cdot 
  [v\in\ell(w)\odot \ell(z)]\\
&=(\Dom(f)\ast \Dom(g))(v),
\end{align*}
and once again the proof for codomain export, with $Q$ being merely a
quantale, is dual.  Yet it is easy to check that absorption laws
cannot be lifted this way.

\begin{example}
  Defining forward and backward diamonds as before in the quantale of
  weighted relations, yet using the revised domain definition, yields, 
\begin{equation*}
(|f\rangle \Sup_c\delta^{\dom(g(c,c))}_{(c,c)})((a,b)) =
\Sup_c f(a,c) \cdot g(c,c)\cdot \delta_a(b).
\end{equation*}
And the shape of the backward diamond is then obvious by duality.  In
the matrix case, our previous example now reduces to
\begin{align*}
\left\vert
  \begin{pmatrix}
    \alpha_{11} & \alpha_{12}\\
\alpha_{21} & \alpha_{22} 
  \end{pmatrix}
\right\rangle
\begin{pmatrix}
  \beta_{11} & \bot\\
\bot & \beta_{22}
\end{pmatrix}
&=
  \begin{pmatrix}
    \alpha_{11}\beta_{11} \sup \alpha_{12} \beta_{22}
    &\bot\\
\bot & \alpha_{21}\beta_{11} \sup \alpha_{22}\beta_{22}
  \end{pmatrix}.
\end{align*}
This has the right shape for stochastic matrices and probabilistic
predicate transformers, using the probabilistic quantale from
Example~\ref{ex:value-quantales}. \qed
\end{example}

We have checked with Isabelle that, in the absence of the absorption
laws, domain elements still form a subalgebra and that commutativity
lifts whenever $Q$ is abelian.\footnote{The proofs are routine. We
  have not added them to the Isabelle repository, as theories from
  another repository (Archive of Formal Proofs) need to be downloaded
  by the reader to make them compile.} Idempotency of domain elements
does not lift, which is consistent with the fact that multiplication
of diagonal matrices is not in general idempotent. This is appropriate
for vector spaces and similar structures.  Developing the ``modal''
algebras of such approaches and exploring applications is left for
future work.


\section{Conclusion}\label{S:conclusion}

We have defined $\ell r$-multisemigroups and used them to obtain a
triangle of equational correspondences between modal value quantales
and quantale-valued modal convolution quantales, explaining in
particular the role of locality in axiomatisations of domain semirings
and quantales in light of the typical composition pattern for
categories. Our results yield a generic construction recipe for modal
quantales, in that one gets such quantales for free as soon as the
much simpler underlying $\ell r$-multisemigroup has been
identified. The relevance of this approach is illustrated by many
examples from mathematics and computing.

We have sketched two directions for further work as well: First, in
combination with previous results for concurrent relational monoids
and concurrent quantales~\cite{CranchDS20a}, it seems interesting to
build models for non-interleaving concurrent systems based on
po(m)sets and graphs with interfaces and lift them to modal concurrent
semirings and quantales. Second, the relevance of the weakened domain
and codomain quantales of Section~\ref{S:dualities} for the
quantitative verification of computing systems with weights or
probabilities remains to be explored.

We also aim at a categorification of the approach in terms of Day
convolution. While this yields additional generality, we have chosen a
simpler algebraic approach in this article with a view on verification
applications with proof assistants like Isabelle, where reasoning with
monoidal categories, coends or profunctors may become
unwieldy. Relational monoids, $\ell r$-semirings and the construction
of convolution quantales have already been formalised with
Isabelle~\cite{DongolGHS17}; Isabelle verification components for
separation logic have used a somewhat less general approach of partial
abelian monoids and generalised effect algebras~\cite{DongolGS15}.

Finally, it seems interesting to generalise J\'onsson-Tarski duality
to our weighted setting and study in particular the role played by the
weight algebra $Q$ in this setting.


\bibliographystyle{alpha}
\bibliography{chantale}


\newpage

\appendix

\section{Glossary of Algebraic Structures}\label{A:one}

\paragraph{Multisemigroups and $\ell r$-Semigroups}

\begin{itemize}
\item A \emph{multimagma} $(X,\mcomp)$ is a set $X$ equipped with a
  multioperation $\mcomp:X\times X\to \Pow X$.  
\item A \emph{multisemigroup} is an associative multimagma: for all
  $x,y,z\in X$,
\begin{equation*}
  \bigcup \{x \mcomp v\mid v \in y \mcomp z\} =
  \bigcup \{v \mcomp z\mid v \in x \mcomp y\}.
\end{equation*}
\item A \emph{multimonoid} $(X,\mcomp,E)$ is a multisemigroup with a
  set $E\subseteq X$ of units such that, for all $x\in X$,
\begin{equation*}
  \bigcup \{e\mcomp x\mid e\in E\} = \{x\} = \bigcup\{x\mcomp e\mid
  e\in E\}. 
\end{equation*}
\item A multimagma $X$ is a \emph{partial magma} if $|x\mcomp y| \le
  1$ and a \emph{magma} if $|x\mcomp y|=1$ for all $x,y\in X$.
  These definitions  extend to multisemigroups and multimonoids.
\item A multimagma, multisemigroup or multimonoid $X$ is \emph{local}
  if $u\in x\mcomp y \land y\mcomp z\neq \emptyset \Rightarrow u\mcomp
  z\neq \emptyset$. 
\item An $\ell r$-\emph{multimagma} $(X,\mcomp,\ell,r)$ is a
  multimagma $(X,\mcomp)$ equipped with two operations $\ell,r:X\to X$
  such that, for all $x,y\in X$, 
\begin{equation*}
  x\mcomp y\neq \emptyset \Rightarrow r(x) =\ell(y),\qquad
  \ell(x)\mcomp x = \{x\},\qquad
  x\mcomp r(x) = \{x\}.
\end{equation*}
The definition extends to $\ell r$-multisemigroups, partial and total
algebras.  
\item An $\ell r$-semigroup is \emph{local} if
\begin{equation*}
  \ell (x\mcomp \ell(y)) = \ell(x\mcomp y)\qquad\text{ and }\qquad
  r(r(x)\mcomp y) = r(x\mcomp y).
\end{equation*}
\end{itemize}

\paragraph{Function Systems and Modal Semigroups}

\begin{itemize}
\item  A \emph{function system}~\cite{SchweizerS67} is a structure
$(X,\sscirc,L,R)$ such that $(X,\sscirc)$ is a semigroup and the following
axioms hold (using Schweizer and Sklar's notation):
\begin{xalignat}{3}
L(R(x)) &= R(x),&R(L(x) &= L(x),\tag{2a}\label{eq:2a}\\
L(x)\sscirc x &= x,&x \sscirc R(x) &= x,\tag{2b}\label{eq:2b}\\
L(x\sscirc L(y)) &= L(x\sscirc y),& R(R(x)\sscirc y) &= R(x\sscirc y),\tag{3a}\label{eq:3a}\\
L(x)\sscirc R(y) & = R(y)\sscirc L(x),&&\tag{3b}\label{eq:3b}\\
y\sscirc R(x\sscirc y) & = R(x)\sscirc y.&&\tag{3c}\label{eq:3c}
  \end{xalignat}
  In addition, Schweizer and Sklar consider the identity
\begin{equation}
L(x\sscirc y) \sscirc x = x\sscirc L(y),\tag{D3c} \label{eq:D3c} 
\end{equation}
which is valid in some function systems, but not in all. They point
out that all axioms except (\ref{eq:3c}), which has subsequently been
called \emph{twisted axiom}, hold of binary relations. Schweizer and
Sklar's axioms use function composition $\circ$ instead of $\sscirc$.

\item A \emph{domain semigroup}~\cite{DesharnaisJS09} is a semigroup
  $(X,\cdot)$ with a binary operation $\dom:X\to X$ that satisfies
\begin{align*}
\dom(x)\cdot x &= x, \tag{D1}\label{eq:D1}\\
\dom(x\cdot y) &= \dom(x\cdot \dom(y)),\tag{D2}\label{eq:D2}\\
\dom(\dom(x)\cdot y) &= \dom(x)\cdot \dom(y),\tag{D3}\label{eq:D3}\\
\dom(x)\cdot \dom(y) &= \dom(y)\cdot \dom(x). \tag{D4}\label{eq:D4}
\end{align*}
\item A \emph{modal semigroup} is a domain semigroup $X$ equipped with
  a codomain operation $\cod:X\to X$ that satisfies opposite laws, in
  which the arguments of composition have been swapped, and the
  compatibility laws $\dom\circ \cod = \cod$ and
  $\cod\circ \dom= \dom$.
\end{itemize}
 The axioms (\ref{eq:D3}) and (\ref{eq:D4}) as well as the dual
  codomain axioms are derivable in function systems. Conversely, the
  function system axiom (\ref{eq:3b}) is derivable in modal
  semigroups. Disregarding the twisted law, function systems are
  therefore a more compact, but equivalent axiomatisation of domain
  semigroups.

\paragraph{Dioids and Modal Semirings}
\begin{itemize}
\item A \emph{dioid} is an additively idempotent semiring
$(S,+,\cdot,0,1)$, that is, a semiring $S$ in which $x+x=x$ for all
$x\in S$. As $(S,+,0)$ is a semilattice, the relation
$x\le y\Leftrightarrow x+y=y$ is a partial order with least element
$0$ and in which $+$ and $\cdot$ preserve the order in both arguments.
\item A \emph{modal semiring} is a semiring $S$ equipped with operations
  $\dom,\cod:S\to S$ that satisfy
  \begin{gather*}
    x\le \dom(x)\cdot x,\qquad \dom(x\cdot \dom(y)) = \dom(x\cdot
    y),\qquad \dom(x)\le 1,\qquad \dom(0)=0,\\ \dom(x+y)=\dom(x)+\dom(y),
  \end{gather*}
  as well as opposite axioms for $\cod$ and in which $\dom$ and $\cod$
  are compatible in the sense that $\dom\circ\cod=\cod$ and
  $\cod\circ\dom =\dom$. Adding these axioms to any semiring and
  defining $x\le y\Leftrightarrow x+y=y$ forces additive
  idempotency. Every modal semiring is therefore a dioid.
\end{itemize}

\paragraph{Quantales and Modal Quantales}

\begin{itemize}
\item A \emph{prequantale} is a structure $(Q,\le,\cdot,1)$ such that
  $(Q,\le)$ is a complete lattice, $1$ a unit of composition and
  $\cdot$ preserves all sups in both arguments.
\item A \emph{quantale} is a prequantale $Q$ in which composition is
  associative (hence $(Q,\cdot,1)$ is a monoid).
\item A \emph{modal prequantale} is a prequantale $Q$ with operations
  $\dom,\cod:Q\to Q$ such that the modal semiring axioms hold
  (replacing $0$ with $\bot$ and $+$ with $\sup$), yet the second
  axiom (the locality axiom) for $\dom$ and its opposite for $\cod$
  are replaced by the export axiom
\begin{equation*}
  \dom(\dom(x)\cdot y)) = \dom(x)\cdot \dom(y)
\end{equation*}
and its opposite for $\cod$.
\item A \emph{weakly local modal quantale} is a modal prequantale that
  is also a quantale. 
\item A \emph{modal  quantale} is a quantale $Q$
  equipped with operations $\dom,\cod:Q\to Q$ that satisfy the modal
  semiring axioms (with the obvious syntactic replacements).
\end{itemize}


\section{Multioperations and Ternary Relations}\label{A:two}

  We briefly outline the relationship between multioperations and
  ternary relations.  A relational magma $(X,R)$ is a set $X$ with a
  ternary relation $R$. We write $R^x_{yz}$ instead of $(x,y,z)\in R$
  and $D_{yz}$ whenever $R^x_{yz}$ holds for some $x$.  We have
  previously set up the relationship between object-free categories
  and such relational structures~\cite{CranchDS20}.  The association
  with multioperations is obtained via
\begin{equation*}
  x\in y \mcomp z \Leftrightarrow R^x_{yz}.
\end{equation*}
Accordingly, the ternary relation $R$ is relationally
  associative if
$\forall u,x,y,z.\exists v.\ R^u_{xv}\land R^v_{yz} \Leftrightarrow
\exists v.\ R^v_{xy}\land R^u_{vz}$.
It is local if for all $u,x,y,z$, $R^u_{xy}$ and $D_{yz}$
imply $D_{uz}$. It is weakly functional if $|\{x\mid R^x_{yz}\}|
\le 1$ and functional if $|\{x\mid R^x_{yz}\}| =1$,  for all
$x,y$.  These properties correspond to partiality and totality of
multirelations. Element $e\in X$ is a relational left unit if
$R^x_{ex}$ for some $x$ and $R^x_{ey}$ implies $x = y$ for all $x,y$;
it is a relational right unit if $R^x_{xe}$ for some $x$ and
$R^x_{ye}$ implies $x = y$ for all $x,y$. Definitions of relational
magmas, unitality, relational semigroups and relational monoids are
then analogous to those for multimagmas.  Similar notions have been
studied in~\cite{DongolHS21}.

A relational magma morphism $f:(X,R)\to (Y,S)$ satisfies
$\smash[t]{R^x_{yz}\Rightarrow S^{f(x)}_{f(y)f(z)}}$ for all
$x,y,z\in X$.  It is bounded if $\smash[t]{S^{f(x)}_{uv}}$ implies that there
are $x,y\in X$ such that $R^x_{yz}$, $u=f(y)$ and $v=f(z)$, for all
$x\in X$ and $u,v\in Y$.

\begin{lemma}\label{P:rel-multi-magma}
  Categories of relational magmas and those of multimagmas are
  isomorphic (both for morphisms and for bounded ones).
\end{lemma}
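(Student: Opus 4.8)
The plan is to turn the bijection $x \in y \mcomp z \Leftrightarrow R^x_{yz}$ into a pair of mutually inverse functors and to observe that, under it, both the object data and the morphism conditions match up verbatim, so that functoriality and mutual inverseness come for free.

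\emph{On objects.} Given a relational magma $(X,R)$, define a multioperation by $y \mcomp_R z = \{x \mid R^x_{yz}\}$; conversely, given a multimagma $(X,\mcomp)$, define a ternary relation $R_\mcomp = \{(x,y,z) \mid x \in y \mcomp z\}$. These two assignments are evidently mutually inverse. They also preserve all the side conditions in play: relational associativity of $R$ unwinds into associativity of $\mcomp_R$ --- this is exactly the expanded associativity law recalled in Section~\ref{S:relational-semigroups} --- and likewise locality, weak functionality versus partiality, functionality versus totality, and the left/right relational-unit conditions correspond on the nose. Hence restricting either functor to relational semigroups, monoids, or their local and partial variants lands in the matching subcategory.

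\emph{On morphisms.} A morphism in either category is just a function $f$ between the underlying sets, so each functor acts as the identity on arrows; it only remains to check that $f$ is a relational-magma morphism precisely when it is a multimagma morphism. By the reformulation noted in Section~\ref{S:relational-semigroups}, the multimagma-morphism condition $f(x \mcomp_X y) \subseteq f(x) \mcomp_Y f(y)$ is equivalent to $x \in y \mcomp_X z \Rightarrow f(x) \in f(y) \mcomp_Y f(z)$, which under the bijection is literally $R^x_{yz} \Rightarrow S^{f(x)}_{f(y)f(z)}$. The boundedness clauses match in the same way: $S^{f(x)}_{uv} \Rightarrow \exists y,z.\ R^x_{yz} \land u = f(y) \land v = f(z)$ is verbatim the bounded-morphism condition for multimagmas. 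So $f$ is a (bounded) relational-magma morphism if and only if it is a (bounded) multimagma morphism.

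Functoriality is immediate, since identity arrows and composites are just identities and composites of underlying functions, which both functors preserve strictly. Combined with the object-level bijection, the two functors are mutually inverse, yielding the claimed isomorphism of categories --- uniformly, whether all morphisms or only the bounded ones are taken. The only point requiring a little care is the equivalence between the image-inclusion form and the pointwise form of a multimagma morphism, but this is precisely the observation already recorded in the main text, so no genuine obstacle arises.
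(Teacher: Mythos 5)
Your proposal is correct and follows essentially the same route as the paper's proof: the same object-level bijection between ternary relations and multioperations, functors acting as the identity on underlying functions, and the observation that the (bounded) morphism conditions translate verbatim via the pointwise reformulation of multimagma morphisms. The paper merely writes out the chains of equivalences that you summarise as ``match up verbatim''; no substantive difference.
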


\begin{proof}
Relations of type $X\times Y\times Z$ are in bijective correspondence
with functions of type $Y\times Z \to \Pow X$ via the maps
\begin{equation*}
\rtof (R) = \lambda y, z.\ \{x\mid R^x_{yz}\}\qquad \text{ and }\qquad
\ftor (F) =  \{( x,y,z\mid x \in F(y,z)\}.
\end{equation*}

We extend $\rtof$ and $\ftor$ to functors between categories of
relational magmas and multimagmas by stipulating
$\rtof(f) = f = \ftor(f)$.  
Suppose $f$ is a relational magma morphism. Then $\rtof(f)$ is a
multimagma morphism because
\begin{equation*}
f\left(\rtof(R)(y, z) \right) =\left\{f(x)\middle|  R^x_{yz} \right\}\subseteq
\left\{f(x)\middle| S^{f(x)}_{f(y)f(z)}\right\} =  \rtof(S)(f(y),f(z)).
\end{equation*}
Suppose $f$ is a multimagma morphism. Then $\ftor(f)$ is a relational
magma morphism because
\begin{equation*}
(\ftor(\mcomp_X))^x_{yz} \Leftrightarrow x \in y \mcomp_X y\Rightarrow
f(x) \in f (y)\mcomp_Y f (z) \Leftrightarrow (\ftor(\mcomp_Y))^{f(x)}_{f(y)f(z)}.
\end{equation*}
Moreover, if the relational magma morphism $f$ is bounded, then so is
$\rtof(f)$, because 
\begin{align*}
f(x)\in \rtof(S)(u,v)
&\Leftrightarrow S^{f(x)}_{uv} \\
&\Rightarrow R^x_{yz},\, u=f(y)\text{ and } v=f(z) \text{ for some } y,z\\
& \Leftrightarrow
x\in\rtof(R)(y,z),\,  u=f(y)\text{ and } v=f(z) \text{ for some } y,z.
\end{align*}
Finally, if the multimagma morphism $f$ is bounded, then so is
$\ftor (f)$, because
\begin{align*}
  (\ftor (\mcomp_Y))^{f(x)}_{uv}  
&\Leftrightarrow  f(x) \in u\mcomp_Y v \\
&\Rightarrow x\in y\mcomp_X z,\,  u=f(y)\text{ and } v=f(z) \text{ for
  some } y,z\\ 
&\Leftrightarrow  (\ftor (\mcomp_X))^x_{yz} \land u=f(y)\land v=f(z)
  \text{ for some } y,z.
\end{align*}

It is then easy to check that $\rtof$ and $\ftor$ are mutually inverse
functors. This shows that the categories of relational magmas
with (bounded) morphisms and those of multimagmas with (strong)
morphisms are isomorphic.
\end{proof}

The isomorphisms extend to
relational and multisemigroups, relational monoids and multimonoids,
and their local, partial and total variants. They transfer theorems
between relational structures~\cite{CranchDS20} and the corresponding
multialgebras.

\section{Proofs of Lemmas~\ref{P:lr-magma-lift} and \ref{P:lr-semigroup-lift}}\label{A:three}

\begin{proof}[Proof of Lemma~\ref{P:lr-magma-lift}]
We show proofs up-to opposition.
  \begin{enumerate}
  \item $\ell (r (A)) = \{\ell (r (x)) \mid x\in A\} = \{r (x)\mid
x\in A\} = r (A)$. 
\item 
\begin{align*}
\ell (A) A 
&= \bigcup\{\ell(x)y \mid x,y \in A\text{ and } D_{\ell(x)y}\}\\
&= \bigcup\{\ell(x)y \mid x,y \in A,\,  D_{\ell(x)y}\text{ and } r(\ell(x))=\ell(y)\}\\
&= \bigcup\{\ell(x)y \mid x,y \in A,\,  D_{\ell(x)y}\text{ and }
  \ell(x)=\ell(y)\}\\
&= \bigcup\{\ell(y)y \mid y \in A\}\\
&= \bigcup\{\{y\} \mid y\in A\}\\
&=A. 
\end{align*}
\item 
  $\ell \left(\bigcup \mathcal{A}\right)
= \{\ell (x) \mid x \in \bigcup \mathcal{A}\}
= \{\ell (x) \mid  x \in A \text{ for some }A \in \mathcal{A}\}
= \bigcup\{\ell (A)\mid A \in
  \mathcal{A}\}$.
\item Immediate from (3).
\item We only prove the identity for $\ell (A) r (B)$; the remaining
  ones then follow from (1).
\begin{align*}
  \ell (A) r (B) 
=\bigcup\{\ell(x)r(y) \mid x\in A \text{ and } y\in B\}
=\bigcup\{r(y)\ell(x) \mid x\in A \text{ and } y\in B\}
= r(B)\ell(A).
\end{align*}

\item $\ell (A) = \{\ell (x)\mid x \in A\} \subseteq \{\ell (x)\mid x \in
X\}  = \{x\mid \ell (x) = x\}=E$. 
\item 
  \begin{align*}
    \ell (\ell (A) B)
& =\bigcup\{\ell(\ell(x)y) \mid x\in A,\, y\in B \text{ and }
  D_{\ell(x)y}\}\\
& =\bigcup\{\ell(x)\ell(y) \mid x\in A,\,  y\in B,\,
  D_{\ell(x)y}\text{ and } r(\ell(x)) = \ell(y)\}\\
& =\bigcup\{\ell(x)\ell(y) \mid x\in A,\,y\in B,\, 
  D_{\ell(x)y}\text{ and } \ell(x) = \ell(y)\}\\
& =\bigcup\{\ell(x)\ell(y) \mid x\in A,\,  y\in B \text{ and }
  D_{\ell(y)y}\}\\
& =\bigcup\{\ell(x)\ell(y) \mid x\in A\text{ and } y\in B\}\\
&= \ell(A)\ell(B). 
  \end{align*}
\qedhere
  \end{enumerate}
\end{proof}

\begin{proof}[Proof of Lemma~\ref{P:lr-semigroup-lift}]
  Suppose $X$ is an $\ell r$-multisemigroup. For the first inclusion, it is
  routine to derive (lla) from Lemma~\ref{P:lr-magma-lift}. The claim
  then follows from Proposition~\ref{P:lr-pow} and the adjunction for
  domain. The second inclusion holds by opposition.

Finally, suppose that $X$ is local. Then, writing $r(x)=\ell(y)$ in
place of $D_{xy}$ owing to locality, 
\begin{align*}
  \ell (A \ell (B))
  &= \bigcup\{\ell(x\ell(y)) \mid x\in A,\,  y\in B \text{ and }
    r(x) = \ell(\ell(y))\}\\
  &= \bigcup\{\ell(xy) \mid x\in A,\, y\in B \text{ and }
    r(x) = \ell(y)\}\\
 &= \ell (AB)
     \end{align*}
     and the opposite result is obvious.  
 \end{proof}

\end{document}